\providecommand{\keywords}[1]
{
  \small	
  \textbf{\textit{Keywords---}} #1
}
\begin{document}

\title{Optimal Priors for the Discounting Parameter of the Normalized Power Prior}

\author[1]{Yueqi Shen\thanks{ys137@live.unc.edu}}
\author[2]{Luiz M. Carvalho}
\author[3]{Matthew A. Psioda}
\author[1]{Joseph G. Ibrahim}
\affil[1]{Department of Biostatistics, University of North Carolina at Chapel Hill}
\affil[2]{School of Applied Mathematics, Getulio Vargas Foundation}
\affil[3]{GSK}

\maketitle

\begin{abstract}
The power prior is a popular class of informative priors for incorporating information from historical data. It involves raising the likelihood for the historical data to a power, which acts as discounting parameter. When the discounting parameter is modelled as random, the normalized power prior is recommended. In this work, we prove that the marginal posterior for the discounting parameter for generalized linear models converges to a point mass at zero if there is any discrepancy between the historical and current data, and that it does not converge to a point mass at one when they are fully compatible. In addition, we explore the construction of optimal priors for the discounting parameter in a normalized power prior. In particular, we are interested in achieving the dual objectives of encouraging borrowing when the historical and current data are compatible and limiting borrowing when they are in conflict. We propose intuitive procedures for eliciting the shape parameters of a beta prior for the discounting parameter based on two minimization criteria, the Kullback-Leibler divergence and the mean squared error. Based on the proposed criteria, the optimal priors derived are often quite different from commonly used priors such as the uniform prior.
\end{abstract}

\keywords{Bayesian analysis; Clinical trial; Normalized power prior; Power prior.}
\section{Introduction}

The power prior \citep{chen_2000} is a popular class of informative priors that allow the incorporation of historical data through a tempering of the likelihood.
It is constructed by raising the historical data likelihood to a power $a_0$, where $0 \le a_0 \le 1$.
The discounting parameter $a_0$ can be fixed or modelled as random.
When it is modelled as random and estimated jointly with other parameters of interest, the normalized power prior (NPP) \citep{duan_2006} is recommended as it appropriately accounts for the normalizing function necessary for forming the correct joint prior distribution \citep{neuens_2009}. Many extensions of the power prior and the normalized power prior have been developed.
\cite{Banbeta_2019} develop the dependent and robust dependent normalized power priors which allow dependent discounting parameters for multiple historical datasets.
When the historical data model contains only a subset of covariates currently of interest and the historical information may not be equally informative for all parameters in the current analysis, \cite{Boonstra_Barbaro_2020} propose an extension of the power prior that adaptively combines a prior based upon the historical information with a variance-reducing prior that shrinks parameter values toward zero.

The power prior and the normalized power prior have been shown to have several desirable properties.
\cite{ibrahim_2003} show that the power prior defines an optimal class of priors in the sense that it minimizes a convex combination of Kullback-Leibler (KL) divergences between a distribution based on no incorporation of historical data and a distribution based on completely pooling the historical and current data.
\cite{YE202229} prove that the normalized power prior minimizes the expected weighted KL divergence similar to the one in \cite{ibrahim_2003} with respect to the marginal distribution of the discounting parameter.
They also prove that if the prior on $a_0$ is non-decreasing and if the difference between the sufficient statistics of the historical and current data is negligible from a practical standpoint, the marginal posterior mode of $a_0$ is close to one.
\cite{Carvalho_Ibrahim_2021} show that the normalized power prior is always well-defined when the initial prior is proper, and that, viewed as a function of the discounting parameter, the normalizing function is a smooth and strictly convex function.
\cite{Neelon_OMalley_2010} show through simulations that for large datasets, the normalized power prior may result in more downweighting of the historical data than desired.
\cite{Han_Ye_Wang_2022} point out that the normalizing function might be infinite with improper initial priors on the parameters of interest for $a_0$ values close to zero, in which case the admissible set of the discounting parameter excludes values close to zero. \cite{Pawel_2023} derive the marginal posterior distribution of $a_0$ when a beta prior is used for $a_0$ for \emph{i.i.d.} normal and binomial models, and show that, under these model assumptions, the marginal posterior of $a_0$ does not converge to a point mass at one when the sample size becomes arbitrarily large. In this paper, we provide a formal proof of the limiting behavior of the marginal posterior of $a_0$ for generalized linear models (GLMs) when the historical and current datasets are fully compatible and when there is discrepancy.


Many empirical Bayes-type approaches have been developed to adaptively determine the discounting parameter.
For example, Gravestock and Held \citep{Gravestock_Held_2017,Gravestock_Held_2019} propose to set $a_0$ to the value that maximizes the marginal likelihood.
\cite{Liu_2018} proposes choosing $a_0$ based on the p-value for testing the compatibility of the current and historical data.
\cite{Bennett_2021} propose using an equivalence probability weight and a weight based on tail area probabilities to assess the degree of agreement between the historical and current control data for cases with binary outcomes.
\cite{Pan_Yuan_Xia_2017} propose the calibrated power prior, where $a_0$ is defined as a function of a congruence measure between the historical and current data.
The function which links $a_0$ and the congruence measure is prespecified and calibrated through simulation. While these empirical Bayes approaches shed light on the choice of $a_0$, there has not been any fully Bayesian approach based on an optimal prior on $a_0$.

In this work, we first explore the asymptotic properties of the normalized power prior when the historical and current data are fully compatible (i.e., the sufficient statistics of the two datasets are equal) or incompatible (i.e., the sufficient statistics of the two datasets have some non-zero difference).
We prove that for GLMs utilizing a normalized power prior, the marginal posterior distribution of $a_0$ converges to a point mass at zero if there is any discrepancy between the historical and current data.
When the historical and current data are fully compatible, the asymptotic distribution of the marginal posterior of $a_0$ is derived for GLMs; we note that it does not concentrate around one. However, we prove an interesting finding that, for an i.i.d. normal model with finite samples, the marginal posterior of $a_0$ always has more mass around one when the datasets are fully compatible, compared to the case where there is any discrepancy. 
Secondly, we propose a novel fully Bayesian approach to elicit the shape parameters of the beta prior on $a_0$ based on two optimality criteria, Kullback-Leibler (KL) divergence and mean squared error (MSE). For the first criterion, we propose as optimal the beta prior whose shape parameters result in a minimized weighted average of KL divergences between the marginal posterior for $a_0$ and user-specified target distributions based on hypothetical scenarios where there is no discrepancy and where there is a maximum tolerable discrepancy. This class of priors on $a_0$ based on the KL criterion is optimal in the sense that it is the best possible beta prior at balancing the dual objectives of encouraging borrowing when the historical and current data are compatible and limiting borrowing when they are in conflict. For the second criterion, we propose as optimal the beta prior whose shape parameters result in a minimized weighted average of the MSEs based on the posterior mean of the parameter of interest when its hypothetical true value is equal to its estimate using the historical data, or when it differs from its estimate by the maximum tolerable amount. We study the properties of the proposed approaches \textit{via} simulations for the \textit{i.i.d.} normal and Bernoulli cases as well as for the normal linear model.
Two real-world case studies of clinical trials with binary outcomes and covariates demonstrate the performance of the optimal priors compared to conventionally used priors on $a_0$, such as a uniform prior.  

\section{Asymptotic Properties of the Normalized Power Prior}\label{sec2}

Let $D$ denote the current data and $D_0$ denote the historical data.
Let $\theta$ denote the model parameters and $L(\theta|D)$ denote a general likelihood function.
The power prior \citep{chen_2000} is formulated as 
\begin{align*}
\pi(\theta|D_0, a_0) \propto L(\theta|D_0)^{a_0}\pi_0(\theta),
\end{align*}
where $0 \le a_0 \le 1$ is the discounting parameter which discounts the historical data likelihood, and $\pi_0(\theta)$ is the initial prior for $\theta$.
The discounting parameter $a_0$ can be fixed or modelled as random.
Modelling $a_0$ as random allows researchers to account for uncertainty when discounting historical data and to adaptively learn the appropriate level of borrowing.
\cite{duan_2006} propose the \emph{normalized power prior}, given by 
 \begin{align}\label{npp}
      \pi(\theta, a_0|D_0) = \pi(\theta|D_0, a_0)\pi(a_0) = \frac{L(\theta|D_0)^{a_0}\pi_0(\theta)}{c(a_0)}\pi(a_0),
 \end{align}
where $c(a_0)=\int L(\theta|D_0)^{a_0}\pi_0(\theta) d\theta$ is the normalizing function.
The normalized power prior is thus composed of a conditional prior for $\theta$ given $a_0$ and a marginal prior for $a_0$.

Ideally, the posterior distribution of $a_0$ with the normalized power prior would asymptotically concentrate around zero when the historical and current data are in conflict, and around one when they are compatible.
In this section, we study the asymptotic properties of the normalized power prior for the exponential family of distributions as well as GLMs.
Specifically, we are interested in exploring the asymptotic behaviour of the posterior distribution of $a_0$ when the historical and current data are incompatible and when they are compatible, respectively.

\subsection{Exponential Family}\label{sec2:expfam}

First, we study the asymptotic properties of the normalized power prior for the  exponential family of distributions.
The density of a random variable $Y$ in the one-parameter exponential family has the form 
\begin{align}\label{expfam}
    p(y|\theta) = q(y)\exp\left(y\theta-b(\theta)\right),
\end{align}
where $\theta$ is the canonical parameter and $q(\cdot)$ and $b(\cdot)$ are known functions.
Suppose $D=(y_1,\dots,y_n)$ is a sample of $n$ \textit{i.i.d.} observations from an exponential family distribution in the form of \eqref{expfam}.
The likelihood is then given by 
\begin{align*}
    L(\theta|D) = Q(D)\exp\left(\sum_{i=1}^ny_i\theta-nb(\theta)\right),
\end{align*}
where $Q(D) = \prod_{i=1}^{n}q(y_i)$.
Suppose $D_0=(y_{01},\dots,y_{0n_0})$ is a sample of $n_0$ \textit{i.i.d.} observations from the same exponential family.
The likelihood for the historical data raised to the power $a_0$ is 
\begin{align*}
  [L(\theta|D_0)]^{a_0} = Q(D_0)^{a_0}\exp\left(a_0\left[\sum_{i=1}^{n_0}y_{0i}\theta-n_0b(\theta)\right]\right),
\end{align*}
 where $Q(D_0) = \prod_{i=1}^{n_0}q(y_{0i})$. Using the normalized power prior defined in \eqref{npp}, the joint posterior of $\theta$ and $a_0$ is given by 
\begin{align*}
        \pi(\theta,a_0|D,D_0) &\propto L(\theta|D)\pi(\theta,a_0|D_0) = L(\theta|D)\frac{L(\theta|D_0)^{a_0}\pi_0(\theta)}{c(a_0)}\pi(a_0).
\end{align*}
The marginal posterior of $a_0$ is given by
\begin{align}
    \pi(a_0|D,D_0) = \int\pi(\theta,a_0|D,D_0)d\theta \propto \int L(\theta|D)\frac{L(\theta|D_0)^{a_0}\pi_0(\theta)}{c(a_0)}\pi(a_0)d\theta.\label{apost1}
\end{align}

With these calculations in place, the question now arises as to what prior should be given to $a_0$.
One commonly used class of priors on $a_0$ is the beta distribution \citep{chen_2000}.
Let $\alpha_0$ and $\beta_0$ denote the shape parameters of the beta distribution.
We first prove that the marginal posterior of $a_0$ \eqref{apost1} with $\pi(a_0)=\text{beta}(\alpha_0,\beta_0)$ converges to a point mass at zero for a fixed, non-zero discrepancy between $\bar{y}$ and $\bar{y}_0$. 

\newtheorem{thm}{Theorem}[section]
\newtheorem{lem}[thm]{Lemma}
\newtheorem{cor}{Corollary}[section]
\begin{thm}\label{nocov_th}
Suppose $y_1,\dots,y_n$ and $y_{01}, \dots, y_{n_0}$ are independent observations from the same exponential family distribution \eqref{expfam}. Let $\bar{y}=\frac{1}{n}\sum_{i=1}^n y_{i}$ and $\bar{y}_0 =\frac{1}{n_0}\sum_{i=1}^{n_0} y_{0i}$.
Suppose also that the difference in the estimates of the canonical parameter $\theta$ is fixed and equal to $\delta$, i.e., $|\dot{b}^{-1}(\bar{y})-\dot{b}^{-1}(\bar{y}_0)| = \delta$, and $\frac{n_0}{n}=r$, where $\delta > 0$ is finite and $r > 0$ is a constant, and $\dot{b}(\cdot)=\partial_{\theta}b(\cdot)$.
Then, the marginal posterior of $a_0$ using the normalized power prior \eqref{apost1} with a $\operatorname{beta}(\alpha_0, \beta_0)$ prior on $a_0$, where $\alpha > 0$ and $\beta > 0$, converges to a point mass at $0$.
That is, $\lim\limits_{n\rightarrow \infty} \frac{\int_0^{\epsilon}\pi(a_0|D, D_0,\alpha_0,\beta_0)da_0}{\int_0^1 \pi(a_0|D, D_0,\alpha_0,\beta_0)da_0} = 1$ for any $\epsilon > 0$.
\end{thm}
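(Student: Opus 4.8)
The plan is to turn \eqref{apost1} into an explicit ratio of Laplace-type integrals and then show that, on the exponential scale in $n$, the value $a_0=0$ strictly dominates every $a_0>0$, so that all posterior mass escapes to the left endpoint. First I would simplify the integrand: the $a_0$-free factor $Q(D)$ drops out of the proportionality, and the factor $Q(D_0)^{a_0}$ appears in both $L(\theta\mid D_0)^{a_0}$ and $c(a_0)$ and cancels, leaving $\pi(a_0\mid D,D_0,\alpha_0,\beta_0)\propto \pi(a_0)\,N(a_0)/\tilde c(a_0)$ with
\[
N(a_0)=\int\exp\!\big((n\bar y+a_0 n_0\bar y_0)\theta-(n+a_0 n_0)b(\theta)\big)\pi_0(\theta)\,d\theta,\qquad
\tilde c(a_0)=\int\exp\!\big(a_0 n_0[\bar y_0\theta-b(\theta)]\big)\pi_0(\theta)\,d\theta,
\]
and $\pi(a_0)$ the $\operatorname{beta}(\alpha_0,\beta_0)$ density. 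Writing $m(a_0)=n+a_0 n_0=n(1+a_0 r)$ and the weighted mean $\bar z(a_0)=(n\bar y+a_0 n_0\bar y_0)/m(a_0)$, the exponent of $N(a_0)$ is $m(a_0)[\bar z(a_0)\theta-b(\theta)]$, maximized at $\theta=\dot b^{-1}(\bar z(a_0))$ with value $m(a_0)\,b^{*}(\bar z(a_0))$, where $b^{*}(\mu)=\sup_\theta\{\mu\theta-b(\theta)\}$ is the convex conjugate of $b$; similarly the exponent of $\tilde c(a_0)$ has maximal value $a_0 n_0\,b^{*}(\bar y_0)$.

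Next I would invoke a standard Laplace expansion, valid under the usual regularity of a regular exponential family (interior MLE values, $\ddot b$ bounded away from $0$ on the relevant compact set) and a proper $\pi_0$ continuous and positive there, to obtain, uniformly for $a_0$ in any compact subset of $(0,1]$,
\[
\tfrac1n\log\frac{N(a_0)}{\tilde c(a_0)}=G(a_0)+O\!\Big(\tfrac{\log n}{n}\Big),\qquad
G(a_0):=(1+a_0 r)\,b^{*}\!\big(\bar z(a_0)\big)-a_0 r\,b^{*}(\bar y_0),
\]
together with $\tfrac1n\log\pi(a_0)\to 0$ there; and at the endpoint $\tilde c(0)=\int\pi_0=1$ and $\tfrac1n\log N(0)=b^{*}(\bar y)+O(\log n/n)$, i.e. $G(0)=b^{*}(\bar y)$. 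The conceptual core is then the inequality $G(a_0)<G(0)$ for every $a_0\in(0,1]$: since $\bar z(a_0)=\lambda\bar y+(1-\lambda)\bar y_0$ with $\lambda=1/(1+a_0 r)\in(0,1)$, dividing the target inequality $(1+a_0 r)b^{*}(\bar z(a_0))<b^{*}(\bar y)+a_0 r\,b^{*}(\bar y_0)$ by $1+a_0 r$ makes it exactly \emph{strict} Jensen for $b^{*}$ at two distinct points. Strictness holds because $b^{*}$ is strictly convex (as $b$ is, $\ddot b>0$) and because $\bar y\neq\bar y_0$, which is equivalent to $\delta=|\dot b^{-1}(\bar y)-\dot b^{-1}(\bar y_0)|>0$. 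As $G$ is continuous on $[0,1]$, this yields $\eta:=G(0)-\max_{a_0\in[\epsilon,1]}G(a_0)>0$, and, by continuity at $0$, an $\epsilon'\in(0,\epsilon]$ with $G(a_0)>G(0)-\eta/2$ on $[0,\epsilon']$.

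Finally I would assemble the ratio. On $[\epsilon,1]$ the uniform expansion gives $\pi(a_0)N(a_0)/\tilde c(a_0)\le C_1\,e^{n(G(0)-\eta+o(1))}$; for the denominator I restrict to $[0,\epsilon']$, bound $\tilde c(a_0)\le e^{a_0 n_0 b^{*}(\bar y_0)}$ (from $\bar y_0\theta-b(\theta)\le b^{*}(\bar y_0)$) and use a one-sided Laplace lower bound for $N(a_0)$, obtaining $\int_0^{\epsilon}\pi(a_0)N(a_0)/\tilde c(a_0)\,da_0\ge C_2\,n^{-1/2}e^{n(G(0)-\eta/2+o(1))}$ with $C_2>0$ since $\int_0^{\epsilon'}\pi(a_0)\,da_0>0$ (irrespective of whether $\alpha_0\lessgtr1$). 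Hence
\[
\frac{\int_{\epsilon}^{1}\pi(a_0\mid D,D_0,\alpha_0,\beta_0)\,da_0}{\int_0^{\epsilon}\pi(a_0\mid D,D_0,\alpha_0,\beta_0)\,da_0}\;\le\;\frac{C_1}{C_2}\,n^{1/2}e^{-n\eta/2+o(n)}\;\longrightarrow\;0,
\]
and since $\int_0^1=\int_0^{\epsilon}+\int_{\epsilon}^{1}$ this is equivalent to $\int_0^{\epsilon}\pi(a_0\mid\cdot)\,da_0/\int_0^1\pi(a_0\mid\cdot)\,da_0\to1$, the claimed point mass at $0$. I expect the main obstacle to be the bookkeeping in the Laplace expansions — in particular making the $O(\log n/n)$ remainders uniform, and controlling $N(a_0)$ near $a_0=0$ where the rate degenerates — rather than the convexity step, which is short and drives the result.
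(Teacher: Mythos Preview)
Your approach is correct and, while sharing the Laplace/asymptotic backbone with the paper's proof, organizes the key step differently. The paper first invokes a Bayesian CLT to replace both $L(\theta\mid D)$ and $L(\theta\mid D_0)^{a_0}\pi_0(\theta)/c(a_0)$ by normal densities, multiplies the two Gaussians explicitly, and obtains a kernel proportional to $f(a_0)\exp\{-nh(a_0)\}$ with $h(a_0)=\frac{a_0 r\delta^2}{2\ddot b_0^{-1}(1+a_0 r\,\ddot b^{-1}/\ddot b_0^{-1})}$; monotonicity $h'(a_0)>0$ is then checked by direct differentiation, and the tail comparison on $[\epsilon,1]$ versus $[\epsilon/2,\epsilon]$ uses a change of variables $da_0\mapsto dh(a_0)$. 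You instead keep the exponential-family structure intact and express the leading exponent through the convex conjugate $b^{*}$, so that the crucial inequality $G(a_0)<G(0)$ becomes \emph{exactly} strict Jensen for $b^{*}$ at the two distinct mean-value points $\bar y\neq\bar y_0$. This is a genuinely more conceptual route: it explains why the exponent is maximized at $a_0=0$ (strict convexity of $b^{*}$) rather than verifying it by a calculation tailored to the Gaussian surrogate. The paper's path buys an explicit closed-form limiting kernel (which it reuses to read off the compatibility case $\delta=0$), while yours buys a shorter, structure-revealing key inequality at the price of the uniformity bookkeeping you already flag; both routes need essentially the same regularity to justify the Laplace-type remainders.
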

\begin{proof}
See Appendix \ref{nocov_proof}.
\end{proof}

In Theorem \ref{nocov_th}, we fix the ratio of $n_0$ to $n$, denoted as $r$, and let $n$ go to infinity. This is because if $n_0$ is fixed while $n$ is allowed to go to infinity, then the historical likelihood in the normalized power prior becomes irrelevant for inference on $\theta$, regardless of the prior on $a_0$. Therefore, $n_0$ must go to infinity as well. Theorem \ref{nocov_th} asserts that the normalized power prior is sensitive to any discrepancy between the sufficient statistics in large samples, as the mass of the marginal distribution of $a_0$ will concentrate near zero as the sample size increases for any fixed difference $\delta$. Figure \ref{app_fig_marg} in Appendix \ref{app_sec2} shows that the marginal posterior of $a_0$ converges to a point mass at zero rather quickly as the sample size grows when there is discrepancy. 

The natural question to then ask is whether Theorem \ref{nocov_th} has a sort of converse in that the posterior should concentrate around one under compatibility.  
We derive the asymptotic marginal posterior distribution of $a_0$ when $\bar{y}=\bar{y}_0$ and show that it does not converge to a point mass at one. 

\begin{cor}\label{nocov_coro}
Suppose $y_1,\dots,y_n$ and $y_{01}, \dots, y_{n_0}$ are independent observations from the same exponential family distribution \eqref{expfam}. Let $\bar{y}=\frac{1}{n}\sum_{i=1}^n y_{i}$ and $\bar{y}_0 =\frac{1}{n_0}\sum_{i=1}^{n_0} y_{0i}$.
Suppose $\bar{y}=\bar{y}_0$ and $\frac{n_0}{n}=r$ where $r > 0$ is a constant.
The marginal posterior of $a_0$ using the normalized power prior, as specified in \eqref{apost1}, converges to
\begin{equation*}
    \tilde{\pi}(a_0 | D, D_0) = \frac{\sqrt{\frac{ra_0}{ra_0+1}}\pi(a_0)}{\int_0^1 \sqrt{\frac{ra_0}{ra_0+1}}\pi(a_0) da_0}
\end{equation*}
as $n \rightarrow \infty$.
\end{cor}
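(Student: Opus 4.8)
The plan is to evaluate the numerator integral in \eqref{apost1} and the normalizing function $c(a_0)$ by Laplace's method, and to observe that the factors growing in $n$ cancel, leaving only the stated dependence on $a_0$.

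First I would reduce \eqref{apost1} to a clean form. Writing $L(\theta\mid D)=Q(D)\exp\!\big(n\bar y\theta-nb(\theta)\big)$ and $L(\theta\mid D_0)^{a_0}=Q(D_0)^{a_0}\exp\!\big(a_0[n_0\bar y_0\theta-n_0b(\theta)]\big)$, the factor $Q(D_0)^{a_0}$ cancels between the numerator and $c(a_0)$, while $Q(D)$ does not depend on $a_0$; hence, using $\bar y=\bar y_0$ and $n_0=rn$,
\[
\pi(a_0\mid D,D_0)\;\propto\;\pi(a_0)\,\frac{\displaystyle\int \exp\!\big(n(1+ra_0)\,g(\theta)\big)\,\pi_0(\theta)\,d\theta}{\displaystyle\int \exp\!\big(nra_0\,g(\theta)\big)\,\pi_0(\theta)\,d\theta},\qquad g(\theta):=\bar y\theta-b(\theta).
\]
Let $\hat\theta:=\dot b^{-1}(\bar y)$ be the common maximum likelihood estimate; then $g'(\hat\theta)=0$ and $g''(\hat\theta)=-\ddot b(\hat\theta)<0$ (strictly, since $\ddot b$ is a variance function), so $\hat\theta$ is the unique interior maximizer of $g$.

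Next, I would fix $\eta\in(0,1)$ and apply Laplace's method on the compact set $a_0\in[\eta,1]$, where the large parameters $n(1+ra_0)$ and $nra_0$ both tend to infinity uniformly and the curvature at $\hat\theta$ is bounded away from zero. Assuming, as usual, that $\pi_0$ is continuous and positive at $\hat\theta$, this yields, uniformly in $a_0\in[\eta,1]$,
\[
\int \exp\!\big(n(1+ra_0)\,g(\theta)\big)\,\pi_0(\theta)\,d\theta
=e^{\,n(1+ra_0)g(\hat\theta)}\,\pi_0(\hat\theta)\sqrt{\frac{2\pi}{n(1+ra_0)\,\ddot b(\hat\theta)}}\;\big(1+o(1)\big),
\]
and the analogous expansion for the denominator with $1+ra_0$ replaced by $ra_0$. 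In the ratio, the factors $\pi_0(\hat\theta)$, $\ddot b(\hat\theta)$ and $2\pi$ cancel, the exponentials collapse to $e^{\,n g(\hat\theta)}$ (which does not depend on $a_0$), and the square roots combine to $\sqrt{ra_0/(1+ra_0)}$. Therefore, on $[\eta,1]$ the unnormalized marginal posterior equals $\pi(a_0)\sqrt{ra_0/(1+ra_0)}\,e^{\,n g(\hat\theta)}(1+o(1))$; dividing by the total mass removes $e^{\,n g(\hat\theta)}$ and recovers $\tilde\pi(a_0\mid D,D_0)$.

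The main obstacle is the interval $a_0\in[0,\eta]$, where the denominator does not concentrate (indeed $c(0)=\int\pi_0(\theta)\,d\theta=1$) and the uniform Laplace estimate above breaks down. I would show this interval contributes vanishing mass in the limit. For the numerator, since $n(1+ra_0)\ge n$ it still concentrates, and a standard one-sided Laplace bound gives $\int \exp(n(1+ra_0)g(\theta))\pi_0(\theta)\,d\theta\le C\,e^{\,n(1+ra_0)g(\hat\theta)}/\sqrt{n}$; for the denominator, restricting the integral to a shrinking neighborhood of $\hat\theta$ of width $n^{-1/2}$ gives the matching lower bound $c(a_0)\ge c'\,e^{\,nra_0 g(\hat\theta)}/\sqrt{n}$ on $[0,\eta]$. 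The $\sqrt{n}$ factors cancel, so the unnormalized density is $\le C''\pi(a_0)\,e^{\,n g(\hat\theta)}$ there, whence its contribution is at most $C''e^{\,n g(\hat\theta)}\,\Pr(a_0\le\eta)$, while the contribution of $[\eta,1]$ is of exact order $e^{\,n g(\hat\theta)}$. Since $\Pr(a_0\le\eta)\to0$ as $\eta\downarrow0$, letting $n\to\infty$ and then $\eta\downarrow0$ shows that $\pi(a_0\mid D,D_0)$ converges (in total variation, hence weakly) to $\tilde\pi(a_0\mid D,D_0)$; this is consistent with $\tilde\pi(0\mid D,D_0)=0$, so no mass accumulates at the origin.
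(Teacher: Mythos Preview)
Your proposal is correct and uses essentially the same idea as the paper: a Laplace-type approximation of the integrals in \eqref{apost1}. The paper's proof is a one-line specialization of the intermediate formula \eqref{marg_a0} already obtained in the proof of Theorem~\ref{nocov_th} via the Bayes Central Limit Theorem---setting $\delta=0$ and $\ddot b^{-1}=\ddot b_0^{-1}$ there immediately gives the claimed limit---whereas you re-derive the Laplace expansion from scratch on the ratio and, unlike the paper, take extra care with the region $a_0\in[0,\eta]$ where the denominator fails to concentrate; this added rigor is not in the paper but is a welcome refinement rather than a different route.
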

\begin{proof}
See Appendix \ref{nocov_coro_proof}.
\end{proof}

Corollary \ref{nocov_coro} shows that the normalized power prior fails to fully utilize the historical data when the means of the historical data and the current data are equal for a generic, non-degenerate prior on $a_0$. It is worth noting that the adjustment to the prior $\sqrt{\frac{ra_0}{ra_0+1}}$ is maximized at $a_0 = 1$.
If $\pi(a_0)$ is chosen to be concentrated near one, then the marginal posterior of $a_0$ may be concentrated near one. 

We recognize that, in reality, $\bar{y}=\bar{y}_0$ will occur with probability zero. However, the point of this theorem is to show that even in the most extreme case where we have identical data, the marginal posterior of $a_0$ with the normalized power prior does not concentrate around one. Relaxing the equality will lead to the same conclusion. In fact, Theorem \ref{nocov_th} indicates that when the difference between $\bar{y}$ and $\bar{y}_0$ is fixed at some $\delta \neq 0$, the posterior of $a_0$ asymptotically converges to a point mass at zero.

Even though the marginal posterior of $a_0$ does not concentrate under one when $\bar{y}=\bar{y}_0$, we find that for the i.i.d. normal model, the cumulative density function when $\bar{y}=\bar{y}_0$ is dominated by every other cumulative density function, as demonstrated by the following theorem. This is a novel result which establishes a desirable borrowing property of the normalized power prior for i.i.d. normal data for finite samples. 
\begin{thm}\label{cdf}
Suppose $y_1,\dots,y_n \sim N(\theta, \sigma^2)$ and $y_{01}, \dots, y_{n_0} \sim N(\theta, \sigma_0^2)$, where $\sigma^2$, $\sigma_0^2$, $n$ and $n_0$ are fixed. Let $\bar{y}=\frac{1}{n}\sum_{i=1}^n y_{i}$ and $\bar{y}_0 =\frac{1}{n_0}\sum_{i=1}^{n_0} y_{0i}$. Let $|\bar{y}-\bar{y}_0|=d$. Let $F_{d}(a_0)$ denote cumulative density function of the marginal posterior of $a_0$ using the normalized power prior in \eqref{apost1}, and let $w_{d}(a_0)=F_{d}(a_0)-F_0(a_0)$. The function $w_{d}(a_0) > 0$ for all $d > 0$ and $0 < a_0 < 1$.
\end{thm}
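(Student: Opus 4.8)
\subsection*{Proof proposal}

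The plan is to obtain an \emph{explicit} expression for the marginal posterior $\pi(a_0\mid D,D_0)$ in the i.i.d.\ normal case, to isolate how it depends on $d=|\bar y-\bar y_0|$, and then to convert the resulting structure into the stochastic-dominance statement $w_d=F_d-F_0>0$ by a monotone-likelihood-ratio argument. First I would carry out the routine Gaussian algebra. Taking the (improper) flat initial prior $\pi_0(\theta)\propto 1$ — the choice that produces the form appearing in Corollary~\ref{nocov_coro}; a proper conjugate normal prior leads to the same structure — the normalizing function satisfies $c(a_0)\propto a_0^{-1/2}$ up to factors independent of $a_0$, and integrating the joint posterior over $\theta$ reduces to a single Gaussian integral. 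After cancelling all factors that do not involve $a_0$, one is left with
\begin{equation*}
\pi(a_0\mid D,D_0)\;=\;\frac{\pi(a_0)\,g(a_0)\,\exp\!\bigl(-\tfrac{d^2}{2}\,h(a_0)\bigr)}{\displaystyle\int_0^1 \pi(a_0)\,g(a_0)\,\exp\!\bigl(-\tfrac{d^2}{2}\,h(a_0)\bigr)\,da_0},
\end{equation*}
where $g(a_0)=\bigl(a_0\big/(n/\sigma^2+a_0 n_0/\sigma_0^2)\bigr)^{1/2}>0$ and $h(a_0)=\bigl(\sigma^2/n+\sigma_0^2/(a_0 n_0)\bigr)^{-1}$ is precisely the precision of $\bar y-\bar y_0$ when the historical sample is discounted by $a_0$. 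When $d=0$ the exponential factor is $1$ and this collapses to $\pi(a_0\mid D,D_0)\propto\pi(a_0)g(a_0)$, the finite-$n$ version of the limiting density in Corollary~\ref{nocov_coro}.

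The structural fact I would then exploit is that $h$ is \emph{strictly increasing} on $(0,1)$, since $1/h(a_0)=\sigma^2/n+\sigma_0^2/(a_0 n_0)$ is strictly decreasing in $a_0$. Writing $f_d$ for the density above and $f_0$ for its $d=0$ counterpart, the ratio $f_d(a_0)/f_0(a_0)\propto\exp\!\bigl(-\tfrac{d^2}{2}h(a_0)\bigr)$ is therefore strictly decreasing in $a_0$ for every $d>0$: the family $\{f_d\}$ has a monotone likelihood ratio in $-a_0$. Because $f_d$ and $f_0$ are both probability densities on $(0,1)$ and this ratio is non-constant, $f_d(a_0)-f_0(a_0)$ must change sign exactly once, from positive to negative, at some $a^\ast\in(0,1)$ (a ratio that stayed $\ge 1$, or $\le 1$, throughout would force $\int f_d\neq 1$). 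Integrating then gives the claim: for $0<a_0\le a^\ast$, $w_d(a_0)=F_d(a_0)-F_0(a_0)=\int_0^{a_0}(f_d-f_0)>0$ because the integrand is nonnegative on $(0,a_0)$ and strictly positive on a set of positive measure; for $a^\ast<a_0<1$, $w_d(a_0)=\int_{a_0}^1(f_0-f_d)>0$ by the same reasoning together with $\int_0^1(f_d-f_0)=0$. Hence $w_d(a_0)>0$ for all $d>0$ and all $0<a_0<1$.

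The only genuine obstacle is the bookkeeping in the first step: one must track carefully which Gaussian factors (in particular the $a_0$-dependent pieces of $c(a_0)$ and of the marginalised likelihood) survive into the $a_0$-kernel and which are absorbed into the normalising constant, and verify that the exponent genuinely takes the harmonic-mean form $h(a_0)^{-1}=\sigma^2/n+\sigma_0^2/(a_0n_0)$, which is what makes $h$ monotone; once the explicit density is in hand, the monotone-likelihood-ratio $\Rightarrow$ single-crossing $\Rightarrow$ first-order stochastic dominance chain is standard. A minor point requiring care is strictness of the inequality: it uses that $f_0>0$ almost everywhere on $(0,1)$, which holds because the prior $\pi(a_0)$ has full support on $(0,1)$ (e.g.\ any $\operatorname{beta}(\alpha_0,\beta_0)$ with $\alpha_0,\beta_0>0$) and $g>0$ there.
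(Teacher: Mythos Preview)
Your proposal is correct and rests on the same structural fact the paper uses: the marginal posterior has the form $\pi(a_0)g(a_0)\exp\{-\tfrac{d^2}{2}h(a_0)\}$ with $h$ strictly increasing on $(0,1)$. The finishing arguments differ, however. You compare $f_d$ directly to $f_0$ and invoke the standard chain \emph{monotone likelihood ratio $\Rightarrow$ single crossing of the densities $\Rightarrow$ first-order stochastic dominance}. The paper instead differentiates $F_d(a_0)$ with respect to $d$ and reduces $\partial_d F_d>0$ to the inequality
\[
\frac{\int_0^{a_0} g_d}{\int_0^1 g_d}\;>\;\frac{\int_0^{a_0} h\,g_d}{\int_0^1 h\,g_d},
\]
which it proves via a short lemma on truncated expectations: for any increasing positive $h$ and any distribution on $(0,1)$, $E[h(X)\mathbb{I}(X\le a)]/E[h(X)]<F(a)$. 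The two devices are equivalent at heart (reweighting a density by an increasing factor shifts mass to the right), but the paper's route yields the slightly stronger statement that $d\mapsto F_d(a_0)$ is strictly increasing for every fixed $a_0\in(0,1)$, not merely $F_d>F_0$; your MLR argument recovers this too if you compare $f_{d_1}$ to $f_{d_2}$ for arbitrary $d_1>d_2$, since the ratio $\exp\{-\tfrac{d_1^2-d_2^2}{2}h(a_0)\}$ is again strictly decreasing. Your presentation is the more textbook one and makes the stochastic-dominance interpretation explicit.
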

\begin{proof}
See Appendix \ref{cdf_proof}.
\end{proof}
Theorem \ref{cdf} shows that for i.i.d. normal data, for any fixed sample sizes, $1-F_0(a_0)$ will always be greater than $1-F_d(a_0)$ for any $d > 0$. This means the posterior of $a_0$ always has more mass around one when the datasets are fully compatible, compared to the case where there is any discrepancy, which is a desirable property. 

\subsection{Generalized Linear Models}

The ability to deal with non \textit{i.i.d.} data and incorporate covariates is crucial to the applicability of the normalized power prior; we thus now extend these results to generalized linear models (GLMs).
We first define the GLM with a canonical link and fixed dispersion parameter.
Let $y_i$ denote the response variable and $x_i$ denote a $p$-dimensional vector of covariates for subject $i=1, \dots, n$.
Let $\beta = (\beta_1, \dots, \beta_p)'$ be a $p$-dimensional vector of regression coefficients.
The GLM with a canonical link is given by 
\begin{align}\label{glm}
    p(y_i|x_i, \beta, \phi) = q(y_i, \phi)\exp\{\phi^{-1}[y_ix_i'\beta-b(x_i'\beta)]\}.
\end{align}
Without loss of generality, we assume $\phi=1$.
Let $D=\{(y_i,x_i), i=1,\dots,n\}\equiv(n,Y_{n\times 1},X_{n\times p})$ where $Y=(y_1,\dots,y_n)'$ and $X=(x_1,\dots,x_n)'$.
Assuming the $y_i$'s are (conditionally) independent, the likelihood is given by 
\begin{align*}
    L(\beta|D) = Q(Y)\exp\left(\sum_{i=1}^n y_i x_i'\beta-\sum_{i=1}^nb(x_i'\beta)\right),
\end{align*}
where $Q(Y) = \prod_{i=1}^{n}q(y_i, 1)$.
Let $\hat{\beta}$ denote the posterior mode of $\beta$ obtained by solving $\partial_{\beta}\log L(\beta|D)=0$.
Let $D_0=\{(y_{0i},x_{0i}), i=1,\dots,n_0\}\equiv(n_0,Y_{0n_0\times 1},X_{0n_0\times p})$ where $Y_0=(y_{01},\dots,y_{0n_0})'$ and $X_0=(x_{01},\dots,x_{0n_0})'$.
Assuming the $y_{0i}$'s are (conditionally) independent, the historical data likelihood raised to the power $a_0$ is given by
\begin{align*}
    [L(\beta|D_0)]^{a_0} = Q(Y_0)^{a_0}\exp\left(a_0\left[\sum_{i=1}^{n_0}y_{0i}x_{0i}'\beta-\sum_{i=1}^nb(x_{0i}'\beta)\right]\right),
\end{align*}
 where $Q(Y_0) = \prod_{i=1}^{n_0}q(y_{0i}, 1)$.
 Let $c^*(a_0)=\int L(\beta|y_0)^{a_0}\pi_0(\beta) d\beta$.
 Using the normalized power prior defined in \eqref{npp}, the joint posterior of $\beta$ and $a_0$ is given by 
\begin{align*}
        \pi(\beta,a_0|D,D_0) &\propto L(\beta|D)\pi(\beta,a_0|D_0) = L(\beta|D)\frac{L(\beta|D_0)^{a_0}\pi_0(\beta)}{c^*(a_0)}\pi(a_0).
\end{align*}
Let $\hat{\beta}_0$ denote the posterior mode of $\beta$ obtained by solving $\partial_{\beta}\log \left[\frac{L(\beta|D_0)^{a_0}\pi_0(\beta)}{c^*(a_0)}\right]=0.$ Note that $\hat{\beta}_0$ is independent of $a_0$ after we take the limit. The marginal posterior of $a_0$ is given by
\begin{align}
    \pi(a_0|D,D_0) &= \int\pi(\beta,a_0|D,D_0)d\beta \propto \int L(\beta|D)\frac{L(\beta|D_0)^{a_0}\pi_0(\beta)}{c^*(a_0)}\pi(a_0)d\beta.\label{glmnpp}
\end{align}
Now we extend Theorem \ref{nocov_th} to GLMs. 
\begin{thm}\label{glm_th}
Suppose $X$ is $n\times p$ of rank $p$ and $X_0$ is $n_0\times p$ of rank $p$. Suppose $\hat{\beta}-\hat{\beta}_0 = \delta$ where $\delta \neq 0$ is a finite vector, and $\frac{n_0}{n}=r$ where $r > 0$ is a constant scalar.
Assume $n\left[\frac{\partial^2\log [L(\beta|D)]}{\partial\beta_i\partial\beta_j}\right]^{-1}$ and $n_0a_0\left[\frac{\partial^2\log [L(\beta|D_0)^{a_0}\pi_0(\beta)]}{\partial\beta_i\partial\beta_j}\right]^{-1}$ do not depend on $n$ and $a_0$.
Then, the marginal posterior of $a_0$ using the normalized power prior \eqref{glmnpp} with a $\operatorname{beta}(\alpha_0, \beta_0)$ prior on $a_0$, where $\alpha > 0$ and $\beta > 0$, converges to a point mass at zero.
That is, $\lim\limits_{n\rightarrow \infty} \frac{\int_0^{\epsilon}\pi(a_0|D, D_0, \alpha_0, \beta_0)da_0}{\int_0^1 \pi(a_0|D, D_0, \alpha_0, \beta_0)da_0} = 1$ for any $\epsilon > 0$.
\end{thm}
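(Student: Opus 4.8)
The plan is to follow the strategy behind the proof of Theorem~\ref{nocov_th}, replacing the scalar canonical parameter by the regression vector $\beta$, the sample means by the (weighted) maximum likelihood estimates, and the one-dimensional integrals by multivariate Laplace approximations. Write $\ell(\beta)=\log L(\beta|D)$ and $\ell_0(\beta)=\log L(\beta|D_0)$; both are strictly concave, since $b$ is strictly convex and $X,X_0$ have full column rank, so the numerator integrand $\exp\{\ell(\beta)+a_0\ell_0(\beta)\}\pi_0(\beta)$ in \eqref{glmnpp} has a unique maximizer, which to leading order solves $\nabla\ell(\beta)+a_0\nabla\ell_0(\beta)=0$; call it $\tilde\beta(a_0)$, so that $\tilde\beta(0)=\hat\beta$. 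First I would apply the Laplace approximation to the two integrals in
\begin{equation*}
\pi(a_0|D,D_0)\;\propto\;\frac{\pi(a_0)}{c^*(a_0)}\int\exp\{\ell(\beta)+a_0\ell_0(\beta)\}\,\pi_0(\beta)\,d\beta ,
\end{equation*}
expanding the numerator about $\tilde\beta(a_0)$ and $c^*(a_0)=\int\exp\{a_0\ell_0(\beta)\}\pi_0(\beta)\,d\beta$ about $\hat\beta_0$, which is independent of $a_0$ in the limit (as $\nabla\log\pi_0$ is negligible next to $a_0\nabla\ell_0$ when $n_0\to\infty$, so that $\nabla\ell_0(\hat\beta_0)=0$ in the limit). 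The hypothesis that $n[\nabla^2\ell]^{-1}$ and $n_0a_0[\nabla^2(a_0\ell_0+\log\pi_0)]^{-1}$ do not depend on $n$ and $a_0$ is precisely what makes these expansions behave as in the exponential-family case: the relevant log-densities are effectively quadratic with Hessians of order the sample size, so the sub-exponential Laplace factors are bounded above on $[0,1]$ and bounded away from zero on each interval $[\epsilon,1]$.

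Second, the heart of the argument is the exponential rate these expansions carry. The approximation shows that the unnormalized marginal posterior equals $\pi(a_0)\exp\{T_n(a_0)\}$ times an $a_0$-independent prefactor and further factors bounded as just described, where
\begin{equation*}
T_n(a_0)\;:=\;\big[\ell(\tilde\beta(a_0))-\ell(\hat\beta)\big]+a_0\big[\ell_0(\tilde\beta(a_0))-\ell_0(\hat\beta_0)\big]\;\le\;0,\qquad T_n(0)=0,
\end{equation*}
the inequality holding because $\hat\beta$ maximizes $\ell$ and $\hat\beta_0$ maximizes $\ell_0$ in the limit. The crucial point is that $\tilde\beta(a_0)$ stays bounded away from $\hat\beta_0$ for every $a_0\in[0,1]$: if $\tilde\beta(a_0)=\hat\beta_0$, then $\nabla\ell_0(\hat\beta_0)=0$ would force $\nabla\ell(\hat\beta_0)=0$, hence $\hat\beta_0=\hat\beta$ by strict concavity of $\ell$, contradicting $\hat\beta-\hat\beta_0=\delta\neq0$; in the quadratic regime one has explicitly $\tilde\beta(a_0)-\hat\beta_0=(M_1+a_0rM_0)^{-1}M_1\delta\neq0$, where $M_1,M_0$ are the fixed invertible matrices to which $n^{-1}\nabla^2\ell$ and $n_0^{-1}\nabla^2\ell_0$ converge. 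Since $a_0\mapsto\tilde\beta(a_0)$ is continuous on $[0,1]$ and $\ell_0$ is strictly concave with Hessian of order $n_0=rn$, this gives $\ell_0(\tilde\beta(a_0))-\ell_0(\hat\beta_0)\le-\eta_n$ uniformly in $a_0$, with $\eta_n\ge c_\ast n$ for some $c_\ast>0$ and all large $n$; since the first bracket of $T_n$ is nonpositive as well, $T_n(a_0)\le-c_\ast a_0 n$ for every $a_0\in[0,1]$.

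Finally I would assemble the concentration statement as in the proof of Theorem~\ref{nocov_th}. For $a_0\in[\epsilon,1]$ one has $T_n(a_0)\le-c_\ast\epsilon n$, so $\int_\epsilon^1\pi(a_0|D,D_0)\,da_0$ is, up to the common prefactor, at most a constant times $e^{-c_\ast\epsilon n}$; near $a_0=0$, on the other hand, $T_n(a_0)$ is bounded (the displacement $\tilde\beta(a_0)-\hat\beta=O(a_0)$ makes the first bracket $O(na_0^2)$, while the second bracket is $a_0$ times an $O(n)$ quantity, both bounded once $a_0\le1/n$), and there $\pi(a_0|D,D_0)$ is, up to the current-data marginal likelihood, just the $\operatorname{beta}(\alpha_0,\beta_0)$ density, which behaves like $a_0^{\alpha_0-1}$; since the current-data marginal likelihood is of polynomial order in $1/n$, this makes $\int_0^\epsilon\pi(a_0|D,D_0)\,da_0$ bounded below, up to the same common prefactor, by a positive power of $1/n$. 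The common prefactor cancels, so $\int_\epsilon^1\pi(a_0|D,D_0)\,da_0\big/\int_0^\epsilon\pi(a_0|D,D_0)\,da_0\to0$, which is equivalent to $\int_0^\epsilon\pi(a_0|D,D_0,\alpha_0,\beta_0)\,da_0\big/\int_0^1\pi(a_0|D,D_0,\alpha_0,\beta_0)\,da_0\to1$ for every $\epsilon>0$, as claimed. I expect the main obstacle to be the second step: establishing that $\tilde\beta(a_0)$ stays bounded away from $\hat\beta_0$ uniformly in $a_0$ with a gap of the correct order $n$, and making the Laplace expansions legitimate and uniform in $a_0$ — which is exactly where the stated assumption that the inverse Hessians scale like the reciprocal sample size does the essential work.
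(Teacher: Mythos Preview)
Your argument is correct and reaches the same conclusion, but it organises the analysis differently from the paper's proof. The paper first applies the Bayes Central Limit Theorem to each factor separately, replacing $L(\beta|D)$ and $c^*(a_0)^{-1}L(\beta|D_0)^{a_0}\pi_0(\beta)$ by Gaussians centred at $\hat\beta$ and $\hat\beta_0$, and then integrates the product of the two Gaussians in closed form. This yields an explicit exponential rate $h(a_0)=\tfrac12 a_0 r\,\delta'(P_0+a_0rP)^{-1}\delta$ and determinant prefactor $f(a_0)=a_0^{p/2}\det(a_0I+r^{-1}P_0P^{-1})^{-1/2}$, where $\Sigma=n^{-1}P$ and $\Sigma_0=(nra_0)^{-1}P_0$. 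The concentration is then obtained by proving $h'(a_0)>0$ directly via positive--definiteness arguments (Lemma~\ref{eigen}) and bounding $A_1$ on $[\epsilon/2,\epsilon]$ through the change of variables $dh(a_0)$. Your route instead Laplace--expands numerator and denominator at $\tilde\beta(a_0)$ and $\hat\beta_0$, keeps the rate as the abstract $T_n(a_0)$, and replaces the monotonicity step by the uniform separation $\tilde\beta(a_0)-\hat\beta_0=(M_1+a_0rM_0)^{-1}M_1\delta\neq0$, which gives $T_n(a_0)\le -c_\ast a_0 n$ directly; your lower bound then comes from a neighbourhood of $0$ rather than $[\epsilon/2,\epsilon]$. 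The two rates coincide in the quadratic regime enforced by the hypothesis (indeed $T_n(a_0)=-nh(a_0)$), so both arguments are sound; the paper's version has the advantage of producing the closed form \eqref{glm_marg_a0}, which is reused in Corollary~\ref{glm_coro} and in the applications of Section~\ref{sec5}, while your separation argument sidesteps the matrix--derivative computation for $h'$. One small point to tighten: near $a_0=0$ the determinant ratio contributes an extra $a_0^{p/2}$, so the integrand behaves like $a_0^{\alpha_0-1+p/2}$ rather than $a_0^{\alpha_0-1}$; this does not affect your polynomial lower bound, but the phrase ``just the $\operatorname{beta}(\alpha_0,\beta_0)$ density'' understates it.
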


\begin{proof}
See Appendix \ref{glm_proof}. 
\end{proof}

Theorem \ref{glm_th} asserts that the normalized power prior is sensitive to discrepancies in the historical and current data in the presence of covariates.
The mass of the marginal distribution of $a_0$ will concentrate near zero as the sample size increases for any fixed discrepancy between the historical and current data, assuming $\frac{1}{n}X'X$ and $\frac{1}{n_0}X_0'X_0$ are fixed, i.e., $n\left[\frac{\partial^2\log [L(\beta|D)]}{\partial\beta_i\partial\beta_j}\right]^{-1}$ and $n_0a_0\left[\frac{\partial^2\log [L(\beta|D_0)^{a_0}\pi_0(\beta)]}{\partial\beta_i\partial\beta_j}\right]^{-1}$ do not depend on $n$ and $a_0$. 

Next, we derive the asymptotic marginal posterior distribution of $a_0$ when the sufficient statistics and covariate (design) matrices of the historical and current data equal. 


\begin{cor}\label{glm_coro}
Suppose $X$ is $n\times p$ of rank $p$ and $X_0$ is $n_0\times p$ of rank $p$.
Let $Y=(y_1,\dots,y_n)'$ and $Y_0=(y_{01},\dots,y_{0n_0})'$. Consider the GLM in \eqref{glm}.
If $n=n_0$, $X=X_0$, and $X'Y = X_0'Y_0$, then the marginal posterior of $a_0$  using the normalized power prior, as specified in \eqref{glmnpp}, converges to
\begin{equation*}
    \tilde{\pi}(a_0 | X, Y, X_0, Y_0) = \frac{\left(\frac{a_0}{a_0+1}\right)^{\frac{p}{2}}\pi(a_0)}{\int_0^1 \left(\frac{a_0}{a_0+1}\right)^{\frac{p}{2}}\pi(a_0) da_0},
\end{equation*}
as $n \rightarrow \infty$.
\end{cor}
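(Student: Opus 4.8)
The plan is to mirror the argument used for Corollary~\ref{nocov_coro}, but replacing the one-dimensional i.i.d.\ computation by a $p$-dimensional Laplace expansion that keeps track of the design matrices. Writing the unnormalized marginal posterior from \eqref{glmnpp} as the ratio
\[
\pi(a_0\mid D,D_0)\ \propto\ \frac{\int L(\beta\mid D)\,L(\beta\mid D_0)^{a_0}\pi_0(\beta)\,d\beta}{c^*(a_0)}\,\pi(a_0),
\]
I would apply a Laplace approximation to the numerator integral and to $c^*(a_0)=\int L(\beta\mid D_0)^{a_0}\pi_0(\beta)\,d\beta$ separately. The log-integrand of the numerator, $\log L(\beta\mid D)+a_0\log L(\beta\mid D_0)+\log\pi_0(\beta)$, is maximized at some $\beta^\ast_n(a_0)$; the log-integrand of $c^*(a_0)$ is maximized at $\hat\beta_0$, which by the remark preceding \eqref{glmnpp} is independent of $a_0$ in the limit (the $O(1)$ contribution of $\log\pi_0$ is negligible against the $O(n_0)$ term $a_0\log L(\beta\mid D_0)$). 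Since $\phi=1$, the log-likelihood Hessians are $-X'WX$ and $-X_0'W_0X_0$ with $W=\operatorname{diag}\!\big(b''(x_i'\beta)\big)$, so the numerator Hessian is $-\big(X'WX+a_0X_0'W_0X_0+O(1)\big)$ and the $c^*$ Hessian is $-\big(a_0X_0'W_0X_0+O(1)\big)$, the $O(1)$ terms coming from $\log\pi_0$.

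Next I would bring in the compatibility hypotheses $n=n_0$, $X=X_0$, $X'Y=X_0'Y_0$. These force the current-data MLE $\hat\beta$ (solving $X'Y=X'\mu(\beta)$) and the historical mode to coincide, so $\beta^\ast_n(a_0)\to\hat\beta_0=\hat\beta$, and evaluated at this common point $W=W_0$, whence $X'WX=X_0'W_0X_0=:nA$ with $A$ fixed by the assumption that $n\big[\partial^2\log L(\beta\mid D)/\partial\beta_i\partial\beta_j\big]^{-1}$ does not depend on $n$. Thus, asymptotically, the numerator Hessian is $-(1+a_0)nA$ and the $c^*$ Hessian is $-a_0 nA$, producing determinant factors $\big((1+a_0)n\big)^{-p/2}\det(A)^{-1/2}$ and $\big(a_0 n\big)^{-p/2}\det(A)^{-1/2}$ in the two Laplace expansions. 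Because the likelihood-at-the-mode factors $L(\hat\beta\mid D_0)^{a_0}$ and the prior factors $\pi_0(\hat\beta)$ are identical in numerator and denominator while $L(\hat\beta\mid D)$ does not depend on $a_0$, the ratio collapses (after cancelling the $n^{-p/2}$ and $\det(A)^{-1/2}$ terms, using $n=n_0$) to a constant times $(1+a_0)^{-p/2}a_0^{p/2}\pi(a_0)=\big(\tfrac{a_0}{a_0+1}\big)^{p/2}\pi(a_0)$. Normalizing over $a_0\in[0,1]$ gives the stated limit $\tilde\pi(a_0\mid X,Y,X_0,Y_0)$.

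The main technical obstacle is making the Laplace approximation uniform enough to pass to the limit under the normalization that turns $\pi(a_0\mid D,D_0)$ into a density: the determinant factor for $c^*(a_0)$ degenerates like $a_0^{-p/2}$ as $a_0\downarrow 0$, so the relative error of the expansion is not obviously controlled near the left endpoint, and one must justify interchanging $\lim_{n\to\infty}$ with that normalization. I would handle this by first establishing pointwise convergence of the unnormalized integrand for each fixed $a_0\in(0,1)$ — where standard GLM Laplace/Bernstein--von Mises bounds apply under the stated rank and Hessian-scaling assumptions — and then upgrading to convergence of the normalized posterior by a dominated-convergence (Scheff\'e-type) argument, using that the limit $\big(\tfrac{a_0}{a_0+1}\big)^{p/2}\pi(a_0)$ is bounded by $\pi(a_0)$ and vanishes at $a_0=0$, together with a uniform-in-$n$ bound on the ratio near the endpoints. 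Since the required scaffolding is essentially the same as for Theorem~\ref{glm_th}, the cleanest route in practice is to reuse the setup of Appendix~\ref{glm_proof} with the discrepancy $\delta$ set to $0$: there the exponential-rate factor that drove the mass to $0$ simply disappears, leaving exactly the determinant ratio above.
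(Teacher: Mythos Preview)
Your proposal is correct and arrives at the result by exactly the route the paper takes: the paper's entire proof of Corollary~\ref{glm_coro} is to observe that the hypotheses force $\Sigma=a_0\Sigma_0$ and then substitute $\delta=0$ into the asymptotic formula \eqref{glm_marg_a0} derived in Appendix~\ref{glm_proof}, which yields the determinant ratio $\det(\Sigma_0)^{-1/2}\det(\Sigma_n)^{1/2}=\big(\tfrac{a_0}{a_0+1}\big)^{p/2}$. Your explicit Laplace expansion simply rederives \eqref{glm_marg_a0} in this special case, and you yourself identify the shortcut at the end; the additional uniformity and dominated-convergence concerns you raise are technical points the paper does not address, so there is no gap relative to the paper's own standard of rigor.
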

\begin{proof}
See Appendix \ref{glm_coro_proof}.
\end{proof}

Corollary \ref{glm_coro} states that the marginal posterior of $a_0$ using the normalized power prior does not converge to a point mass at one when the sufficient statistics and the covariates of the historical and current data are equal. We also observe that as $p$ approaches infinity, the marginal posterior of $a_0$ specified above converges to a point mass at one. The form of the asymptotic marginal posterior of $a_0$ suggests that the normalized power prior may be sensitive to overfitting when the historical and current datasets are compatible. Figure \ref{app_fig_coro} in Appendix \ref{app_sec2} presents numerical results corroborating Corollaries \ref{nocov_coro} and \ref{glm_coro}. Therein we present histograms of posterior samples of $a_0$ that faithfully recapitulate the theoretical density functions. 

In Theorem~\ref{glm_comp} we also relax the previous result by deriving the asymptotic marginal posterior distribution of $a_0$ assuming only that the sufficient statistics of the historical and current data are equal.
This means that the covariate matrices need not be equal so long as the sufficient statistics $X'Y$ and $X_0'Y_0$ are, increasing the applicability of the result.


\begin{thm}\label{glm_comp}
Suppose $X$ is $n\times p$ of rank $p$ and $X_0$ is $n_0\times p$ of rank $p$. Let $Y=(y_1,\dots,y_n)'$ and $Y_0=(y_{01},\dots,y_{0n_0})'$.
Consider the GLM in \eqref{glm}, where $\frac{n_0}{n}=r$ and $r > 0$ is a constant.
If $X'Y = X_0'Y_0$ and $X\neq X_0$, then the marginal posterior of $a_0$ using the normalized power prior, as specified in \eqref{glmnpp}, is asymptotically proportional to $$\pi(a_0)\cdot\frac{|\hat{\Sigma}_g|^{1/2}}{|\tilde{\Sigma}_k|^{1/2}}\exp\left\{-n[g_n(\hat{\beta})-k_n(\tilde{\beta})]\right\},$$
where the definitions of $g_n(\beta)$, $k_n(\beta)$ and $\frac{|\hat{\Sigma}_g|^{1/2}}{|\tilde{\Sigma}_k|^{1/2}}$ can be found in Appendix \ref{glm_comp_proof}. 
\end{thm}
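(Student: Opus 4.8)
The plan is to evaluate, by Laplace's method, the two $\beta$-integrals hidden inside the marginal posterior \eqref{glmnpp} --- the numerator integral $\int L(\beta|D)\,L(\beta|D_0)^{a_0}\pi_0(\beta)\,d\beta$ and the normalizing constant $c^*(a_0)=\int L(\beta|D_0)^{a_0}\pi_0(\beta)\,d\beta$ --- and then to form their ratio. First I would pull the $\beta$-free factors $Q(Y)$ and $Q(Y_0)^{a_0}$ out of the numerator integral and write the remaining integrand as $e^{-n g_n(\beta)}$, where (with $n_0=rn$, so that every term has order $n$)
\[
 g_n(\beta)=-\frac1n\Bigl[\textstyle\sum_{i=1}^n\bigl(y_ix_i'\beta-b(x_i'\beta)\bigr)+a_0\sum_{i=1}^{n_0}\bigl(y_{0i}x_{0i}'\beta-b(x_{0i}'\beta)\bigr)+\log\pi_0(\beta)\Bigr].
\]
By convexity of $b$ and the rank hypotheses on $X$ and $X_0$, $g_n$ is strictly convex with a unique minimizer $\hat{\beta}$ (the combined posterior mode), and $\nabla^2 g_n(\hat{\beta})=\tfrac1n\bigl(X'WX+a_0 X_0'W_0X_0\bigr)+O(n^{-1})$ is positive definite, where $W=\operatorname{diag}(\ddot{b}(x_i'\hat{\beta}))$ and $\ddot{b}$ is the second derivative of $b$; set $\hat{\Sigma}_g=[\nabla^2 g_n(\hat{\beta})]^{-1}$. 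The Laplace expansion then gives $\int L(\beta|D)L(\beta|D_0)^{a_0}\pi_0(\beta)\,d\beta = Q(Y)Q(Y_0)^{a_0}(2\pi/n)^{p/2}|\hat{\Sigma}_g|^{1/2}e^{-n g_n(\hat{\beta})}(1+o(1))$.

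Repeating this for the normalizing constant, I would pull $Q(Y_0)^{a_0}$ out of $c^*(a_0)$, write the rest as $e^{-n k_n(\beta)}$ with $k_n(\beta)=-\tfrac1n\bigl[a_0\sum_{i=1}^{n_0}(y_{0i}x_{0i}'\beta-b(x_{0i}'\beta))+\log\pi_0(\beta)\bigr]$, take its unique minimizer $\tilde{\beta}$, and set $\tilde{\Sigma}_k=[\nabla^2 k_n(\tilde{\beta})]^{-1}$, so that $c^*(a_0)=Q(Y_0)^{a_0}(2\pi/n)^{p/2}|\tilde{\Sigma}_k|^{1/2}e^{-n k_n(\tilde{\beta})}(1+o(1))$. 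Dividing, the factors $(2\pi/n)^{p/2}$ and $Q(Y_0)^{a_0}$ cancel, $Q(Y)$ is absorbed into the normalizing constant of the marginal posterior (it does not involve $a_0$), and multiplying by $\pi(a_0)$ leaves exactly $\pi(a_0)\,|\hat{\Sigma}_g|^{1/2}|\tilde{\Sigma}_k|^{-1/2}\exp\{-n[g_n(\hat{\beta})-k_n(\tilde{\beta})]\}$, which is the asserted form. As a consistency check, when $X=X_0$ and $n=n_0$ the hypothesis $X'Y=X_0'Y_0$ forces $\hat{\beta}=\tilde{\beta}$ in the limit, the curvatures reduce to $\tfrac1n(1+a_0)X'WX$ and $\tfrac1n a_0X'WX$, the determinant ratio collapses to $(a_0/(a_0+1))^{p/2}$, and $-n[g_n(\hat{\beta})-k_n(\tilde{\beta})]$ becomes constant in $a_0$, recovering Corollary~\ref{glm_coro}.

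The main obstacle is making these two Laplace expansions rigorous with $o(1)$ errors that hold \emph{uniformly} in $a_0\in(0,1]$, since the claim concerns the whole density and hence its normalization. This is where the stabilization hypotheses carried over from Theorem~\ref{glm_th} --- that $n[\partial^2\log L(\beta|D)/\partial\beta_i\partial\beta_j]^{-1}$ and $n_0a_0[\partial^2\log(L(\beta|D_0)^{a_0}\pi_0(\beta))/\partial\beta_i\partial\beta_j]^{-1}$ do not depend on $n$ --- do their work: they make the normalized curvatures $\nabla^2 g_n$ and $\nabla^2 k_n$ converge, so that $\hat{\beta}$ and $\tilde{\beta}$ converge and the Taylor remainders are controlled on compact subsets of $(0,1]$. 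The endpoint $a_0\downarrow 0$ needs separate care, because there the historical contribution to $k_n$ degenerates and the curvature of $e^{-n k_n}$ is governed by $\log\pi_0$ rather than by the likelihood; keeping $\log\pi_0$ inside $k_n$ is what keeps the expansion valid down to $a_0=0$, consistent with the limiting density vanishing there. I would also flag where the hypothesis $X'Y=X_0'Y_0$ with $X\neq X_0$ enters: unlike in Corollary~\ref{glm_coro}, the difference $g_n(\hat{\beta})-k_n(\tilde{\beta})$ now genuinely depends on $a_0$, so this hypothesis (the ``compatible-data'' analogue of the equality of sufficient statistics) is what keeps that dependence mild enough to exclude the linear-in-$n$ penalty on borrowing that drives the posterior to a point mass at zero in Theorem~\ref{glm_th}.
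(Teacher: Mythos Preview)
Your proposal is correct and follows essentially the same approach as the paper: apply the multivariate Laplace approximation separately to the numerator integral $\int L(\beta|D)L(\beta|D_0)^{a_0}\pi_0(\beta)\,d\beta$ and to the normalizing constant $c^*(a_0)$, then take their ratio so that the $(2\pi/n)^{p/2}$ and $Q(Y_0)^{a_0}$ factors cancel. The only cosmetic differences are that the paper substitutes $X'Y=X_0'Y_0$ \emph{inside} the definition of $g_n$ (writing the linear term as $(a_0+1)\sum y_i x_i'\beta$) and drops $\log\pi_0(\beta)$ from both $g_n$ and $k_n$, whereas you keep the historical sum in its original form and retain $\log\pi_0$; your added discussion of uniformity in $a_0$ and the $a_0\downarrow 0$ endpoint is extra care that the paper's proof does not supply.
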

\begin{proof}
See Appendix \ref{glm_comp_proof}.
\end{proof}

Corollary \ref{glm_coro} and Theorem \ref{glm_comp} show that, for GLMs, the marginal posterior of $a_0$ using the normalized power prior does not converge to a point mass at one when the sufficient statistics of the historical and current data are equal.
From Theorems \ref{nocov_th}-\ref{glm_comp}, we conclude that, asymptotically, the normalized power prior is sensitive to discrepancies between the historical and current data, but cannot fully utilize the historical information when there are no discrepancies. However, we show that the posterior of $a_0$ always has the most mass around one when the datasets are fully compatible for finite i.i.d. normal observations. 

We highlight the differences between the theorems above and the results presented in \cite{Pawel_2023}. \cite{Pawel_2023} derive the marginal posterior distribution of $a_0$ when a beta prior is used for $a_0$ for \emph{i.i.d.} normal and binomial models and show through graphical approaches that the distribution shifts toward zero as the standard error of the current data converges to zero, when $\hat{\theta} \neq \hat{\theta}_0$, and the standard error of the historical data is fixed. We prove this phenomenon much more generally and analytically for the exponential family of distributions as well as GLMs.

\section{Optimal Beta Priors for $a_0$}
\subsection{Kullback-Leibler Divergence Criterion}\label{sec3}

In this section, we propose a prior based on minimizing the KL divergence of the marginal posterior of $a_0$ to two reference distributions. This resulting prior is optimal in the sense that it is the best possible beta prior at balancing the dual objectives of encouraging borrowing when the historical and current data are compatible and limiting borrowing when they are in conflict.

Let $\bar{y}_0$ denote the mean of the historical data and $\bar{y}$ denote the mean of the hypothetical current data.
Let $\pi_1(a_0) \equiv \operatorname{beta}(c, 1)$ ($c \gg 1$ is fixed) and $\pi_2(a_0) \equiv \operatorname{beta}(1, c)$.
The distributions $\pi_1(a_0)$ and $\pi_2(a_0)$ represent two ideal scenarios, where $\pi_1(a_0)$ is concentrated near one and $\pi_2(a_0)$ is concentrated near zero.
The KL-based approach computes the hyperparameters ($\alpha_0$ and $\beta_0$) for the beta prior on $a_0$ that will minimize a convex combination of two KL divergences; one is the KL divergence between $\pi_1(a_0)$ and the marginal posterior of $a_0$ when $\bar{y}=\bar{y}_0$, while the other is the KL divergence between $\pi_2(a_0)$ and the marginal posterior of $a_0$ when there is a user-specified difference between $\bar{y}$ and $\bar{y}_0$.

Let $d=\bar{y} - \bar{y}_0$, representing the difference between the means of the hypothetical current data and the historical data.
Our approach is centered on a user-specified \textbf{maximum tolerable difference} (MTD), $d_{\textrm{MTD}}$. Let $\pi^*(a_0)$ denote the marginal posterior of $a_0$ when $d=0$. Let $\pi_{\textrm{MTD}}(a_0)$ denote the marginal posterior of $a_0$ when $d=d_{\textrm{MTD}}$. For $d=0$, we want $\pi^*(a_0)$ to resemble $\pi_1(a_0)$ and for $d = d_{\textrm{MTD}}$, we want $\pi_{\textrm{MTD}}(a_0)$ to resemble $\pi_2(a_0)$.
The distributions $\pi_1(a_0)$ and $\pi_2(a_0)$ have been chosen to correspond to cases with substantial and little borrowing, respectively.
Therefore, our objective is to solve for $\alpha_0>0$ and $\beta_0>0$ to minimize
\begin{equation*}
    K(\alpha_0, \beta_0) = w KL(\pi^*(a_0), \pi_1(a_0)) + (1-w)KL(\pi_{\textrm{MTD}}(a_0), \pi_2(a_0)).
\end{equation*}
Here $0 < w < 1$ is a scalar and $KL(p,q)$ for distributions $P$ and $Q$ with P as reference is defined as
\begin{align*}
KL(p,q) &= \int \log\left(\frac{p(x)}{q(x)}\right)dP(x)= E_p[\log(p)] - E_p[\log(q)].
\end{align*}
The scalar $w$ weights the two competing objectives. For $w > 0.5$, the objective to encourage borrowing is given more weight, and for $w < 0.5$, the objective to limit borrowing is given more weight. Even though this approach requires specifying $w$, $c$ and $d_{\textrm{MTD}}$, these parameters are much more intuitive to choose than the hyperparameters of the prior for $a_0$.

Below we demonstrate the simulation results using this method for the \textit{i.i.d.} normal case, the \textit{i.i.d.} Bernoulli case and the normal linear model.
We compare the marginal posterior of $a_0$ using the KL-based optimal prior with that using the uniform prior.
For all simulations in this section, we choose $w=0.5$ so that the two competing objectives are given equal weight.
We choose $c=10$ so that $\pi_1(a_0)$ and $\pi_2(a_0)$ represent cases with substantial and little borrowing, respectively. We vary the choice of $d_{\textrm{MTD}}$ and examine each resulting optimal prior. In practice, we recommend that $d_{\textrm{MTD}}$ be no larger than the treatment effect of the historical study.

\subsubsection{Normal \textit{i.i.d.} Case}

We assume $y_1,\dots,y_n$ and $y_{01},\dots,y_{0{n_0}}$ are \textit{i.i.d.} observations from N($\mu$, $\sigma^2$) where $\sigma^2=1$.
We choose $\bar{y}_0=1.5$ and $n=n_0=30$.
The objective function $K(\cdot, \cdot)$ is computed using numerical integration and optimization is performed using the \verb|optim()| function in (base) R \citep{r}.

\begin{figure}[h]
\begin{center}
\includegraphics[width=13cm]{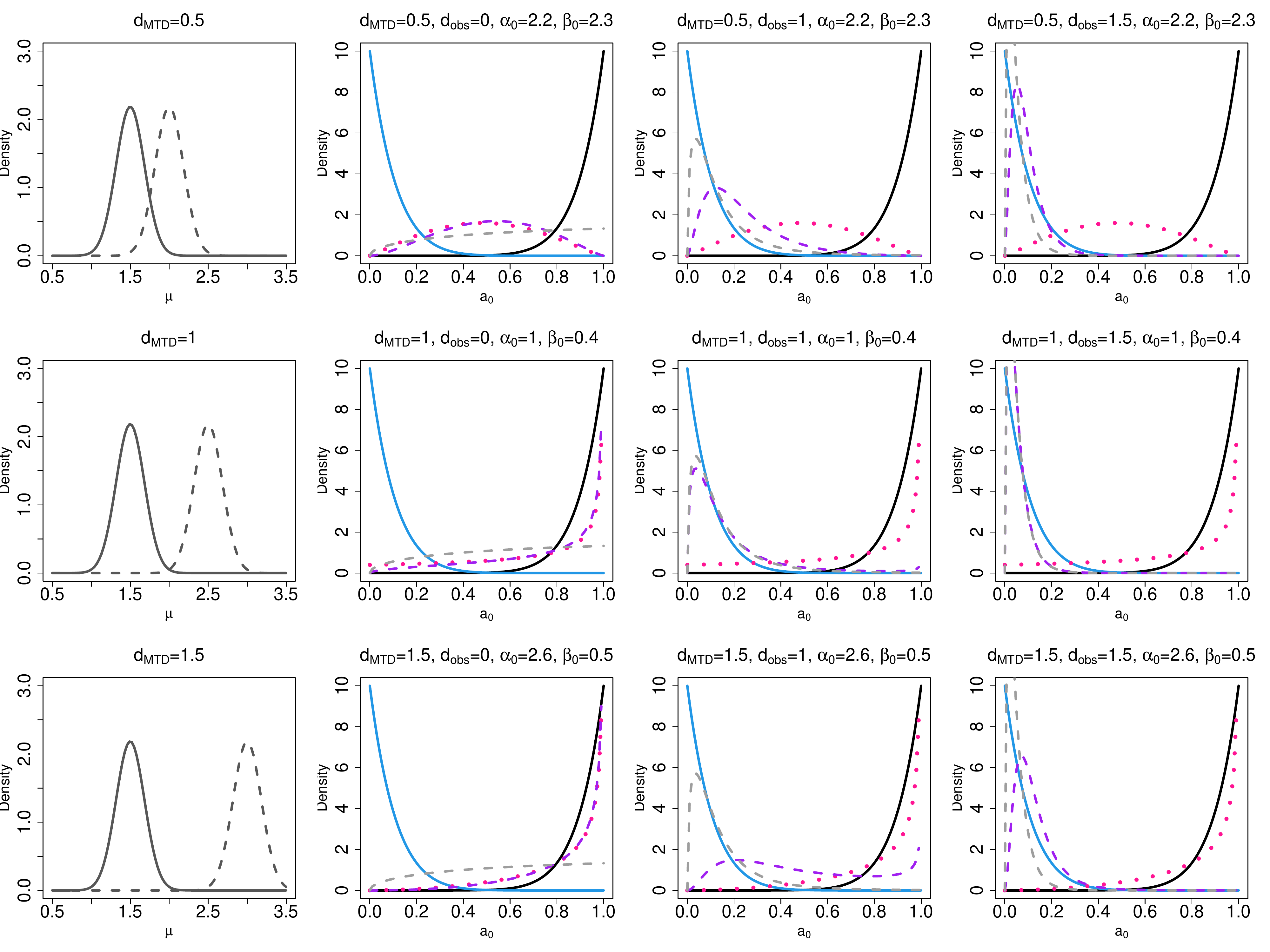}
\end{center}
\caption{Simulation results for the normal \textit{i.i.d.} case, where $\sigma^2=1$, $\bar{y}_0=1.5$ and $n=n_0=30$.
The first figure of each row plots the historical (black solid line) and current (black dashed line) data likelihoods if the hypothetical degree of conflict is equal to $d_{\textrm{MTD}}$.
For each row of the figure, the maximum tolerable difference $d_{\textrm{MTD}}$ is chosen to be $0.5$, $1$ and $1.5$, and the corresponding optimal prior (pink dotted line) is derived for each value of $d_{\textrm{MTD}}$.
For each optimal prior, we vary $d_{\textrm{obs}}=\bar{y}_{\textrm{obs}}-\bar{y}_0$ to evaluate the performance of the optimal prior for different observed data.
For columns 2-4, $d_{\textrm{obs}}$ is chosen to be $0$, $1$ and $1.5$, respectively.
The black and blue curves correspond to $\pi_1(a_0)\equiv\operatorname{beta}(10, 1)$ and $\pi_2(a_0)\equiv\operatorname{beta}(1, 10)$, respectively.
The purple dashed line represents the marginal posterior of $a_0$ with the optimal prior for a given $d_{\textrm{obs}}$.
The grey dashed line plots the marginal posterior of $a_0$ with the uniform prior.}
\label{sim_normal}
\end{figure}

In Figure~\ref{sim_normal}, the first figure of each row plots the historical and current data likelihoods if the hypothetical degree of conflict is equal to $d_{\textrm{MTD}}$.
For each row of the figure below, the maximum tolerable difference $d_{\textrm{MTD}}$ is chosen to be $0.5$, $1$ and $1.5$, and the corresponding optimal prior is derived for each value of $d_{\textrm{MTD}}$.
For each optimal prior, we vary the observed sample mean, denoted by $\bar{y}_{\textrm{obs}}$, to evaluate the posterior based on the optimal prior for different observed current data.
We use $d_{\textrm{obs}}=\bar{y}_{\textrm{obs}}-\bar{y}_0$ to represent the difference between the means of the observed current data and the historical data.
For columns 2-4,  $d_{\textrm{obs}}$ is chosen to be $0$, $1$ and $1.5$, respectively. Note that the values of $d_{\textrm{MTD}}$ and $d_{\textrm{obs}}$ are relative to the choices of $\sigma^2$, $n$ and $n_0$. For example, for larger $n$, $d_{\textrm{MTD}}$ would need to be decreased to produce a similar plot to Figure~\ref{sim_normal}.


From columns 2-4, we observe that when $d_{\textrm{MTD}}=0.5$, very little conflict is tolerated, and the resulting optimal prior does not strongly encourage either borrowing substantially or borrowing little.
As $d_{\textrm{MTD}}$ becomes larger, larger conflict is allowed and the optimal prior shifts more towards $\pi_1(a_0)$.
We also observe that when $d_{\textrm{MTD}}=1$ (the optimal hyperparameters are $\alpha_0=1$ and $\beta_0=0.4$) and $d_{\textrm{MTD}}=1.5$ (the optimal hyperparameters are $\alpha_0=2.6$ and $\beta_0=0.5$), the marginal posterior of $a_0$ with the optimal prior more closely mimics the target distribution when $d_{\textrm{obs}}=0$, i.e., the observed current and historical data are fully compatible.
As $d_{\textrm{obs}}$ increases, the marginal posterior shifts toward zero.
This behaviour is highly desirable as it achieves both goals of encouraging borrowing when the datasets are compatible and limiting borrowing when they are incompatible.

We can compare the marginal posterior of $a_0$ using the optimal prior with that using a uniform prior in Figure \ref{sim_normal}.
We observe that while the marginal posterior on $a_0$ with the uniform prior is very responsive to conflict, it does not concentrate around one even when the datasets are fully compatible.
We conclude that when $d_{\textrm{MTD}}$ is chosen to be reasonably large, the optimal prior on $a_0$ achieves a marginal posterior that is close to the target distribution when the datasets are fully compatible, while remaining responsive to conflict in the data.

\begin{table}
\begin{center}
\caption{Posterior mean (variance) of $\mu$ for the normal \textit{i.i.d.} case}\label{tab_normal}
\begin{tabular}{lcccc}
 \\
& $d_{\textrm{obs}}=0$ & $d_{\textrm{obs}}$=0.5 & $d_{\textrm{obs}}$=1 & $d_{\textrm{obs}}$=1.5\\[5pt]
\hline
$d_{\textrm{MTD}}$=0.5 & 1.5 (0.022) & 1.85 (0.026) & 2.32 (0.036) & 2.87 (0.036) \\
$d_{\textrm{MTD}}$=1 & 1.5 (0.019) & 1.82 (0.026) & 2.36 (0.042)& 2.93 (0.035)\\
$d_{\textrm{MTD}}$=1.5 & 1.5 (0.018) & 1.78 (0.020) & 2.19 (0.040) & 2.84 (0.039)\\
\end{tabular}
\end{center}
\end{table}

Table \ref{tab_normal} shows the posterior mean and variance of the mean parameter $\mu$ for various combinations of $d_{\textrm{MTD}}$ and $d_{\textrm{obs}}$ values corresponding to the  scenarios in Figure \ref{sim_normal}. The posterior mean and variance of $\mu$ with a normalized power prior are computed using the R package \textit{BayesPPD} \citep{bayesppd}. 
Again, $\bar{y}_0$ is fixed at $1.5$. Since $\bar{y}_{\textrm{obs}} \ge \bar{y}_0$, within each row, the posterior mean of $\mu$ is always smaller than $\bar{y}_{\textrm{obs}}$ due to the incorporation of $\bar{y}_0$.
We can also compare the results by column.
For fixed $d_{obs}$ (or equivalently $\bar{y}_{\textrm{obs}}$), if more historical information is borrowed, we expect the posterior mean of $\mu$ to be smaller.
When $d_{\textrm{obs}}=0$, the posterior mean stays constant while the variance decreases as $d_{\textrm{MTD}}$ increases.
If the maximum tolerable difference is large, more historical information is borrowed, leading to reduced variance. When $d_{\textrm{obs}}=0.5$, the posterior of $\mu$ decreases as more borrowing occurs when $d_{\textrm{MTD}}$ increases.
When $d_{\textrm{obs}}=1$ or $1.5$, the posterior of $\mu$ first increases and then decreases, as $d_{\textrm{MTD}}$ increases.
This is a result of two competing phenomena interacting; as $d_{\textrm{MTD}}$ increases, the optimal prior gravitates towards encouraging borrowing; however, since $d_{\textrm{obs}}$ is very large, the marginal posterior of $a_0$ moves toward zero even though the prior moves toward one.
In conclusion, we argue that the posterior estimates of $\mu$ with the optimal prior respond in a desirable fashion to changes in the data. In Appendix \ref{app_num}, we have included numerical experiments that demonstrate the reliability of the optimization scheme.



\subsubsection{Bernoulli Model}

For the Bernoulli model, we assume $y_1,\dots,y_n$ and $y_{01},\dots,y_{0{n_0}}$ are \textit{i.i.d.} observations from a Bernoulli distribution with mean $\mu$.
Again, we choose $n=n_0=30$ and optimization is performed analogously to the normal case. 

The resulting optimal priors and posteriors are shown in Figure \ref{sim_bern}. For each row of Figure \ref{sim_bern} below, the maximum tolerable difference $d_{\textrm{MTD}}$ is chosen to be $0.2$, $0.4$ and $0.6$, and the corresponding optimal prior is derived for each value of $d_{\textrm{MTD}}$.
For each optimal prior, we vary the observed $\bar{y}_{\textrm{obs}}$ to evaluate the performance of the optimal prior for different observed data.
For columns 2-4, $d_{\textrm{obs}}=\bar{y}_{\textrm{obs}}-\bar{y}_0$ is chosen to be $0$, $0.4$ and $0.6$, respectively.
Values of $\bar{y}_0$ and $\bar{y}_{\textrm{obs}}$ are chosen so that the variance stays constant for different values of $d_{\textrm{MTD}}$ or $d_{\textrm{obs}}$.

\begin{figure}
\begin{center}
\includegraphics[width=13cm]{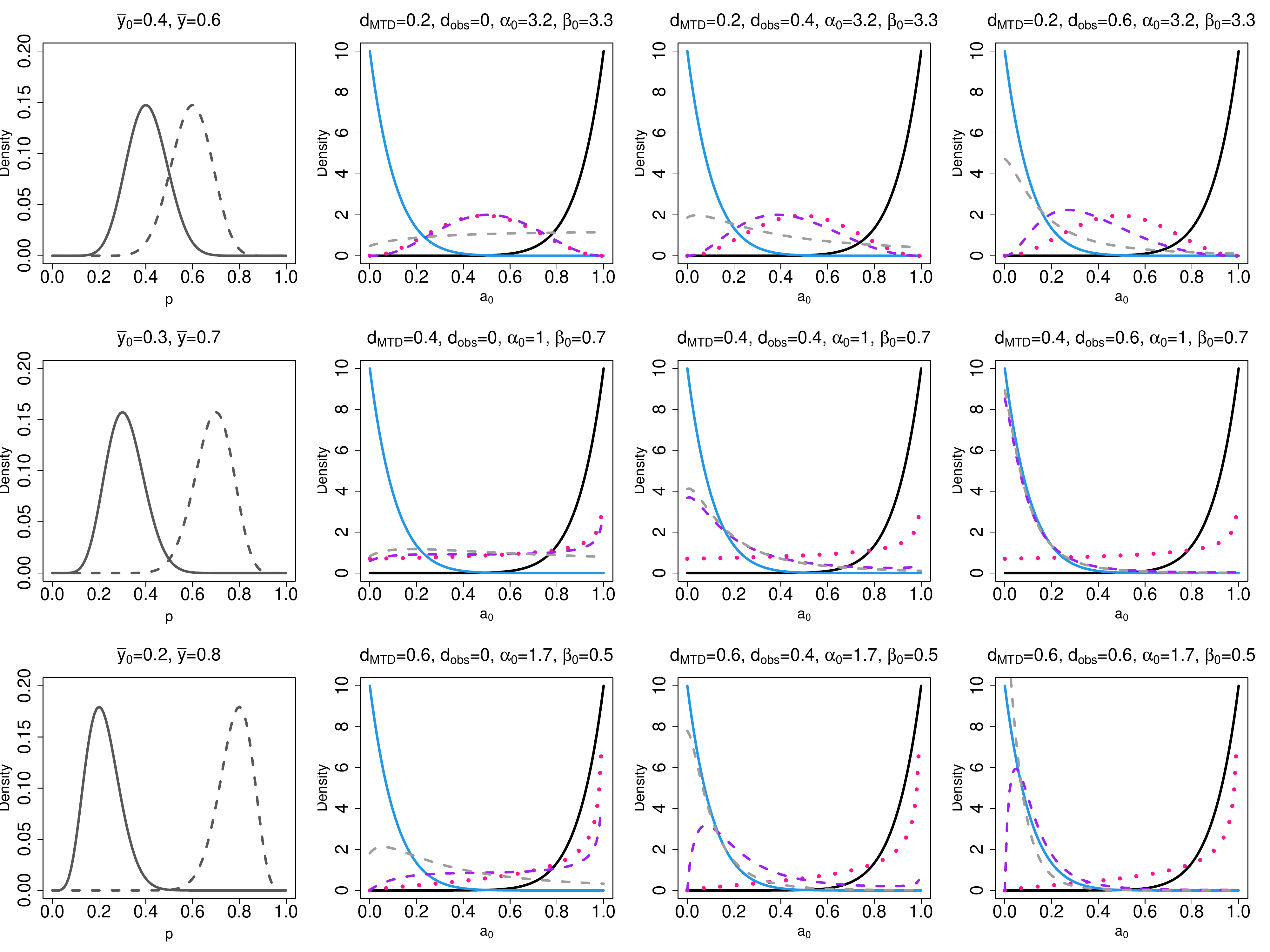}
\end{center}
\caption{Simulation results for the Bernoulli \textit{i.i.d.} case, where $\sigma^2=1$ and $n=n_0=30$. The first figure of each row plots the historical (black solid line) and current (black dashed line) data likelihoods if the hypothetical degree of conflict is equal to $d_{\textrm{MTD}}$.
For each row of the figure, the maximum tolerable difference $d_{\textrm{MTD}}$ is chosen to be $0.5$, $1$ and $1.5$, and the corresponding optimal prior (pink dotted line) is derived for each value of $d_{\textrm{MTD}}$.
For each optimal prior, we vary $d_{\textrm{obs}}=\bar{y}_{\textrm{obs}}-\bar{y}_0$ to evaluate the performance of the optimal prior for different observed data.
For columns 2-4, $d_{\textrm{obs}}$ is chosen to be $0$, $1$ and $1.5$, respectively.
The black and blue curves correspond to $\pi_1(a_0)\equiv\operatorname{beta}(10, 1)$ and $\pi_2(a_0)\equiv\operatorname{beta}(1, 10)$, respectively.
The purple dashed line represents the marginal posterior of $a_0$ with the optimal prior for a given $d_{\textrm{obs}}$.
The grey dashed line plots the marginal posterior of $a_0$ with the uniform prior.}
\label{sim_bern}
\end{figure}

The optimal marginal prior and posterior of $a_0$ for Bernoulli data are similar to those of the normal model. We observe that when the datasets are perfectly compatible, i.e., $d_{\textrm{obs}}=0$, the marginal posterior of $a_0$ with the optimal prior concentrates around one when $d_{\textrm{MTD}}$ is relatively large.
When $d_{\textrm{obs}}$ increases to $0.4$ or $0.6$, the marginal posterior of $a_0$ concentrates around zero when $d_{\textrm{MTD}}$ is relatively large.
The optimal prior becomes increasingly concentrated near one as $d_{\textrm{MTD}}$ increases.
Compared to the marginal posterior with the uniform prior, the optimal prior on $a_0$ achieves a marginal posterior that closely mimics the target distribution when the datasets are fully compatible, while remaining responsive to conflict in the data.

\subsubsection{Normal Linear Model}
Suppose $y_{01},\dots,y_{0{n_0}}$ are independent observations from the historical data where $y_{0i} \sim N(\beta_0 + \beta_1x_{0i}, \sigma^2)$ for the $i$-th observation and $x_{0i}$ is a single covariate.
Also suppose $y_1,\dots,y_n$ are independent observations from the current data where $y_j \sim N(\beta_0 + \beta_1x_j + d_{\textrm{MTD}}, \sigma^2)$ for the $j$-th observation and $x_j$ is a single covariate.
We vary $d_{\textrm{MTD}}$ to represent different degrees of departure of the intercept of the simulated current data to the intercept of the historical data.
We choose $\beta_0=1.5$, $\beta_1=-1$, $\sigma^2=1$ and $n=n_0=30$. We choose $d_{\textrm{MTD}}=0.1, 0.5$, and $1$ and $d_{\textrm{obs}}=0, 0.5$, and $1$.
The objective function $K$ is computed using Monte Carlo integration and optimization is performed using the \verb|optim()| function in R.

\begin{figure}
\begin{center}
\includegraphics[width=13cm]{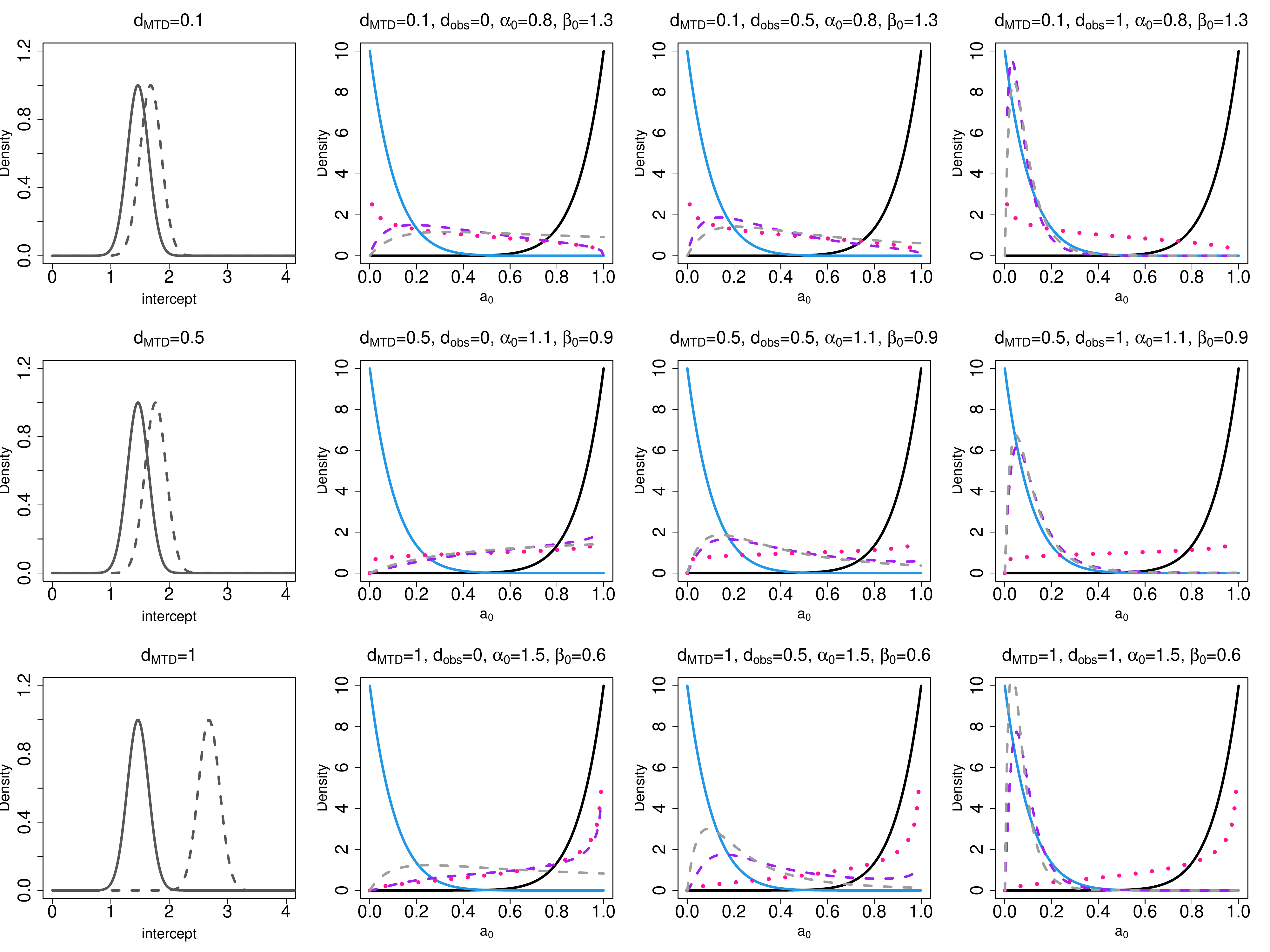}
\end{center}
\caption{Simulation results for the normal linear model with one covariate where $\beta_0=1.5$, $\beta_1=-1$, $\sigma^2=1$ and $n=n_0=30$.
The first figure of each row shows the historical (black solid line) and current (black dashed line) data likelihoods as a function of the intercept if the hypothetical degree of conflict is equal to $d_{\textrm{MTD}}$.
For each row, $d_{\textrm{MTD}}$ is chosen to be $0.1$, $0.5$ and $1$, and the corresponding optimal prior (pink dotted line) is derived for each value of $d_{\textrm{MTD}}$.
For each optimal prior, we vary $d_{\textrm{obs}}$ to represent different degrees of departure of the intercept of current data to that of historical data.
For columns 2-4, $d_{\textrm{obs}}$ is chosen to be $0$, $0.5$ and $1$, respectively.
The black and blue curves correspond to $\pi_1(a_0)\equiv\operatorname{beta}(10, 1)$ and $\pi_2(a_0)\equiv\operatorname{beta}(1, 10)$, respectively. 
The purple dashed line represents the marginal posterior of $a_0$ with the optimal prior for a given $d_{\textrm{obs}}$.
The grey dashed line plots the marginal posterior of $a_0$ with the uniform prior.}
\label{sim_nlm}
\end{figure}

Figure \ref{sim_nlm} shows the optimal prior and optimal posterior for $a_0$ as well as the posterior of $a_0$ with the uniform prior for various $d_{\textrm{MTD}}$ and $d_{\textrm{obs}}$ values.
We observe that when the datasets are perfectly compatible, i.e., $d_{\textrm{obs}}=0$, the marginal posterior of $a_0$ with the optimal prior concentrates around one when $d_{\textrm{MTD}}$ is relatively large.
When $d_{\textrm{obs}}$ increases to $1$, the marginal posterior of $a_0$ concentrates around zero.
The optimal prior becomes increasingly concentrated near one as $d_{\textrm{MTD}}$ increases.
Compared to the marginal posterior with the uniform prior, the optimal prior for $a_0$ achieves a marginal posterior that closely mimics the target distribution when the datasets are fully compatible, while remaining responsive to conflict in the data.


\subsection{Mean Squared Error Criterion}\label{sec4}

In this section, we derive the optimal prior for $a_0$ based on minimizing the MSE. This prior is optimal in the sense that it minimizes the weighted average of the MSEs of the posterior mean of the parameter of interest when its hypothetical true value is equal to its estimate using the historical data, or when it differs from its estimate by the maximum tolerable amount.
Suppose $y_1,\dots,y_n$ and $y_{01},\dots,y_{0{n_0}}$ are observations from a distribution with mean parameter $\mu$. Let $\mu^*$ denote the true value of $\mu$. Let $\bar{\mu}$ denote the posterior mean of $\mu$ using the normalized power prior.
Then, the MSE of $\bar{\mu}$ is 
\begin{align*}
\text{MSE}(\mu^*)=&\int\left[\bar{\mu}(y) - \mu^*\right]^2p(y|\mu^*)dy.
\end{align*}

In the regression setting, $\mu$ is replaced by the regression coefficients $\beta$.

Let $\bar{y}_0$ denote the mean of the historical data.
We aim to find the hyperparameters, $\alpha_0$ and $\beta_0$, for the beta prior for $a_0$ that will minimize
\begin{equation*}
    w\text{MSE}(\mu^*=\bar{y}_0)+(1-w)\text{MSE}(\mu^*=\bar{y}_0+d_{\textrm{MTD}}),
\end{equation*}
where $d_{\textrm{MTD}}$ is the maximum tolerable difference.
Again, we use $d_{\textrm{obs}}=\bar{y}_{\textrm{obs}}-\bar{y}_0$ to represent the difference between the means of the observed current data and the historical data.

\subsubsection{Normal \textit{i.i.d.} Case}\label{sec4.1} 
We demonstrate the use of this criterion for the normal \textit{i.i.d.} case.
Suppose $y_1,\dots,y_n$ and  $y_{01},\dots,y_{0{n_0}}$ are \textit{i.i.d.} observations from N($\mu$, $\sigma^2$) where $\sigma^2=1$ and $n=n_0=30$.
In this example, we fix $\mu^*$ and $y_{MTD}$ at $1.5$, and define $d_{\textrm{MTD}}=\bar{y}_0 -\bar{y}$ and $d_{\textrm{obs}}=\bar{y}_0 -\bar{y}$.
The posterior mean of $\mu$ is computed using Monte Carlo integration and optimization is performed using a grid search.
The optimal prior, optimal posterior, and the posterior using the uniform prior for $a_0$ are plotted in Figure \ref{sim_mse}.
When $d_{\textrm{MTD}}=0.5$, the optimal prior is unimodal with mode around $0.3$.
When $d_{\textrm{MTD}}=1$, the optimal prior is concentrated near zero. When $d_{\textrm{MTD}}=1.5$, the optimal prior is U-shaped and favouring either strong or weak borrowing. When $d_{\textrm{MTD}}$ is small, the algorithm cannot distinguish between the two competing scenarios in the objective function and the resulting optimal prior concentrates around $0.5$.
When $d_{\textrm{MTD}}$ is large, the optimal prior will favour the two scenarios equally.
For columns 2-4, $d_{\textrm{obs}}$ is chosen to be $0$, $1$ and $1.5$. The marginal posterior using the optimal prior concentrates more around zero as $d_{\textrm{obs}}$ increases for a given $d_{\textrm{MTD}}$.
Comparing Figures \ref{sim_mse} and \ref{sim_normal}, we observe that the optimal prior derived using the MSE criterion is more conservative in the sense that it tends to discourage borrowing than that derived using the KL criterion. Note that one might obtain very different optimal priors using the KL criterion versus the MSE criterion, since these metrics have different objectives.

\begin{figure}
\begin{center}
\includegraphics[width=13cm]{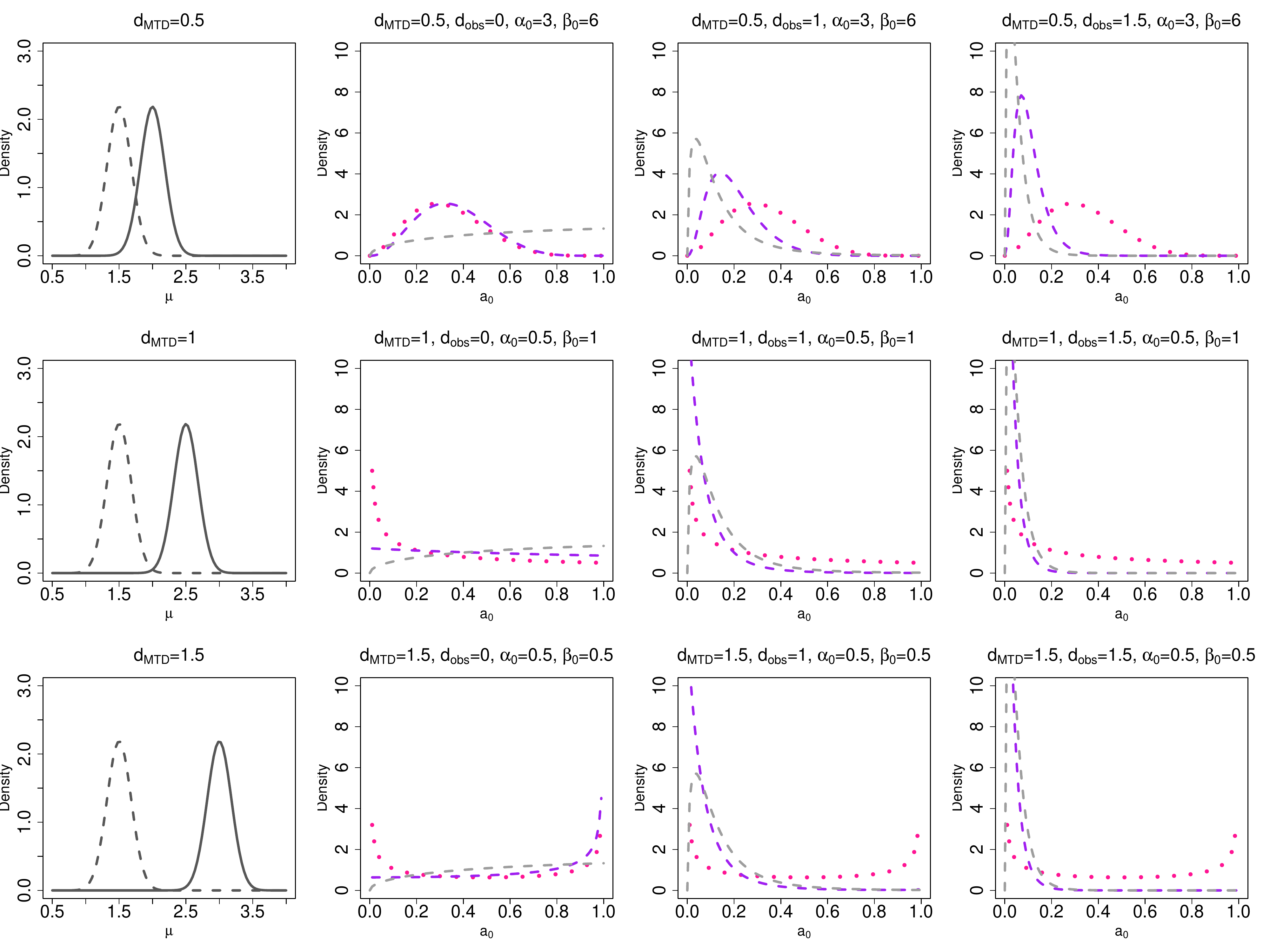}
\end{center}
\caption{Simulation results for the normal \textit{i.i.d.} case when minimizing a convex combination of MSEs when $n=n_0=30$.
The first figure of each row shows the historical (black solid line) and current (black dashed line) data likelihoods if the hypothetical degree of conflict is equal to $d_{\textrm{MTD}}$.
The mean of the hypothetical current data is fixed at $1.5$.
For each row of the figure, the maximum tolerable difference $d_{\textrm{MTD}}$ is chosen to be $0.5$, $1$ and $1.5$, and the corresponding optimal prior (pink dotted line) is derived for each value of $d_{\textrm{MTD}}$.
For each optimal prior, we vary $d_{\textrm{obs}}=\bar{y}_0-\bar{y}_{\textrm{obs}}$. For columns 2-4, $d_{\textrm{obs}}$ is chosen to be $0$, $1$ and $1.5$, respectively.
The purple dashed line represents the marginal posterior of $a_0$ with the optimal prior for a given $d_{\textrm{obs}}$.
The grey dashed line plots the marginal posterior of $a_0$ with the uniform prior.}
\label{sim_mse}
\end{figure}

Table \ref{mse1} shows the MSE for the optimal prior, $\operatorname{beta}(1,1)$ (uniform) and $\operatorname{beta}(2,2)$ as well as the percent reduction of MSE of the optimal prior compared to the uniform prior.
We can see that the percent reduction of MSE increases as $d_{\textrm{MTD}}$ increases.
The table in Appendix \ref{app_bias} displays the decomposition of MSE into bias squared and variance for the three choices of priors.
When $d_{\textrm{MTD}}=0.5$ or $1$, the prior discourages borrowing which results in smaller bias and larger variance.
When $d_{\textrm{MTD}}=1.5$, the model can distinguish easily between the two contrasting objectives, leading to smaller bias and smaller variance. In Appendix \ref{app_rmap}, we compare the MSE of the posterior mean based on  the normalized power prior using the optimal prior for $a_0$ and the robust mixture prior \citep{Schmidli_2014} for the case where the data are i.i.d. normal. A key contribution of this paper is to identify the optimal single beta prior for $a_0$ in an NPP. The comparison in Appendix \ref{app_rmap} is designed to help the reader put this notion of optimality in context by comparing the performance of the posterior mean point estimator from several families of priors. Additional simulation results for varying choices of $n$ and $n_0$ are provided in the Appendix \ref{app_mse}.

\begin{table}
\begin{center}
\caption{MSE for different prior choices and percent reduction of MSE of the optimal prior compared to the uniform prior}\label{tab_mse_1}
\begin{tabular}{lcccc}
 \\
& Optimal Prior & beta$(1,1)$ & beta$(2,2)$ & Percent Reduction of MSE,\\
&&&& Optimal Prior vs. $\operatorname{beta}(1,1)$ \\[5pt]
\hline
$d_{\textrm{MTD}}=0.5$ & 0.054 & 0.057 & 0.057 & 5\% \\
$d_{\textrm{MTD}}=1$ & 0.063 & 0.069 & 0.079  & 9\% \\
$d_{\textrm{MTD}}=1.5$ & 0.052 & 0.059 & 0.067  & 12\% \\
\end{tabular}
\label{mse1}
\end{center}
\end{table}

\section{Case Studies}\label{sec5}

We now illustrate the proposed methodologies by analysing two clinical trial case studies.
First, we study an important application in a pediatric trial where historical data on adults is available. This constitutes a situation of increased importance due to the difficulty in enrolling pediatric patients in clinical trials \citep{fda_guide}. 
Then, we study a classical problem in the analysis clinical trials: using information from a previous study. 
This is illustrated with data on trials of interferon treatment for melanoma.

\subsection{Pediatric Lupus Trial}\label{sec:ped}

Enrolling patients for pediatric trials is often difficult due to the small number of available patients, parental concern regarding safety and technical limitations \citep{psioda_ped}. For many pediatric trials, additional information must be incorporated for any possibility of establishing efficacy \citep{psioda_ped}. The use of Bayesian methods is natural for  extrapolating adult data in pediatric trials through the use of informative priors, and is demonstrated in FDA guidance on complex innovative designs \citep{fda}. 

Belimumab (Benlysta) is a biologic for the treatment of adults with active, autoantibody-positive systemic lupus erythematosus (SLE).
It was proposed that the indication for Belimumab can be expanded to include the treatment of children \citep{psioda_ped}.
The clinical trial PLUTO \citep{ped_2020} has been conducted to examine the effect of Belimumab on children 5 to 17 years of age with active, seropositive SLE who are receiving standard therapy. The PLUTO study has a small sample size due to the rarity of childhood-onset SLE. There have been two previous phase 3 trials, BLISS-52 and BLISS-76 \citep{Furie_2011, Navarra_2011}, which established efficacy of belimumab plus standard therapy for adults. The FDA review of the PLUTO trial submission used data from the adult trials to inform the approval decision \citep{psioda_ped}. All three trials employ the same composite primary outcome, the SLE Responder Index (SRI-4).

We conduct a Bayesian analysis of the PLUTO study incorporating information from the adult studies, BLISS-52 and BLISS-76, using a normalized power prior.
We derive the optimal priors on $a_0$ based on the KL criterion and the MSE criterion.

Our parameter of interest is the treatment effect of Belimumab for children, denoted by $\beta$.
The total sample size of the pooled adult data (BLISS-52 and BLISS-76) is $n_0=1125$ and the treatment effect is $0.481$.
We choose $d_{\textrm{MTD}}=0.481$ which equals the treatment effect of the historical study.
The pediatric data has a sample size of $92$ and the estimated treatment effect is $0.371$.
We use the asymptotic normal approximation to the logistic regression model (see equation \eqref{glm_marg_a0} in Appendix \ref{glm_proof}) 
with one covariate (the treatment indicator).
We choose $w=0.5$ and $n=100$ (sample size of the simulated current dataset).
For the KL criterion, the objective function $K$ is computed using Monte Carlo integration and optimization is performed using the \verb|optim()| function in R. For the MSE criterion, the posterior mean of $\beta$ is computed using the R package \textit{BayesPPD} and optimization is performed using a grid search where the values of $\alpha_0$ and $\beta_0$ range from $0.5$ to $6$ with an increment of $0.5$.
The optimal priors derived using the KL criterion and MSE criterion are displayed in Figure \ref{real_ped}.
Table \ref{tab:ped_beta} (left) provides the posterior mean, standard deviation and 95\% credible interval for $\beta$ using the optimal priors and several other beta priors for comparison.
We observe that the optimal prior derived using the KL criterion leads to a lower posterior standard deviation compared to the uniform prior because more historical information is borrowed. Note that when a $\operatorname{beta}(1, 10)$ prior is used for $a_0$, i.e., very little historical information is borrowed, the $95\%$ credible interval of $\beta$ includes zero. In Appendix \ref{app_design}, we demonstrate using the proposed optimal priors in a clinical trial design application for the pediatric lupus trial. 

\begin{figure}
\centering
\includegraphics[width=13cm]{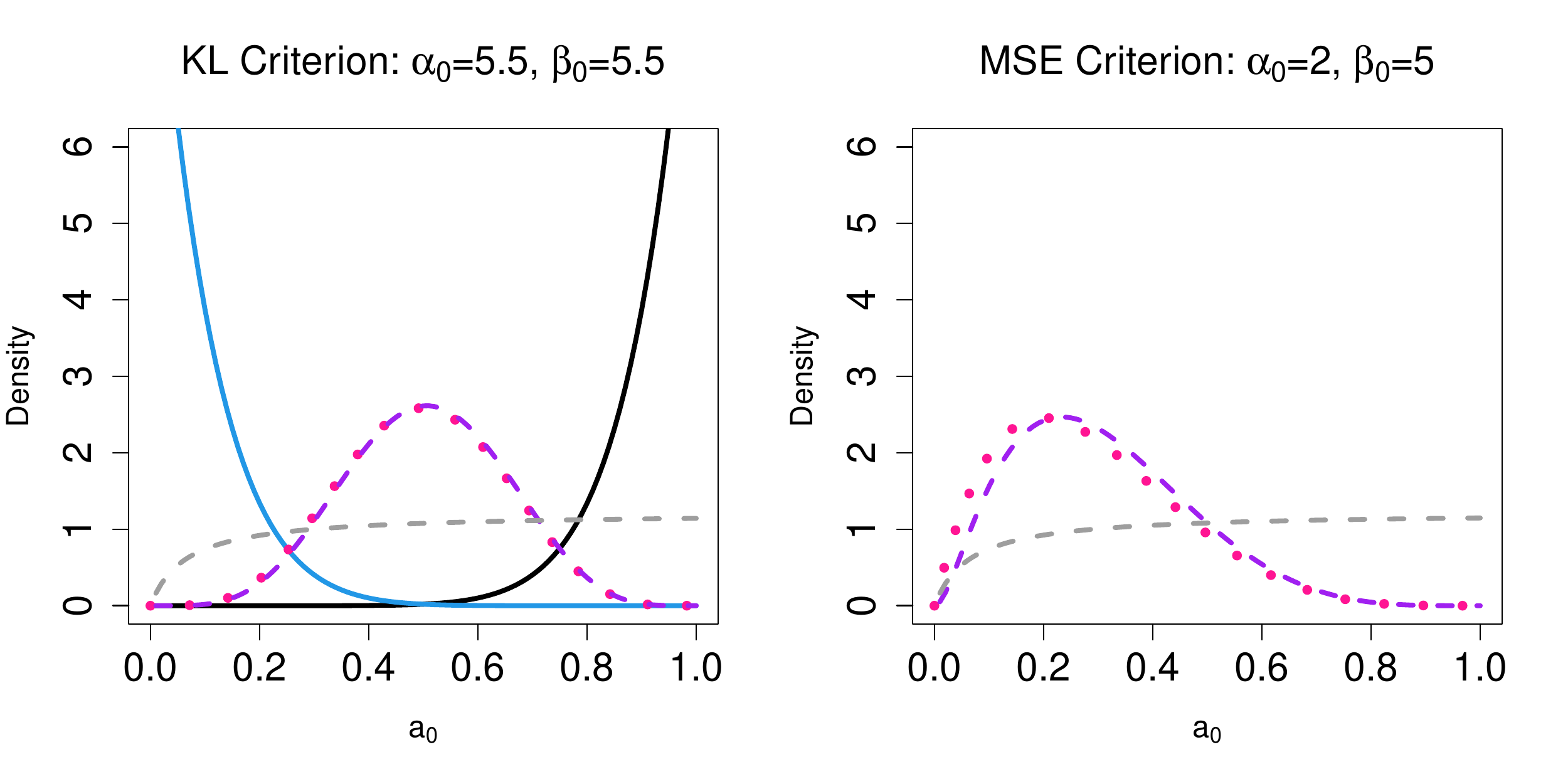}
\caption{After combining studies BLISS-52 and BLISS-76 for adults, the total sample size is $n_0=1125$ and log odds ratio for treatment vs. control group is $0.481$.
We choose $d_{\textrm{MTD}}=0.481$ to be the maximum tolerable difference.
The pediatric data has a sample size of $n=92$.
The actual observed log odds ratio is $0.371$.
The figure on the left displays the optimal prior (pink dotted line) and posterior (purple dashed line) derived using the KL criterion.
The figure on the right displays the optimal prior for $a_0$ and the posterior derived using the MSE criterion.
The posterior of $a_0$ using the uniform prior (grey dashed line) is also shown.}
\label{real_ped}
\end{figure}

\begin{table}
\caption{Pediatric lupus trial (left) and melanoma trial (right): posterior mean, standard deviation, and 95\% credible interval for $\beta$}
\begin{center}
\scalebox{0.8}{
\begin{tabular}{lccclccc}
&Lupus &Trial&&&Melanoma &Trial\\
\hline
Prior for $a_0$ & Mean & SD & 95\% CI & Prior for $a_0$ &Mean & SD & 95\% CI  \\[5pt]
\hline
$\operatorname{beta}(5.5, 5.5)$ & 0.47 & 0.16 & (0.15, 0.79) & $\operatorname{beta}(0.7, 1.5)$ & 0.05 & 0.19 & (-0.33, 0.43)  \\
(optimal by KL)&&& & (optimal by KL)&&&\\
$\operatorname{beta}(2, 5)$ & 0.47 & 0.21 & (0.04, 0.89)  & $\operatorname{beta}(5.5, 3)$ & -0.01 & 0.17 & (-0.35, 0.33)   \\
(optimal by MSE)&&&&(optimal by MSE)&&&\\
$\operatorname{beta}(1, 1)$& 0.47 & 0.18 & (0.12, 0.83) &$\operatorname{beta}(1, 1)$& 0.04 & 0.19 & (-0.32, 0.42)  \\
$\operatorname{beta}(2, 2)$ & 0.47 & 0.17 & (0.13, 0.79) & $\operatorname{beta}(2, 2)$ & 0.03 & 0.18 & (-0.34, 0.40) \\
$\operatorname{beta}(0.5, 0.5)$ & 0.47 & 0.17 & (0.12, 0.81) & $\operatorname{beta}(0.5, 0.5)$ & 0.05 & 0.19 & (-0.33, 0.41)\\
$\operatorname{beta}(10, 1)$ & 0.48 & 0.12 & (0.24, 0.71) & $\operatorname{beta}(10, 1)$ & -0.08 & 0.16 & (-0.38, 0.23)\\
$\operatorname{beta}(1, 10)$ & 0.44 & 0.28 & (-0.13, 1) & $\operatorname{beta}(1, 10)$ & 0.06 & 0.19 & (-0.3, 0.44)\\
\end{tabular}
}
\end{center}
\label{tab:ped_beta}
\end{table}

\subsection{Melanoma Trial}\label{sec:mel}

Interferon Alpha-2b (IFN) is an adjuvant chemotherapy for deep primary or regionally metastatic melanoma.
IFN was used in two phase 3 randomized controlled clinical trials, E1684 and E1690 \citep{mel_1996}.
In this example, we choose overall survival (indicator for death) as the primary outcome.
We conduct a Bayesian analysis of the E1690 trial incorporating information from the E1684 trial, using a normalized power prior.
We include three covariates in the analysis, the treatment indicator, sex and the logarithm of age.
As before, we obtain the optimal priors for $a_0$ based on both the KL criterion and the MSE criterion.

Our parameter of interest is the treatment effect of IFN, denoted by $\beta$.
The total sample size of the E1684 trial is $n_0=285$ and the treatment effect is $-0.423$.
We choose $d_{\textrm{MTD}}=0.423$ which equals the treatment effect of the historical study.
The E1690 trial has a sample size of $427$ and the treatment effect is $0.098$.
We use the asymptotic normal approximation to the  logistic regression model (see equation \eqref{glm_marg_a0} in Appendix \ref{glm_proof}) 
with three covariates.
We choose $w=0.5$, $d_{\textrm{MTD}}=0.423$ and $n=400$ (sample size of the simulated current dataset).
For the KL criterion, the objective function $K$ is computed using Monte Carlo integration and optimization is performed using the \verb|optim()| function in R.
For the MSE criterion, the posterior mean of $\beta$ is computed using the R package \textit{BayesPPD} and optimization is performed using a grid search where the values of $\alpha_0$ and $\beta_0$ range from $0.5$ to $6$ with an increment of $0.5$.
The optimal priors derived using the KL criterion and MSE criterion are displayed in Figure \ref{real_mel}.
The optimal prior derived using the KL criterion is $\operatorname{beta}(0.7, 1.5)$, which has density around zero.
For the MSE criterion, the optimal prior derived is $\operatorname{beta}(5.5, 3)$, which is unimodal with mode around $0.7$.
This is likely due to the fact that $d_{\textrm{MTD}}$ is small relative to the total sample size of $712$ -- see  also simulations in Appendix \ref{app_mse}.
Because the observed difference is larger than $d_{\textrm{MTD}}$, the marginal posterior of $a_0$ has mode around $0.4$, which discourages more strongly than the prior.
Table \ref{tab:ped_beta} (right) provides the posterior mean, standard deviation and 95\% credible interval for $\beta$ using the optimal priors and several other beta priors for comparison.
Compared to the uniform prior, the optimal prior derived using the KL criterion results in a larger posterior mean, indicating that less historical information is borrowed. Compared to the uniform prior, the optimal prior derived using the MSE criterion borrows more historical information, resulting in a smaller posterior mean and a smaller variance.

\begin{figure}
\centering
\includegraphics[width=13cm]{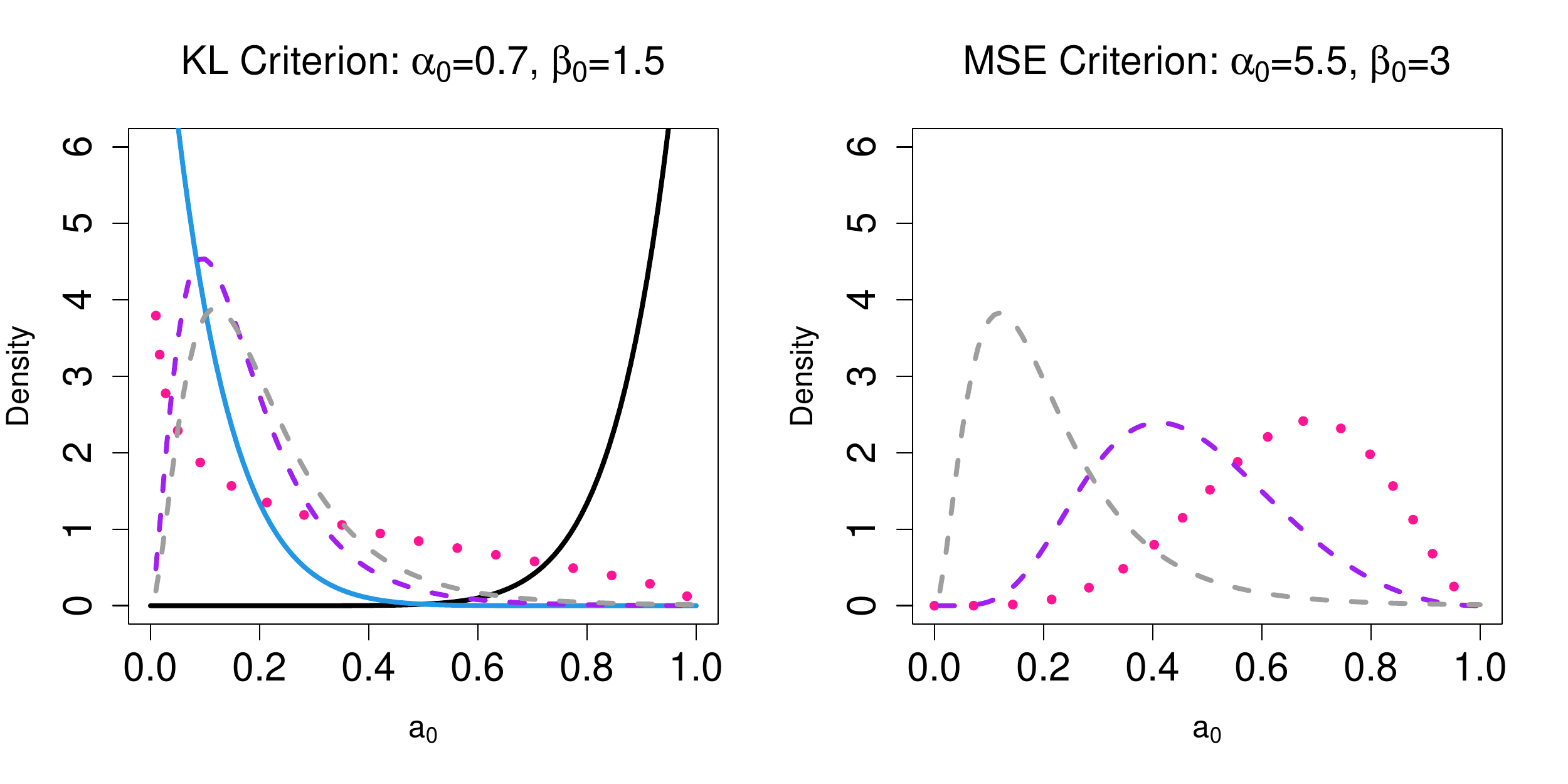}
\caption{The total sample size of the E1684 trial is $n_0=285$ and log odds ratio for treatment vs. control group is $-0.423$.
We choose $d_{\textrm{MTD}}=0.423$ to be the maximum tolerable difference. The E1690 trial has sample size $n=427$.
The observed log odds ratio is $0.098$.
The figure on the left displays the optimal prior (pink dotted line) and posterior (purple dashed line) derived using the KL criterion.
The figure on the right displays the optimal prior for $a_0$ and the posterior derived using the MSE criterion.
The posterior of $a_0$ using the uniform prior (grey dashed line) is also shown.}
\label{real_mel}
\end{figure}




\section{Discussion}

In this paper, we have explored the asymptotic properties of the normalized power prior when the historical and current data are compatible and when they are incompatible.
Our results demonstrate that there is a fundamental asymmetry in the adaptive borrowing properties of the normalized power prior: while for any discrepancy between the historical and current data sets one can expect the marginal posterior of the discounting parameter $a_0$ to concentrate around zero, under complete compatibility this distribution does not \textit{concentrate} around $1$ -- although the mode is at $1$ under mild conditions~\citep{Han_Ye_Wang_2022}.
This has been seen as flaw with the NPP by some authors (e.g.~\cite{Pawel_2023}), but we argue this view is misguided: expecting the posterior of $a_0$ to concentrate around $1$ is effectively expecting two data sets to support exchangeability.
This is a notoriously hard task, as exemplified by the difficulty in estimating the population variance in hierarchical models. Moreover, we have shown that  for an i.i.d. normal model with finite samples, the marginal posterior of $a_0$ always has more mass around one when the datasets are fully compatible, compared to the case where there is any discrepancy. 

We have proposed two criteria based on which the optimal hyperparameters of the prior for $a_0$ can be derived.
While the exact values of the hyperparameters can be obtained using our objective functions, we suggest the following rules of thumb for estimating the optimal prior given different choices of the maximum tolerable difference.
When the KL criterion is used, a beta distribution centered around $0.5$, such as the $\operatorname{beta}(2, 2)$, is optimal for small values (when plots of the current and historical data likelihoods substantially overlap) of maximum tolerable difference , while a beta distribution with mean close to $1$, such as the $\operatorname{beta}(2, 0.5)$, should be used for large values of maximum tolerable difference.
When the MSE criterion is used, a beta distribution with mean less than $0.5$, such as the $\operatorname{beta}(3, 6)$, is optimal for small values of maximum tolerable difference, while a beta distribution with modes at zero and one, as for example a $\operatorname{beta}(0.5, 0.5)$, should be used for large values of maximum tolerable difference.
The MSE criterion is a more conservative criterion, in the sense that it tends to discourage borrowing, than the KL criterion.

We observe that in Figures 1-3, the marginal posterior of $a_0$ is bimodal, with modes at zero and one, in some cases. Future work should further examine the issue of multi-modality. Other potential future work includes extending our method to survival and longitudinal outcomes, as well as accommodating dependent discounting parameters when multiple historical datasets are available.

\begin{appendix}
\label{sec6}

\newpage

\section{Proofs from Section 2}

\subsection{Technical conditions for the limit theorems}

We start our presentation by stating technical conditions under which the limiting theorems presented in Section 2 hold.
Then, we state an important result below (Bayes Central Limit Theorem (\cite{chen_1985})) which gives support to many of the proofs herein.
In what follows, we will follow \cite{chen_1985} in establishing the necessary conditions for the limiting posterior density to be normal. 
Let the parameter space of interest be $\Theta$ and a $p$-dimensional Euclidean space and let $B_r(a) = \{\theta \in \Theta : |\theta-a|\leq r\}$ be a neighbourhood of size $r$ of the point $a \in \Theta$.
Also, write $L_n(\theta) := \sum_{i=1}^n \log f(x\mid \theta)$.\\

\begin{thm}[Bayes Central Limit Theorem (\cite{chen_1985})]
Suppose that for each $n > N$ with $N >0$, $L_n$  attains a strict local maximum $\hat{\theta}_n$ such that $L_n^{\prime}(\hat{\theta}_n) := \frac{\partial}{\partial \theta}L_n(\hat{\theta}_n) = 0$ and the Hessian $L_n^{\prime\prime}(\theta) := \frac{\partial^2}{\partial \theta^2}L_n(\theta)$ is negative-definite for all $\theta \in \Theta$.

Moreover, suppose $\hat{\theta}_n$ converges almost surely to $\theta_0 \in \Theta$  as $n \to \infty$ and the prior density $\pi(\theta)$ is positive and continuous at $\theta_0$.
Assume that the following conditions hold:
\begin{enumerate}
    \item[C1] The largest eigenvalue of $\left[- L_n^{\prime\prime}(\hat{\theta}_n) \right]^{-1} \to 0$ a.s. as $n \to \infty$;
    \item[C2] For $\varepsilon > 0$ there exists (a.s.) $N_\varepsilon >0$ and $r > 0$ such that for all $n > \max\left\{N, N_\varepsilon \right\}$ and $\theta \in B_r(\hat{\theta}_n)$, $L_n^{\prime\prime}(\theta)$ is well-defined and 
    \begin{equation*}
        I_p - A(\varepsilon) \leq L_n^{\prime\prime}(\theta) \left[L_n^{\prime\prime}(\hat{\theta}_n) \right]^{-1} \leq I_p + A(\varepsilon),
    \end{equation*}
    where $I_p$ is the $p$-dimensional identity matrix and $A(\varepsilon)$ is a $p \times p$ positive semidefinite matrix whose largest eigenvalue goes to zero as $\varepsilon \to 0$.
    \item[C3] The sequence of posterior distributions $p_n(\theta \mid x)$ satisfies, as $n \to \infty$,
    \begin{equation*}
        \int_{\Theta \setminus B_r(\hat{\theta}_n)}p_n(t \mid x)\,dt \to 0, a.s.,
    \end{equation*}
    for $r >0$, i.e., the sequence of posteriors is \emph{consistent} at $\hat{\theta}_n$.
    Here we have assumed that the support of the posterior distributions is $\Theta$, but this could be replaced by a sequence $\Theta_n$.
\end{enumerate}
Then we say that the posteriors converge in distribution to a normal with parameters $\hat{\theta}_n$ and $ \left[-L_n^{\prime\prime}(\hat{\theta}_n)\right]^{-1}$.
\end{thm}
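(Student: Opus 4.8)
The plan is to prove the stronger statement that the \emph{rescaled} posterior converges to a standard normal in total variation, which in particular yields the claimed convergence in distribution; this is the classical Laplace/Bernstein--von Mises route. Write the posterior as $p_n(\theta \mid x) \propto \exp\{L_n(\theta)\}\pi(\theta)$, set $\Sigma_n := \left[-L_n^{\prime\prime}(\hat\theta_n)\right]^{-1}$, and introduce the affine reparametrization $t = \Sigma_n^{-1/2}(\theta - \hat\theta_n)$; let $\tilde p_n(t)$ be the induced density. The goal is to show $\int \left| \tilde p_n(t) - \varphi(t) \right| dt \to 0$ a.s., where $\varphi$ is the standard $p$-variate normal density. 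I would organize the argument around a decomposition of $\mathbb{R}^p$ into the image of a shrinking neighbourhood $B_r(\hat\theta_n)$ of the mode and its complement.

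First I would dispose of the tail. In the $\theta$-coordinates the complement of $B_r(\hat\theta_n)$ carries posterior mass tending to zero a.s.\ by C3; the corresponding region in $t$-coordinates is the complement of $\{t : |\Sigma_n^{1/2}t| \le r\}$, and since C1 forces $\Sigma_n \to 0$ in operator norm, this set grows to exhaust $\mathbb{R}^p$, so the Gaussian mass it carries also vanishes. Hence it suffices to compare $\tilde p_n$ and $\varphi$ restricted to the local region, on which both are governed by the behaviour of $L_n$ near $\hat\theta_n$.

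On $B_r(\hat\theta_n)$ I would Taylor-expand: because $L_n^{\prime}(\hat\theta_n)=0$, we have $L_n(\theta) - L_n(\hat\theta_n) = \tfrac12 (\theta-\hat\theta_n)^\prime L_n^{\prime\prime}(\theta^\ast)(\theta-\hat\theta_n)$ for some intermediate $\theta^\ast \in B_r(\hat\theta_n)$, and C2 sandwiches $L_n^{\prime\prime}(\theta^\ast)\Sigma_n$ between $I_p - A(\varepsilon)$ and $I_p + A(\varepsilon)$. In the $t$-coordinates this makes the exponent equal to $-\tfrac12 t^\prime (I_p + E_n(\varepsilon)) t$ with $\|E_n(\varepsilon)\|$ controlled by the largest eigenvalue of $A(\varepsilon)$, uniformly in $t$ on compacts. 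Since $\hat\theta_n \to \theta_0$ and $\Sigma_n^{1/2}\to 0$, continuity and positivity of $\pi$ at $\theta_0$ give $\pi(\hat\theta_n + \Sigma_n^{1/2}t)/\pi(\hat\theta_n)\to 1$ pointwise and boundedly on compacts. A dominated-convergence argument then shows the unnormalized rescaled integrand converges to $\varphi(t)$ up to the factor $\pi(\theta_0)(2\pi)^{p/2}|\Sigma_n|^{1/2}e^{L_n(\hat\theta_n)}$; the total posterior mass converges to the same factor by the same estimates (using the tail bound above), so dividing yields $\tilde p_n \to \varphi$ in $L_1$ on the local region after first letting $n\to\infty$ and then $\varepsilon \to 0$.

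I expect the main obstacle to be bookkeeping the normalizing constants rigorously while interchanging the two limits: one must upgrade C3 from ``tail posterior mass $\to 0$'' to ``tail mass is negligible \emph{relative to} the local Gaussian-like mass,'' which requires a lower bound on the local integral, and one must show the $A(\varepsilon)$-dependent error terms can be driven to zero uniformly after the $n$-limit. The prior-continuity step and the tail step are routine given C1--C3; the Taylor expansion together with the C2 sandwich is the technical heart but amounts to careful estimation rather than a new idea.
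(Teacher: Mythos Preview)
The paper does not prove this theorem at all: it is stated in the appendix purely as a cited result from \cite{chen_1985} and then invoked as a black box in the proofs of Theorems~\ref{nocov_th} and~\ref{glm_th}. There is therefore no ``paper's own proof'' to compare your proposal against. Your sketch is the standard Laplace/Bernstein--von~Mises route and is essentially the strategy Chen uses in the original reference, so it is appropriate as a proof outline; but be aware that in the context of this paper no such argument is required or supplied---the authors simply assume the result.

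One technical caution if you do flesh this out: the multivariate mean-value form $L_n(\theta)-L_n(\hat\theta_n)=\tfrac12(\theta-\hat\theta_n)^\prime L_n^{\prime\prime}(\theta^\ast)(\theta-\hat\theta_n)$ with a single intermediate point $\theta^\ast$ is not generally valid; you need either the integral remainder $\int_0^1(1-s)L_n^{\prime\prime}(\hat\theta_n+s(\theta-\hat\theta_n))\,ds$ or a direction-dependent $\theta^\ast$. Condition~C2 is stated precisely so that either form is sandwiched uniformly, so the argument still goes through, but the write-up should use the correct remainder.
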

For notational convenience we will (somewhat informally) write
$$
p_n(\theta|x) \rightarrow N_p\left(\hat{\theta}_n, \left[-L_n^{\prime\prime}(\hat{\theta}_n)\right]^{-1}\right),
$$
as $n \rightarrow \infty$.
This should be understood as the posterior density becoming highly peaked and behaving like a normal kernel around $\hat{\theta}_n)$ \cite[page 541]{chen_1985}.
Since the probability outside $B_r(\hat{\theta}_n)$ is negligible, one needs not to concern oneself with what happens on $\Theta \setminus B_r(\hat{\theta}_n)$ when taking posterior expectations, for instance.
See also Theorem 7.89 in \cite{Schervish1995} (page 437).

\subsection{Proof of Theorem \ref{nocov_th}}\label{nocov_proof}

Now we move on to present a proof for Theorem 2.1 in Section 2, which discusses the concentration of the posterior of $a_0$ at zero as the sample sizes increase in the case when there is some discrepancy between the historical and current data sets.

\begin{proof}
We first employ the Bayes Central Limit Theorem presented above to rewrite the limiting marginal posterior distribution of $a_0$. 
Under the regularity conditions as $n \rightarrow \infty$,
\begin{align*}
L_n(\theta|D) &\rightarrow N(\hat{\theta}_n, v_n), \quad \textrm{and}\\
\frac{1}{c(a_0)}L_{n_0}(\theta|D_0)^{a_0} \pi_0(\theta) &\rightarrow N(\hat{\theta}_0, v_0(a_0)),
\end{align*}
where $\hat{\theta}_n=\dot{b}^{-1}(\bar{y})$, $\hat{\theta}_0=\dot{b}^{-1}(\bar{y}_0)$, $v_n=(n\ddot{b}(\hat{\theta}_n))^{-1}$, and $v_0(a_0)=(a_0n_0\ddot{b}(\hat{\theta}_0))^{-1}$.
For simplicity of notation, let $v_0=v_0(a_0)$, $\ddot{b}^{-1}= \ddot{b}^{-1}(\hat{\theta}_n)$ and $\ddot{b}_0^{-1}=\ddot{b}^{-1}(\hat{\theta}_0)$.
Then the kernel of the marginal posterior of $a_0$ becomes
\begin{align*}
\pi^*(a_0|D_0, D,\alpha_0,\beta_0)
\equiv& \int L_n(\theta|D)\frac{L_{n_0}(\theta|D_0)^{a_0} \pi_0(\theta)}{c(a_0)}a_0^{\alpha_0-1}(1-a_0)^{\beta_0-1}\,d\theta, \\
\rightarrow& a_0^{\alpha_0-1}(1-a_0)^{\beta_0-1}\int N(\hat{\theta}_n, v_n)N(\hat{\theta}_0, v_0(a_0))d\theta,\\
\propto& a_0^{\alpha_0-1}(1-a_0)^{\beta_0-1}v_0^{-\frac{1}{2}}\int \exp\left\{-\frac{1}{2v_n}(\theta-\hat{\theta}_n)^2\right\}\exp\left\{-\frac{1}{2v_0}(\theta-\hat{\theta}_0)^2\right\}d\theta,\\
\propto &a_0^{\alpha_0-1}(1-a_0)^{\beta_0-1}\left(\frac{v_n+v_0}{v_n}\right)^{-\frac{1}{2}}\exp\left\{-\frac{1}{2}\left[\frac{v_n\hat{\theta}_0^2-v_0\hat{\theta}_n^2-2v_n\hat{\theta}_n\hat{\theta}_0}{(v_0+v_n)v_n}\right]\right\},\\
= &a_0^{\alpha_0-1}(1-a_0)^{\beta_0-1}\left(\frac{v+v_0}{v_n}\right)^{-\frac{1}{2}}\exp\left\{\frac{v_0\hat{\theta}_n^2-v_n(\delta^2-\hat{\theta}_n^2)}{2(v_0+v_n)v_n}\right\}\: \textrm{(since } |\hat{\theta}_n-\hat{\theta}_0|=\delta),\\
= & a_0^{\alpha_0-1}(1-a_0)^{\beta_0-1}\left(\frac{v_n+v_0}{v_n}\right)^{-\frac{1}{2}}\exp\left\{\frac{\hat{\theta}_n^2}{2v}-\frac{\delta^2}{2(v+v_0)}\right\},\\
= &a_0^{\alpha_0-1}(1-a_0)^{\beta_0-1}\left(\frac{v_n+v_0}{v_n}\right)^{-\frac{1}{2}}\exp\left\{\frac{\hat{\theta}_n^2}{2v_n}-\frac{na_0r\delta^2}{2(\ddot{b}_0^{-1}+a_0r\ddot{b}^{-1})}\right\}.\\
\end{align*}
Then the marginal posterior of $a_0$ becomes 
\begin{align}
\pi(a_0|D_0, D,\alpha_0,\beta_0) =& \frac{\pi^*(a_0|D_0, D,\alpha_0,\beta_0)}{\int \pi^*(a_0|D_0, D,\alpha_0,\beta_0)\, da_0},\\
\rightarrow& \frac{a_0^{\alpha_0-1}(1-a_0)^{\beta_0-1}[\ddot{b}^{-1}+(a_0r\ddot{b}_0)^{-1}]^{-\frac{1}{2}}\exp\left\{-\frac{na_0r\delta^2}{2(\ddot{b}_0^{-1}+a_0r\ddot{b}^{-1})}\right\}}{\int a_0^{\alpha_0-1}(1-a_0)^{\beta_0-1}[\ddot{b}^{-1}+(a_0r\ddot{b}_0)^{-1}]^{-\frac{1}{2}}\exp\left\{-\frac{na_0r\delta^2}{2(\ddot{b}_0^{-1}+a_0r\ddot{b}^{-1})}\right\}\,da_0},\\
=& \frac{a_0^{\alpha_0-1}(1-a_0)^{\beta_0-1}\left[\frac{a_0r}{1+a_0r\frac{\ddot{b}^{-1}}{\ddot{b}_0^{-1}}}\right]^{\frac{1}{2}}\exp\left\{-\frac{na_0r\delta^2}{2\ddot{b}_0^{-1}\left(1+a_0r\frac{\ddot{b}^{-1}}{\ddot{b}_0^{-1}}\right)}\right\}}{\int a_0^{\alpha_0-1}(1-a_0)^{\beta_0-1}\left[\frac{a_0r}{1+a_0r\frac{\ddot{b}^{-1}}{\ddot{b}_0^{-1}}}\right]^{\frac{1}{2}}\exp\left\{-\frac{na_0r\delta^2}{2\ddot{b}_0^{-1}\left(1+a_0r\frac{\ddot{b}^{-1}}{\ddot{b}_0^{-1}}\right)}\right\}da_0}.\label{marg_a0}
\end{align}


Let $h(a_0)=\frac{a_0r\delta^2}{2\ddot{b}_0^{-1}\left(1+a_0r\frac{\ddot{b}^{-1}}{\ddot{b}_0^{-1}}\right)}$ and $f(a_0)=\left[\frac{a_0r}{1+a_0r\frac{\ddot{b}^{-1}}{\ddot{b}_0^{-1}}}\right]^{\frac{1}{2}}$.
Then the denominator is $$A = \int_0^1 a_{0}^{\alpha_0-1}(1-a_{0})^{\beta_0-1}f(a_0)\exp\left\{-nh(a_0)\right\}da_0.$$ Let $A=A_1+A_2$ where $$A_1 = \int_0^{\epsilon}a_{0}^{\alpha_0-1}(1-a_{0})^{\beta_0-1}f(a_0)\exp\left\{-nh(a_0)\right\}da_0$$ and $$A_2 = \int_{\epsilon}^1a_{0}^{\alpha_0-1}(1-a_{0})^{\beta_0-1}f(a_0)\exp\left\{-nh(a_0)\right\}da_0.$$
We want to show $\lim\limits_{n\rightarrow \infty} \frac{A_2}{A_1} = 0$. \\
First, we can see that
$$h'(a_0)=\frac{r\delta^2}{2\ddot{b}_0^{-1}}\left(\frac{a_0}{1+a_0r\frac{\ddot{b}^{-1}}{\ddot{b}_0^{-1}}}\right)'=\frac{r\delta^2}{2\ddot{b}_0^{-1}}\left(\frac{1+a_0r\frac{\ddot{b}^{-1}}{\ddot{b}_0^{-1}}-a_0r\frac{\ddot{b}^{-1}}{\ddot{b}_0^{-1}}}{(1+a_0r\frac{\ddot{b}^{-1}}{\ddot{b}_0^{-1}})^2}\right)=\frac{r\delta^2}{2\ddot{b}_0^{-1}}\left(1+a_0r\frac{\ddot{b}^{-1}}{\ddot{b}_0^{-1}}\right)^{-2} > 0.$$
Then $\inf_{x \in [\epsilon,1]} h(x)=h(\epsilon).$ We can also see that $h'(a_0)$ is continuous since $1+a_0r\frac{\ddot{b}^{-1}}{\ddot{b}_0^{-1}}$ is nonzero on $(0,1)$.\\
We then observe that $$f'(a_0)=\frac{1}{2}\left[\frac{a_0r}{1+a_0r\frac{\ddot{b}^{-1}}{\ddot{b}_0^{-1}}}\right]^{-\frac{1}{2}}\frac{r}{(1+a_0r\frac{\ddot{b}^{-1}}{\ddot{b}_0^{-1}})^2} > 0.$$
Thus $\sup_{ x \in [\epsilon,1]} f(x) = f(1)$.\\ 
Now we are ready to find the upper bound of $A_2$.
Since, for any $a_0 \in [\epsilon, 1]$, $f(a_0) \le f(1)$ and $\exp(-nh(a_0)) \le \exp(-nh(\epsilon))$ , we have 
\begin{align*}
A_2 &\le f(1)\exp(-nh(\epsilon))\int_{\epsilon}^{1}a_0^{\alpha_0-1}(1-a_0)^{\beta_0-1}da_0\\
&\le f(1)\exp(-nh(\epsilon))\int_0^{1}a_0^{\alpha_0-1}(1-a_0)^{\beta_0-1}da_0,\\
&=f(1)\exp(-nh(\epsilon))\frac{\Gamma(\alpha_0)\Gamma(\beta_0)}{\Gamma(\alpha_0+\beta_0)}=C_1\exp(-nh(\epsilon)),
\end{align*}
where $C_1 > 0$ is an integration constant.
Now we find the lower bound of $A_1$.
We know that $$A_1 \ge \int_{\frac{\epsilon}{2}}^{\epsilon}a_{0}^{\alpha_0-1}(1-a_{0})^{\beta_0-1}f(a_0)\exp\left\{-nh(a_0)\right\}da_0.$$
Further, $a_0^{\alpha_0-1} \ge \min((\frac{\epsilon}{2})^{\alpha_0-1}, \epsilon^{\alpha_0-1})$, corresponding to $\alpha_0\ge 1$ and $\alpha_0 < 1$, respectively.
Similarly, $(1-a_0)^{\beta_0-1} \le \min((1-\epsilon)^{\beta_0-1} , (1-\frac{\epsilon}{2})^{\beta_0-1})$, corresponding to $\beta_0\ge 1$ and $\beta_0 < 1$, respectively.
Since $h''(a_0) < 0$, $\sup_{x \in [\frac{\epsilon}{2}, \epsilon]}h'(x)=h'(\frac{\epsilon}{2})$.
In addition, $\inf_{x \in [\frac{\epsilon}{2}, \epsilon]}f(x) = f(\frac{\epsilon}{2})$.
Then we have 
\begin{align*}
A_1 &\ge \int_{\frac{\epsilon}{2}}^{\epsilon}a_{0}^{\alpha_0-1}(1-a_{0})^{\beta_0-1}f(a_0)\frac{1}{h'(a_0)}\exp\left\{-nh(a_0)\right\}h'(a_0)\,da_0,\\
&= \int_{\frac{\epsilon}{2}}^{\epsilon}a_{0}^{\alpha_0-1}(1-a_{0})^{\beta_0-1}f(a_0)\frac{1}{h'(a_0)}\exp\left\{-nh(a_0)\right\}\,dh(a_0),\\
& \ge f(\epsilon/2)\times\min((\epsilon/2)^{\alpha_0-1}, \epsilon^{\alpha_0-1})\times\min((1-\epsilon)^{\beta_0-1} , (1-\epsilon/2)^{\beta_0-1})\times (h'(\epsilon/2))^{-1}\\
&\times\int_{\frac{\epsilon}{2}}^{\epsilon}\exp(-nh(a_0))\,dh(a_0),\\
&=C_2\int_{\frac{\epsilon}{2}}^{\epsilon}\exp(-nh(a_0))\,dh(a_0),\\
&=C_2\frac{1}{n}[\exp(-nh(\epsilon/2))-\exp(-nh(\epsilon))],
\end{align*}
where $C_2 > 0$ is again an integration constant.
Therefore, $$0 \le \frac{A_2}{A_1} \le \frac{C_1\exp(-nh(\epsilon))}{C_2\frac{1}{n}[\exp(-nh(\epsilon/2))-\exp(-nh(\epsilon))]}=\frac{C_1n}{C_2[\exp(-n[h(\epsilon/2)-h(\epsilon)])-1]}.$$ 
Thus, $\lim\limits_{n\rightarrow\infty} \frac{A_2}{A_1} = 0$ by L'Hopital's rule. Since $\frac{A_2}{A_1} \ge \frac{A_2}{A}$, $\lim\limits_{n\rightarrow\infty} \frac{A_2}{A} = 0$.
Hence, $\lim\limits_{n\rightarrow\infty} \frac{A_1}{A} = 1$. 
\end{proof}

\subsection{Proof for Corollary \ref{nocov_coro}}
\label{nocov_coro_proof}
\begin{proof}
The result follows by setting $\delta=0$ and $\ddot{b}^{-1}=\ddot{b}_0^{-1}$ into \eqref{marg_a0}. 
\end{proof}


\subsection{Proof for Theorem \ref{cdf}}\label{cdf_proof}
\begin{proof}
Let $r=\frac{n_0}{n}$. Since $y_1,\dots,y_n$ and $y_{01}, \dots, y_{n_0}$ are i.i.d. normal data, 
the marginal posterior of $a_0$ is 
\begin{align*}
\pi(a_0|D_0, D)
=& \frac{\pi(a_0)\left[\frac{a_0r}{1+a_0r\frac{\sigma^2}{\sigma_0^2}}\right]^{\frac{1}{2}}\exp\left\{-\frac{na_0rd^2}{2\sigma_0^2\left(1+a_0r\frac{\sigma^2}{\sigma_0^2}\right)}\right\}}{\int \pi(a_0)\left[\frac{a_0r}{1+a_0r\frac{\sigma^2}{\sigma_0^2}}\right]^{\frac{1}{2}}\exp\left\{-\frac{na_0rd^2}{2\sigma_0^2\left(1+a_0r\frac{\sigma^2}{\sigma_0^2}\right)}\right\}da_0}.
\end{align*}
With
\begin{equation*}
    g_d(a_0) := \pi(a_0)\left[\frac{a_0r}{1+a_0r\frac{\sigma^2}{\sigma_0^2}}\right]^{\frac{1}{2}}\exp\left\{-\frac{na_0rd^2}{2\sigma_0^2\left(1+a_0r\frac{\sigma^2}{\sigma_0^2}\right)}\right\},
\end{equation*}
we write
\begin{equation*}
    F_d(a_0) := \frac{\int_{0}^{a_0} g_d(x)\,dx}{\int_{0}^1 g_d(x)\,dx} = \frac{G_d(a_0)}{G_d(1)}.
\end{equation*}

We want to show that
\begin{equation}
\label{eq:condition}
    \frac{\partial F_d(a_0) }{\partial d} > 0, a_0 \in (0, 1).
\end{equation}

Using the quotient rule we conclude that \eqref{eq:condition} holds if and only if:
\begin{equation*}
\frac{\partial}{\partial d}G_d(a_0)G_d(1) - \frac{\partial}{\partial d}G_d(1)G_d(a_0) > 0.
\end{equation*}
We note that
\begin{align*}
    \frac{\partial}{\partial d}G_d(a_0) &= - d \frac{nr}{\sigma_0^2} \int_{0}^{a_0} h(x) g_d(x)\,dx, \quad  \text{with}\\
     h(a_0) &= \frac{a_0}{1 + a_0r\frac{\sigma^2}{\sigma_0^2}}.
\end{align*}
This in turn means that \eqref{eq:condition} is equivalent to
\begin{equation*}
    \int_{0}^1 h(x)g_d(x)\,dx\int_{0}^{a_0} g_d(x)\,dx - \int_{0}^{a_0} h(x)g_d(x)\,dx\int_{0}^1 g_d(x)\,dx > 0,
\end{equation*}
i.e.,
\begin{align*}
    \frac{\int_{0}^{a_0} g_d(x)\,dx}{\int_{0}^1 g_d(x)\,dx} &> \frac{\int_{0}^{a_0} h(x)g_d(x)\,dx}{\int_{0}^1 h(x)g_d(x)\,dx}.
\end{align*}
We first prove the following lemma. 
\begin{lem}[Ratio of truncated expectations]
\label{lem:trunc}
Let $X$ be a random variable in $(0,1)$ with distribution function $F$.
Take any positive increasing function $h$.
Then
\begin{equation*}
\label{eq:claim}
    \frac{E[h(X)\mathbb{I}(X \leq a)]}{E[h(X)]} < F(a),
\end{equation*}
for $a \in (0, 1)$.
\end{lem}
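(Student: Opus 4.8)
The plan is to reduce the inequality to a comparison of the conditional expectation of $h(X)$ on the two complementary events $\{X \le a\}$ and $\{X > a\}$. First I would dispose of the degenerate situations: if $F(a) = 0$ then $E[h(X)\mathbb{I}(X\le a)] = 0 = F(a)E[h(X)]$, and if $F(a) = 1$ then the left-hand ratio equals $1 = F(a)$; in neither case is the strict inequality meaningful, so the assertion is understood for $a$ with $0 < F(a) < 1$ --- which is exactly the relevant range in the application, where $X = a_0$ has a density that is strictly positive on $(0,1)$. Assuming henceforth $0 < F(a) < 1$, I would write $E[h(X)] = E[h(X)\mathbb{I}(X\le a)] + E[h(X)\mathbb{I}(X>a)]$, substitute into the claimed inequality, and collect terms to obtain the equivalent statement
\[ (1 - F(a))\,E[h(X)\mathbb{I}(X\le a)] < F(a)\,E[h(X)\mathbb{I}(X>a)]. \]
Since $E[h(X)\mathbb{I}(X\le a)] = F(a)\,E[h(X)\mid X\le a]$ and $E[h(X)\mathbb{I}(X>a)] = (1-F(a))\,E[h(X)\mid X>a]$, dividing through by $F(a)(1-F(a)) > 0$ shows the claim is equivalent to $E[h(X)\mid X\le a] < E[h(X)\mid X>a]$.

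This last inequality I would then get directly from monotonicity of $h$: on $\{X\le a\}$ one has $h(X)\le h(a)$, hence $E[h(X)\mid X\le a]\le h(a)$; on $\{X>a\}$ strict monotonicity gives $h(X) > h(a)$ pointwise, and since $P(X>a) = 1 - F(a) > 0$ this yields $E[h(X)\mid X>a] > h(a)$. Chaining the two, $E[h(X)\mid X\le a] \le h(a) < E[h(X)\mid X>a]$, which is what we need. Equivalently, one could phrase the whole thing as a correlation inequality: $h(X)$ is a nondecreasing and $\mathbb{I}(X\le a)$ a nonincreasing function of $X$, so $\operatorname{Cov}(h(X),\mathbb{I}(X\le a))\le 0$, with strict inequality because $\mathbb{I}(X\le a)$ is nondegenerate when $0 < F(a) < 1$ and $h(X)$ is nondegenerate by strict monotonicity; this is precisely $E[h(X)\mathbb{I}(X\le a)] < E[h(X)]\,F(a)$.

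The only real subtlety --- a point to state carefully rather than a genuine obstacle --- is the role of strictness: if $h$ were merely nondecreasing (e.g., constant) the conclusion would degrade to ``$\le$'', and if $X$ did not charge $(a,1)$ the argument for the right-hand conditional expectation would collapse. Both are non-issues when the lemma is invoked inside the proof of Theorem~\ref{cdf}, where $h(a_0) = a_0/(1 + a_0 r\sigma^2/\sigma_0^2)$ has derivative $(1 + a_0 r\sigma^2/\sigma_0^2)^{-2} > 0$ and is thus strictly increasing on $(0,1)$, and where the relevant $X = a_0$ has the strictly positive density $g_d$ on $(0,1)$. I would therefore simply record the hypotheses ``$h$ strictly increasing'' and ``$0 < F(a) < 1$'' and proceed as above.
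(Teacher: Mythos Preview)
Your proof is correct and follows essentially the same route as the paper's: both reduce the claim, via the law of total expectation, to the comparison $E[h(X)\mid X\le a] < E[h(X)\mid X>a]$. You are in fact more careful than the paper, which simply asserts this inequality (writing $u = w + \varepsilon$ with $\varepsilon > 0$) without the explicit pivot through $h(a)$ or the discussion of strictness and degenerate cases.
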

\begin{proof}
Start by dividing through by $F(a)$ to get
\begin{equation*}
    \frac{E[h(X)\mid X \leq a]}{E[h(X)]} < 1.
\end{equation*}
But by the law of total expectation, we have
$$E[h(X)] = E[h(X) \mid X \leq a]F(a) + E[h(X) \mid X > a][1-F(a)],$$
thus the LHS is
\begin{equation*}
    \frac{E[h(X)\mid X \leq a]}{E[h(X) \mid X \leq a]F(a) + E[h(X) \mid X > a][1-F(a)]}.
\end{equation*}
If we let $w = E[h(X)\mid X \leq a]$ and $u = E[h(X) \mid X > a]$, we have that $u = w + \varepsilon$ with $\epsilon > 0$.
Putting $\alpha = F(a)$, we have
\begin{align*}
    \frac{w}{\alpha w + (1 - \alpha) u} &= \frac{w}{\alpha (w-u) + u},\\
    &= \frac{w}{w + (1 - \alpha)\varepsilon} < 1,    
\end{align*}
which concludes the argument.
\end{proof}
We may assume without loss of generality that $g_d$ is a normalised density. Since $h(a_0)$ is increasing, we apply Lemma \ref{lem:trunc}, which completes the proof. 
\end{proof}


\subsection{Proof for Theorem \ref{glm_th}}\label{glm_proof}

\begin{proof}
By the Bayes Central Limit Theorem, we know that 
$$L_n(\beta|D) \rightarrow N(\hat{\beta}, \Sigma(\hat{\beta})),$$ where $\Sigma(\beta)=-\left[\frac{\partial^2\log [L_n(\beta|D)]}{\partial\beta_i\partial\beta_j}\right]^{-1}$, and also
$$\frac{1}{c^*(a_0)}L_{n_0}(\beta|D_0)^{a_0} \pi_0(\beta)\rightarrow N(\hat{\beta}_0, \Sigma_0(a_0, \hat{\beta})),$$ where $\Sigma_0(a_0, \beta) = -\left[\frac{\partial^2\log [L_{n_0}(\beta|D_0)^{a_0}\pi_0(\beta)]}{\partial\beta_i\partial\beta_j}\right]^{-1}$. For simplicity of notation, let $\Sigma= \Sigma(\hat{\beta})$ and $\Sigma_0=\Sigma_0(a_0, \hat{\beta})$.
Then the marginal posterior of $a_0$ becomes
\begin{align*}
\pi(a_0|D_0, D, \alpha_0, \beta_0) &\propto \pi^*(a_0|D_0, D, \alpha_0, \beta_0)
&\equiv \int L_n(\beta|D)\frac{L_{n_0}(\beta|D_0)^{a_0} \pi_0(\beta)}{c^*(a_0)}a_0^{\alpha_0-1}(1-a_0)^{\beta_0-1}d\beta,
\end{align*}
and
\begin{align*}
\pi^*(a_0|D_0, D, \alpha_0, \beta_0) &\rightarrow a_0^{\alpha_0-1}(1-a_0)^{\beta_0-1}\int N(\hat{\beta}, \Sigma)N(\hat{\beta}_0, \Sigma_0)d\beta,\\
&\propto a_0^{\alpha_0-1}(1-a_0)^{\beta_0-1}\int\exp\left\{-\frac{1}{2}(\beta-\hat{\beta})'\Sigma^{-1}(\beta-\hat{\beta})\right\} \times \\
&\det(\Sigma_0)^{-\frac{1}{2}}\exp\left\{-\frac{1}{2}(\beta-\hat{\beta}_0)'\Sigma_0^{-1}(\beta-\hat{\beta}_0)\right\}d\beta,\\
&\text{(Assuming that } \hat{\beta}-\hat{\beta}_0=\delta)\\
&\propto a_0^{\alpha_0-1}(1-a_0)^{\beta_0-1}\det(\Sigma_0)^{-\frac{1}{2}}\det(\Sigma_n)^{\frac{1}{2}}\\
&\exp\left\{\frac{1}{2}\left[\hat{\beta}'\Sigma^{-1}\hat{\beta}-\delta'(\Sigma_0^{-1}-\Sigma_0^{-1}\Sigma_n\Sigma_0^{-1})\delta\right]\right\},
\end{align*}
where $\Sigma_n=(\Sigma^{-1}+\Sigma_0^{-1})^{-1}$.
Then 
\begin{align}
    &\pi(a_0|D_0, D, \alpha_0, \beta_0)\propto & \frac{a_0^{\alpha_0-1}(1-a_0)^{\beta_0-1}\det(\Sigma_0)^{-\frac{1}{2}}\det(\Sigma_n)^{\frac{1}{2}}\exp\big\{-\frac{1}{2}\delta'(\Sigma_0^{-1}-\Sigma_0^{-1}\Sigma_n\Sigma_0^{-1})\delta\big\}}{\int a_0^{\alpha_0-1}(1-a_0)^{\beta_0-1}\det(\Sigma_0)^{-\frac{1}{2}}\det(\Sigma_n)^{\frac{1}{2}}\exp\big\{-\frac{1}{2}\delta'(\Sigma_0^{-1}-\Sigma_0^{-1}\Sigma_n\Sigma_0^{-1})\delta\big\} da_0}.\label{glm_marg_a0}
\end{align}
We want to show that, if $\Sigma$ and $\Sigma_{0}$ are $p \times p$ positive definite matrices, 
\begin{align*}
    \lim\limits_{n\rightarrow\infty} \frac{\int_0^{\epsilon} a_{0}^{\alpha_0-1}(1-a_{0})^{\beta_0-1}\det(\Sigma_0)^{-\frac{1}{2}}\det(\Sigma_n)^{\frac{1}{2}}\exp\big\{-\frac{1}{2}\delta^T(\Sigma_0^{-1}-\Sigma_0^{-1}\Sigma_n\Sigma_0^{-1})\delta \big\}da_0}{\int_0^1 a_{0}^{\alpha_0-1}(1-a_{0})^{\beta_0-1}\det(\Sigma_0)^{-\frac{1}{2}}\det(\Sigma_n)^{\frac{1}{2}}\exp\big\{-\frac{1}{2}\delta^T(\Sigma_0^{-1}-\Sigma_0^{-1}\Sigma_n\Sigma_0^{-1})\delta\big\} da_0}=1,
\end{align*}
for $\delta \neq 0$ and $\epsilon > 0$.\\
We can write $\Sigma=n^{-1}P$ and $\Sigma_0=(nra_0)^{-1}P_0$ \citep{fahrmeir}, where $P$ and $P_0$ are positive definite and independent of $a_0$ and $n$.
Then $\Sigma_n=(\Sigma^{-1}+\Sigma_0^{-1})^{-1} = n^{-1}(P^{-1}+ra_0P_0^{-1})^{-1}$. \\
Now, $$I-\Sigma_n\Sigma_0^{-1}=I-(\Sigma^{-1}+\Sigma_0^{-1})^{-1}\Sigma_0^{-1}=(\Sigma^{-1}+\Sigma_0^{-1})^{-1}\Sigma^{-1},$$ and


\begin{align*}
\Sigma_0^{-1}(I-\Sigma_n\Sigma_0^{-1}) =&\Sigma_0^{-1}\Sigma_n\Sigma^{-1},\\
=&nra_0P_0^{-1}n^{-1}(P^{-1}+ra_0P_0^{-1})^{-1}nP^{-1},\\
=&nra_0P_0^{-1}(P^{-1}+ra_0P_0^{-1})^{-1}P^{-1},\\
=& nra_0(P_0 + a_0rP)^{-1},\\
=& nra_0P^{-1}[P_0P^{-1} + a_0rI]^{-1},\\
=& na_0P^{-1}[r^{-1}P_0P^{-1} + a_0I]^{-1}.\\
\end{align*}
In addition, 


\begin{align*}
\det(\Sigma_0)^{-\frac{1}{2}}\det(\Sigma_n)^{\frac{1}{2}},
=&\det((nra_0)^{-1}P_0)^{-\frac{1}{2}}\det(n^{-1}(P^{-1}+ra_0P_0^{-1})^{-1})^{\frac{1}{2}},\\
=&\det((ra_0)^{-1}P_0)^{-\frac{1}{2}}\det(P^{-1}+ra_0P_0^{-1})^{-\frac{1}{2}},\\
=&\det((ra_0)^{-1}P_0(P^{-1}+ra_0P_0^{-1}))^{-\frac{1}{2}},\\
=&\det(a_0^{-1}(r^{-1}P_0P^{-1}+a_0I))^{-\frac{1}{2}},\\
=&a_0^{\frac{p}{2}}\det(a_0I-(-r^{-1}P_0P^{-1}))^{-\frac{1}{2}}.\\
\end{align*}


Let $$h(a_0)=\frac{1}{2n}\delta^T(\Sigma_0^{-1}-\Sigma_0^{-1}\Sigma_n\Sigma_0^{-1})\delta=\frac{1}{2}\delta^Ta_0P^{-1}[r^{-1}P_0P^{-1} + a_0I]^{-1}\delta.$$ and $$f(a_0)=\det(\Sigma_0)^{-\frac{1}{2}}\det(\Sigma_n)^{\frac{1}{2}}=a_0^{\frac{p}{2}}\det(a_0I-(-r^{-1}P_0P^{-1}))^{-\frac{1}{2}}.$$
Then the denominator is
$$A=\int_0^1 a_0^{\alpha_0-1}(1-a_0)^{\beta_0-1}f(a_0)\exp\big\{-nh(a_0) \big\}da_0.$$
First, we show $h(a_0)$ is differentiable.\\
\begin{lem}\label{eigen}
Let $A$ and $B$ be positive definite matrices of the same dimension. Then, the eigenvalues of $AB$ are positive.
\end{lem}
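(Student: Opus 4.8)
The plan is to reduce the claim to the spectral theorem for symmetric matrices by means of a similarity transformation. Since $A$ is positive definite it possesses a unique symmetric positive definite square root $A^{1/2}$, which is invertible. I would then record the identity
\[
A^{-1/2}(AB)A^{1/2} = A^{1/2} B A^{1/2} =: M,
\]
so that $AB$ and $M$ are similar matrices and hence have the same characteristic polynomial and the same eigenvalues. It therefore suffices to prove that every eigenvalue of $M$ is positive.

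The next step is to verify that $M$ is symmetric and positive definite. Symmetry follows from $M^{T} = (A^{1/2})^{T} B^{T} (A^{1/2})^{T} = A^{1/2} B A^{1/2} = M$, using that $A^{1/2}$ and $B$ are symmetric. For definiteness, fix any nonzero vector $x$; because $A^{1/2}$ is invertible we have $A^{1/2}x \neq 0$, and hence $x^{T} M x = (A^{1/2}x)^{T} B (A^{1/2}x) > 0$ by positive definiteness of $B$. A real symmetric positive definite matrix has only real, strictly positive eigenvalues, so the eigenvalues of $M$, and therefore those of $AB$, are all positive.

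This is a standard linear-algebra fact and I do not anticipate any genuine obstacle; the only subtlety worth flagging is that $AB$ itself is generally \emph{not} symmetric, so one cannot apply the spectral theorem to $AB$ directly — conjugation by $A^{1/2}$ is precisely what makes the symmetric structure visible. An equivalent one-line alternative would be to start from an eigenpair $ABv = \lambda v$ with $v \neq 0$ and $\lambda$ a priori complex, multiply on the left by $v^{*}A^{-1}$ to get $v^{*}Bv = \lambda\, v^{*}A^{-1}v$, and conclude $\lambda = (v^{*}Bv)/(v^{*}A^{-1}v) > 0$, since $B$ and $A^{-1}$ are Hermitian positive definite and both quadratic forms are real and strictly positive.
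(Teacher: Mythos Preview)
Your proof is correct and follows essentially the same route as the paper: construct the symmetric positive definite square root $A^{1/2}$, observe that $A^{1/2}BA^{1/2}$ is similar to $AB$, and verify that $A^{1/2}BA^{1/2}$ is symmetric positive definite so its (shared) eigenvalues are positive. Your alternative eigenpair argument at the end is a pleasant bonus not present in the paper.
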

\begin{proof}
By the spectral decomposition, $A=P\Lambda P^T$ where $\Lambda=diag(\lambda_1,\dots,\lambda_p)$ and $\lambda_1,\dots,\lambda_p$ are the eigenvalues of $A$. Then $A^{\frac{1}{2}}=P\Lambda^{\frac{1}{2}}P^T$ is symmetric $\Rightarrow v^TA^{\frac{1}{2}}BA^{\frac{1}{2}}v=(A^{\frac{1}{2}}v)^TB(A^{\frac{1}{2}}v) > 0$. So $A^{\frac{1}{2}}BA^{\frac{1}{2}}$ is positive definite. Since $A^{\frac{1}{2}}(A^{\frac{1}{2}}BA^{\frac{1}{2}})A^{-\frac{1}{2}} = AB$, $A^{\frac{1}{2}}BA^{\frac{1}{2}}$ and $AB$ are similar. Then they have the same eigenvalues and the eigenvalues of $AB$ are positive. 
\end{proof}
Let $B=a_0I - (-r^{-1}P_0P^{-1})$. Then $$h(a_0)=\frac{1}{2}a_0\delta^TP^{-1}B^{-1}\delta = \frac{\frac{1}{2}a_0\delta^TP^{-1}adj(B)\delta}{det(B)},$$ where $adj(B)$ is the cofactor matrix of $B$. The entries of $adj(B)$ are polynomials in $a_0$, so $\frac{1}{2}\delta^Ta_0P^{-1}adj(B)\delta$ is a polynomial in $a_0$ and thus differentiable. Then we show that $\det(B)^{-1}$ is differentiable on $(0,1)$. Since $\det(B)$ is a polynomial of $a_0$, it suffices to show that it is nonzero on $(0,1)$. Note that $\det(B)$ is the characteristic polynomial of $-r^{-1}P_0P^{-1}$. Since $P_0$ and $P^{-1}$ are positive definite, $-r^{-1}P_0P^{-1}$ has negative eigenvalues by Lemma \ref{eigen}. So $\det(B)$ is nonzero on $(0,1)$. Thus, we have shown  $h(a_0)$ is differentiable. \\\\
We then proceed to show that $h'(a) > 0$.


Let $E=P_0 + a_0rP$.
Then $h(a_0)=\frac{1}{2}a_0r\delta^TE^{-1}\delta$.
Therefore,
\begin{equation*}
    h'(a_0)=\frac{1}{2}r\delta^TE^{-1}\delta+a_0r\frac{1}{2}\delta^T(E^{-1})'\delta.
\end{equation*}
We know that $(E^{-1})'=E^{-1}E'E^{-1}=E^{-1}PE^{-1}$. Since $P$ is positive definite and $E$ is symmetric, $v^TE^{-1}PE^{-1}v=(E^{-1}v)^TPE^{-1}v > 0.$ Thus,  $E^{-1}PE^{-1}$ is positive definite. Then $a_0r\frac{1}{2}\delta^T(E^{-1})'\delta > 0$. Since $E^{-1}$ is positive definite, $\frac{1}{2}r\delta^TE^{-1}\delta > 0.$ So $h'(a_0) > 0$.\\
We also show $h'(a_0)$ is continuous.
It suffices to show that $\det(E)$ is nonzero on $[0,1]$. Since $E=rBP$ where $P$ is full rank, $\det(E)=c\det(B)$ where $c\neq0$. Since $\det(B)$ is nonzero, $\det(E)$ is also nonzero.\\
Next, we will show that $f(a_0)=a_0^{\frac{p}{2}}\det(a_0I-(-r^{-1}P_0P^{-1}))^{-\frac{1}{2}}=a_0^{\frac{p}{2}}\det(B)^{-\frac{1}{2}}$ is continuous on $[0,1]$. We have previously proven that $\det(B)$ is nonzero on $[0,1]$. Then $f(a_0)$ is continuous on $[0,1]$, and it will attain its minima and maxima on the closed interval. Let $t_1=\max_{[\epsilon, 1]}(f(a_0))$ and $t_2=\min_{[\frac{\epsilon}{2}, \epsilon]}(f(a_0))$. Since $a_{0}^{\alpha_0-1}(1-a_{0})^{\beta_0-1}$ is continuous on $[\frac{\epsilon}{2}, \epsilon]$, denote its minimum by $t_3$.\\
We write $A=A_1+A_2$ where 
\begin{align*}
 A_1 &= \int_0^{\epsilon}a_{0}^{\alpha_0-1}(1-a_{0})^{\beta_0-1}f(a_0)\exp(-nh(a_0))da_0   \quad \textrm{and}\\
 A_2 &= \int_{\epsilon}^1a_{0}^{\alpha_0-1}(1-a_{0})^{\beta_0-1}f(a_0)\exp(-nh(a_0))da_0.
\end{align*}
Now we want to show that $\lim_{n\rightarrow\infty}\frac{A_2}{A_1} = 0.$\\
First, we find the upper bound of $A_2$.
Since $h(a_0)$ is monotone increasing, $\exp(-nh(a_0)) \le \exp(-nh(\epsilon))$.
Since $f(a_0) \le t_1$, we have 
\begin{align*}
A_2 &\le t_1\exp(-nh(\epsilon))\int_{\epsilon}^1a_{0}^{\alpha_0-1}(1-a_{0})^{\beta_0-1}da_0,\\
&\le t_1\exp(-nh(\epsilon))\int_0^1a_{0}^{\alpha_0-1}(1-a_{0})^{\beta_0-1}da_0,\\
&= t_1\exp(-nh(\epsilon))\frac{\Gamma(\alpha_0)\Gamma(\beta_0)}{\Gamma(\alpha_0+\beta_0)},\\
&=C_1\exp(-nh(\epsilon)).
\end{align*}
Next, we find the lower bound of $A_1$.
We have previously shown that $h'(a_0)$ is continuous on $(0,1)$.
Then $h'(a_0)$ attains its maximum on $[\frac{\epsilon}{2}, \epsilon]$.
Let $t_4=\max_{[\frac{\epsilon}{2}, \epsilon]}(h'(a_0))$.
We can write 
\begin{align*}
A_1 &\ge \int_{\frac{\epsilon}{2}}^{\epsilon}a_{0}^{\alpha_0-1}(1-a_{0})^{\beta_0-1}f(a_0)\exp(-nh(a_0))da_0,\\
&\ge \int_{\frac{\epsilon}{2}}^{\epsilon}a_{0}^{\alpha_0-1}(1-a_{0})^{\beta_0-1}\frac{f(a_0)}{h'(a_0)}\exp(-nh(a_0))h'(a_0)da_0,\\
&= \int_{\frac{\epsilon}{2}}^{\epsilon}a_{0}^{\alpha_0-1}(1-a_{0})^{\beta_0-1}\frac{f(a_0)}{h'(a_0)}\exp(-nh(a_0))dh(a_0),\\
&\ge \frac{t_2t_3}{t_4} \int_{\frac{\epsilon}{2}}^{\epsilon}\exp(-nh(a_0))dh(a_0),\\
&= \frac{t_2t_3}{t_4}\frac{1}{n}[\exp(-nh(\epsilon/2)-\exp(-nh(\epsilon))],\\
&=C_2\frac{1}{n}[\exp(-nh(\epsilon/2)-\exp(-nh(\epsilon))].\\
\end{align*} 
Therefore, $$0 \le \frac{A_2}{A_1} \le \frac{C_1\exp(-nh(\epsilon))}{C_2\frac{1}{n}[\exp(-nh(\epsilon/2))-\exp(-nh(\epsilon))]}=\frac{C_1n}{C_2[\exp(-n[h(\epsilon/2)-h(\epsilon)])-1]},$$ 
and $\lim\limits_{n\rightarrow\infty} \frac{A_2}{A_1} \rightarrow 0$ by L'Hopital's rule. Since $\frac{A_2}{A_1} \ge \frac{A_2}{A}$, $\lim\limits_{n\rightarrow\infty} \frac{A_2}{A} \rightarrow 0$. Then $\lim\limits_{n\rightarrow\infty} \frac{A_1}{A} \rightarrow 1$. 
\end{proof}


\subsection{Proof of Corollary \ref{glm_coro}}\label{glm_coro_proof}

\begin{proof}
Based on the assumptions, we have $\Sigma = a_0\Sigma_0$. The result follows if we plug $\delta=0$ into \eqref{glm_marg_a0}. 
\end{proof}


\subsection{Proof for Theorem \ref{glm_comp}}\label{glm_comp_proof}

\begin{proof}
The Laplace approximation for multiple parameters has the form
\begin{align*}
\int\exp(-nf(\beta))d\beta \approx \exp(-nf(\hat{\beta}))\left(\frac{2\pi}{n}\right)^{p/2}|\hat{\Sigma}|^{1/2},
\end{align*}
where $\hat{\beta}$ is maximizes $-f(\beta)$, and $\hat{\Sigma}_{p\times p}=\left[\frac{\partial^2f(\hat{\beta})}{\partial{\beta_j}\partial{\beta_k}}\right]^{-1}$. \\
When $X'Y=X_0'Y_0$ and $X \neq X_0$, 

\begin{align*}
\pi(a_0|D,D_0) &\propto \int L_n(D|\beta)\frac{L_{n_0}(D_0|\beta)^{a_0}\pi_0(\beta)}{c(a_0)}\pi(a_0)d\beta,\\
&=\int L_n(D|\beta)\frac{\exp\left(a_0\left[\sum_{i=1}^{n}y_{i}x_{i}'\beta-\sum_{i=1}^nb(x_{0i}'\beta)\right]\right)}{\int \exp\left(a_0\left[\sum_{i=1}^{n}y_{i}x_{i}'\beta-\sum_{i=1}^nb(x_{0i}'\beta)\right]\right)d\beta}\pi(a_0)d\beta,\\
&=\pi(a_0)\frac{\int L_n(D|\beta)L_{n_0}^*(D_0|\beta, a_0)d\beta}{\int L_{n_0}^*(D_0|\beta, a_0)d\beta},\\
&=\pi(a_0)\frac{c_1(a_0)}{c_2(a_0)}.\\
\end{align*}
Define
\begin{align*}
g_n(\beta)&=-\frac{1}{n}[L_n(D|\beta)+a_0L_{n_0}^*(D_0|\beta, a_0)]\\
&=-\frac{1}{n}\{\log(Q(Y)) + \sum_{i=1}^{n}y_{i}x_{i}'\beta-\sum_{i=1}^nb(x_{i}'\beta) + a_0[\sum_{i=1}^{n}y_{i}x_{i}'\beta-\sum_{i=1}^nb(x_{0i}'\beta)]\}\\
&=-\frac{1}{n}\{\log(Q(Y)) + (a_0+1)\sum_{i=1}^{n}y_{i}x_{i}'\beta-\sum_{i=1}^nb(x_{i}'\beta) - a_0\sum_{i=1}^nb(x_{0i}'\beta)\}.
\end{align*}
Then we have
\begin{align*}
c_1(a_0) &\approx \exp(-ng_n(\hat{\beta}))\left(\frac{2\pi}{n}\right)^{p/2}|\hat{\Sigma}_g|^{1/2},
\end{align*}
where $\hat{\beta}$ maximizes $-g_n(\beta)$.
Similarly, define
\begin{align*}
k_n(\beta)&=-\frac{1}{n}a_0l^*(y|x_0, \beta),\\
&=-\frac{1}{n}\left\{a_0\sum_{i=1}^{n}y_{i}x_{i}'\beta - a_0\sum_{i=1}^nb(x_{0i}'\beta)\right\}.
\end{align*}
Then we have
\begin{align*}
c_2(a_0) &\approx \exp(-nk_n(\tilde{\beta}))\left(\frac{2\pi}{n}\right)^{p/2}|\tilde{\Sigma}_g|^{1/2},
\end{align*}
where $\tilde{\beta}$ maximizes $-k_n(\beta)$. \\
We compute the gradients of $g_n(\beta)$ and $k_n(\beta)$ and get
\begin{align*}
\nabla g_n(\beta) &=-\frac{1}{n}\{(a_0+1)\sum_{i=1}^{n}y_{i}x_{i}-\sum_{i=1}^n\dot{b}(x_{i}'\beta)x_i - a_0\sum_{i=1}^n\dot{b}(x_{0i}'\beta)x_{0i}\},\\
\nabla k_n(\beta) &=-\frac{1}{n}\{a_0\sum_{i=1}^{n}y_{i}x_{i}- a_0\sum_{i=1}^n\dot{b}(x_{0i}'\beta)x_{0i}\},\\
\nabla g_n(\beta)=0 &\Rightarrow \sum_{i=1}^n\dot{b}(x_{i}'\hat{\beta})x_i + a_0\sum_{i=1}^n\dot{b}(x_{0i}'\hat{\beta})x_{0i}=(a_0+1)\sum_{i=1}^{n}y_{i}x_{i},\\
\nabla k_n(\beta)=0 &\Rightarrow \sum_{i=1}^n\dot{b}(x_{0i}'\tilde{\beta})x_{0i}=\sum_{i=1}^{n}y_{i}x_{i}.
\end{align*}
We can see that asymptotically, $\hat{\beta} \neq \tilde{\beta}$. Then we have
\begin{align}
\frac{c_1(a_0)}{c_2(a_0)} &= \frac{|\hat{\Sigma}_g|^{1/2}}{|\tilde{\Sigma}_k|^{1/2}}\exp\{-n[g_n(\hat{\beta})-k_n(\tilde{\beta})]\},\label{c1}
\end{align}
where
\begin{align*}
\hat{\Sigma}_g&=\left[\frac{1}{n}\sum_{i=1}^{n}\ddot{b}(x_{i}'\hat{\beta})x_ix_i' + \frac{a_0}{n}\sum_{i=1}^{n_0}\ddot{b}(x_{0i}'\hat{\beta})x_{0i}x_{0i}'\right]^{-1},\\
\tilde{\Sigma}_k &=\left[\frac{a_0}{n}\sum_{i=1}^{n_0}\ddot{b}(x_{0i}'\tilde{\beta})x_{0i}x_{0i}'\right]^{-1},\\
\frac{|\hat{\Sigma}_g|^{1/2}}{|\tilde{\Sigma}_k|^{1/2}}&=\frac{|a_0\sum_{i=1}^{n_0}\ddot{b}(x_{0i}'\tilde{\beta})x_{0i}x_{0i}'|^{1/2}}{|\sum_{i=1}^{n}\ddot{b}(x_{i}'\hat{\beta})x_ix_i' + a_0\sum_{i=1}^{n_0}\ddot{b}(x_{0i}'\hat{\beta})x_{0i}x_{0i}'|^{1/2}}.
\end{align*}
The marginal posterior of $a_0$ is then proportional to \ref{c1} multiplied by $\pi(a_0)$.
\end{proof}

\newpage

\section{Additional Figures for Section 2}\label{app_sec2}

\begin{figure}[H]
\begin{center}
\includegraphics[width=12cm]{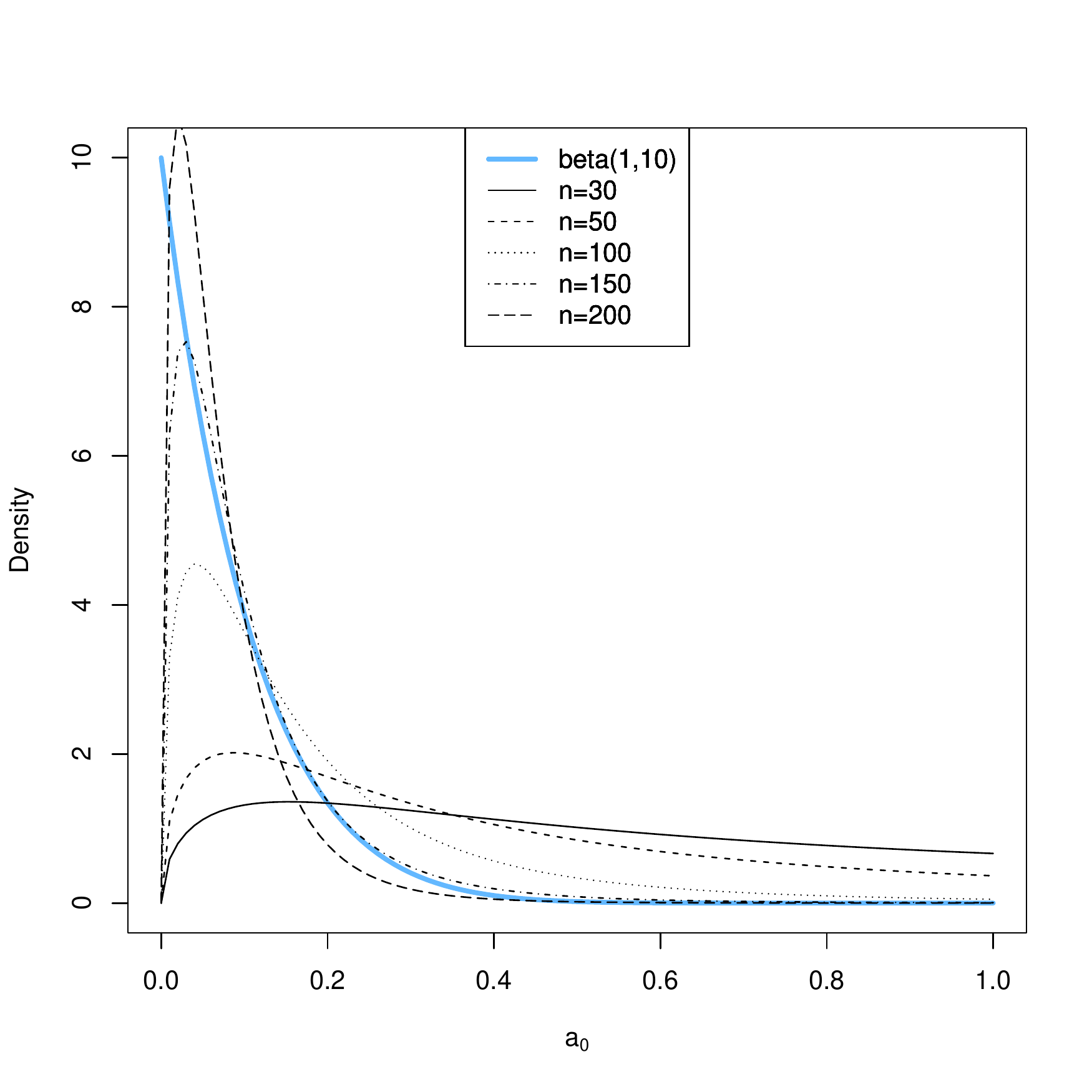}
\end{center}
\caption{Marginal posterior of $a_0$ for i.i.d. normal data where $n=n_0$ increases from $30$ to $200$, the historical data mean is $1.5$, the current data mean is $2$ and the standard deviations are 1. We observe that when there is some difference between the sufficient statistics of the historical and current data, the marginal posterior of $a_0$ converge to a point mass at zero quickly.}
\label{app_fig_marg}
\end{figure}

\begin{figure}[H]
\begin{center}
\includegraphics[width=14cm]{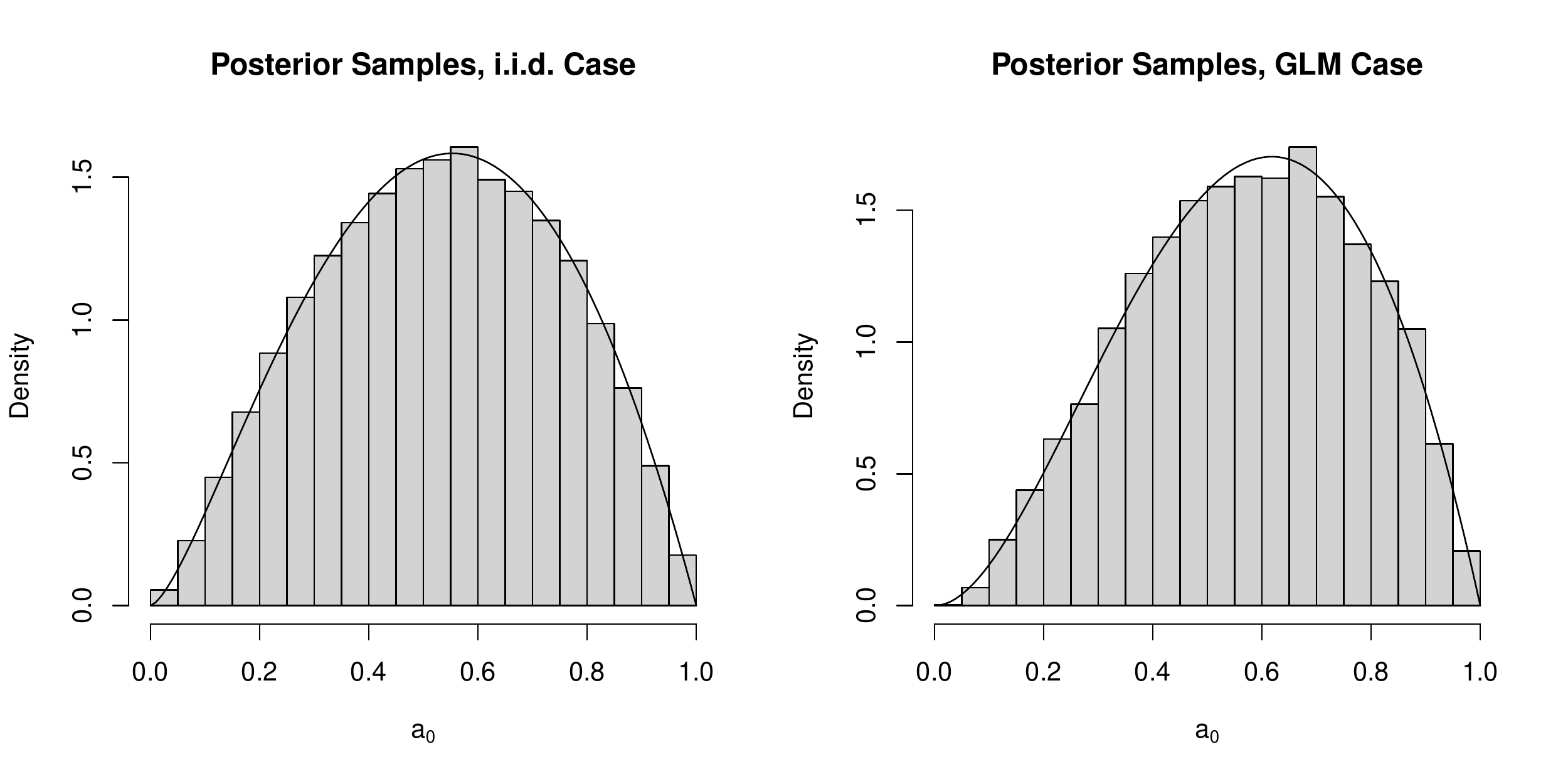}
\end{center}
\caption{The plot on the left shows the histogram of the posterior of $a_0$ for i.i.d. Bernoulli data with current and historical mean equal to 0.7, $n=100$, $n_0=200$ and the prior on $a_0$ is beta(2, 2). The plot on the right shows the histogram of the posterior of $a_0$ for Bernoulli data with one covariate where the historical and current data are identical. The prior on $a_0$ is beta(2, 2). The histograms of the posterior samples are produced using R package BayesPPD. The curve represents the theoretical density. We observe that for both i.i.d. and GLM cases, the histograms of posterior samples agree with the theoretical density functions. }
\label{app_fig_coro}
\end{figure}

\section{Numerical Stability of the Optimization Process}\label{app_num}

We conduct a simple experiment to reproduce the optimal priors derived in Figure 1 when we fix one of $\alpha_0$ or $\beta_0$ and optimize for the other parameter. We can see in Table \ref{app_tab_num} below that the optimization is stable and reliable. 
\begin{table}[H]
\begin{center}
\caption{Optimization with one of the hyperparameters fixed}\label{app_tab_num}
\begin{tabular}{llll}
\hline
& \thead{optimal priors\\ in Fig. 1} & \thead{optimal priors\\with fixed $\alpha_0$} & \thead{optimal priors \\with fixed $\beta_0$}\\
\hline
$d_{MTD}=0.5$ & beta(2.2, 2.3)& beta(2.2, 2.3)& beta(2.2, 2.3)\\
$d_{MTD}=1$ & beta(1, 0.4)& beta(1, 0.5)&beta(0.9, 0.4)\\
$d_{MTD}=1.5$ & beta(2.6, 0.5)&beta(2.6, 0.5)&beta(2.6, 0.5)\\
\hline
\end{tabular}
\end{center}
\end{table}

\section{Bias and Variance Decomposition for the MSE Criterion}\label{app_bias}

\begin{table}[H]
\begin{center}
\caption{Bias and variance decomposition for different prior choices}\label{app_tab_mse}
\begin{tabular}{lccc}
& Optimal Prior & Beta$(1,1)$ & Beta$(2,2)$ \\
\hline
Bias$^2$\\[5pt]
$d_{\textrm{MTD}}=0.5$ & 0.011 & 0.015 & 0.018  \\
$d_{\textrm{MTD}}=1$ & 0.005 & 0.012 & 0.025 \\
$d_{\textrm{MTD}}=1.5$ & 0.003 & 0.006 & 0.015  \\[5pt]
Variance&&&\\[5pt]
$d_{\textrm{MTD}}=0.5$ & 0.043 & 0.042 & 0.039  \\
$d_{\textrm{MTD}}=1$ & 0.058 & 0.057 & 0.054 \\
$d_{\textrm{MTD}}=1.5$ & 0.049 & 0.053 & 0.052 \\
\end{tabular}
\end{center}
\end{table}

\section{Comparisons with Other Priors}\label{app_rmap}
In Figure \ref{app_fig_rmap}, we generate i.i.d. normal data and compute the MSE based on the posterior mean of the point estimator using three different prior choices, the NPP with the optimal beta prior on $a_0$ (optimal in the sense of minimizing MSE as defined in the main paper), the NPP with a mixture of two beta priors on $a_0$, and the robust mixture prior, which is a special case of the robust meta-analytic-predictive prior introduced in \cite{Schmidli_2014}. The robust mixture prior we use places equal weights on a non-informative normal component and an informative normal component using the historical data. For the NPP with a mixture of beta priors, we use a mixture of beta(1, $c$) and beta($c$, 1) with equal weights, where $c$ ranges from $100$ to $1000$. As $c$ approaches infinity, the mixture of beta priors on $a_0$ converges to a mixture of a point mass at zero and a point mass at one, which is equivalent to the robust mixture prior. We vary the difference between the observed current data mean and the historical data mean, i.e., $d_{obs}=\bar{y}_{obs}-\bar{y}_0$. We can see that when the data are compatible ($d_{obs} = 0.5$), the posterior mean based on the NPP with the optimal beta prior produces lower MSE than the estimator based on the robust mixture prior. When the conflict between the data increases, i.e., $d_{obs}=1$ and $d_{obs}=1.5$, the NPP with a mixture of beta priors outperforms the robust mixture prior.
\begin{figure}[H]
\begin{center}
\includegraphics[width=15cm]{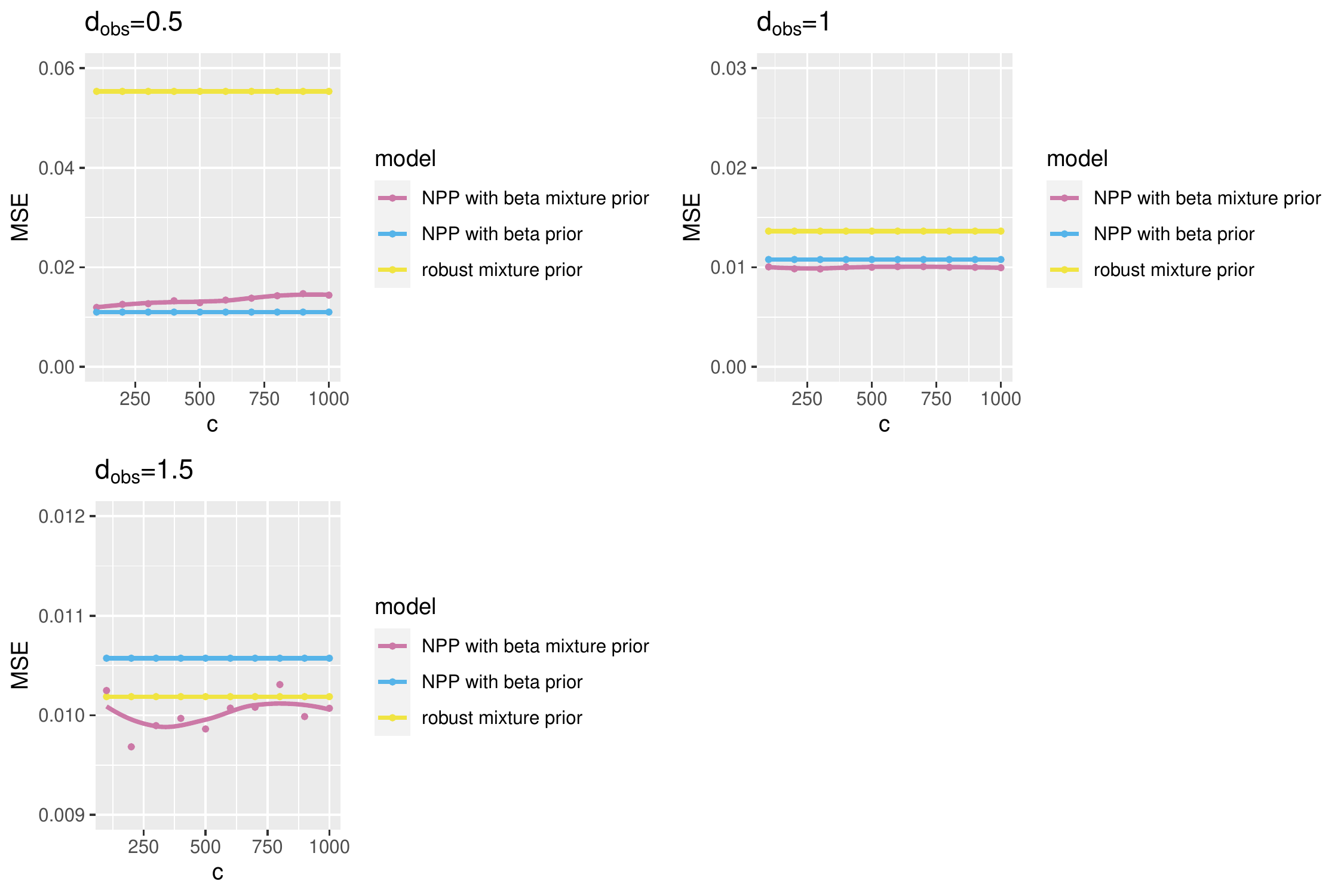}
\end{center}
\caption{MSE using three different prior choices, the NPP with the optimal beta prior on $a_0$, the NPP with the mixture of beta priors on $a_0$ and the robust mixture prior}
\label{app_fig_rmap}
\end{figure}

\section{Additional Simulations for MSE Criterion}\label{app_mse}

Figures \ref{app_fig_mse1} and \ref{app_fig_mse2} show the  MSE as a function of the prior mean of $a_0$ for increasing ratios of $n / n_0$ when the total sample size is fixed. We observe that as $n / n_0$ increases, the model will increasingly benefit, i.e. the MSE is reduced, from borrowing more, but this trend is less prominent when the total sample size is larger. 

The total sample size of the PLUTO trials in section \ref{sec:ped} is about twice the total sample size of the melanoma trials in section \ref{sec:mel}. The total sample size of the melanoma trials is not large enough for the model to criticize the maximal tolerable difference that we chose. Therefore, the optimal prior derived using the MSE criterion encourages borrowing for the melanoma trial.

\begin{figure}[H]
\begin{center}
\includegraphics[width=14cm]{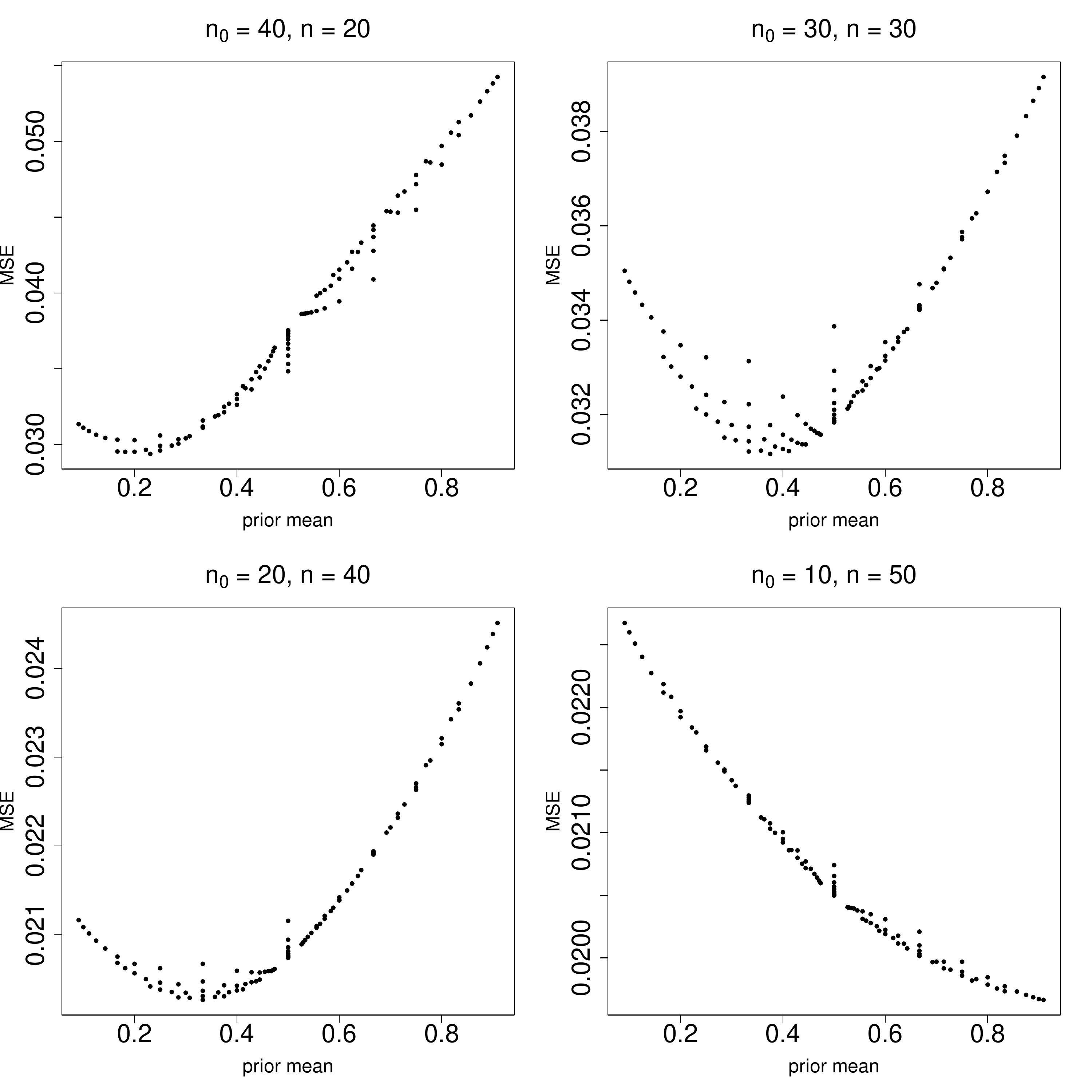}
\end{center}
\caption{MSE as a function of prior mean of $a_0$ for increasing ratios of $n / n_0$ when the total sample size is fixed for the normal \textit{i.i.d.} case.}
\label{app_fig_mse1}
\end{figure}

\begin{figure}[H]
\begin{center}
\includegraphics[width=14cm]{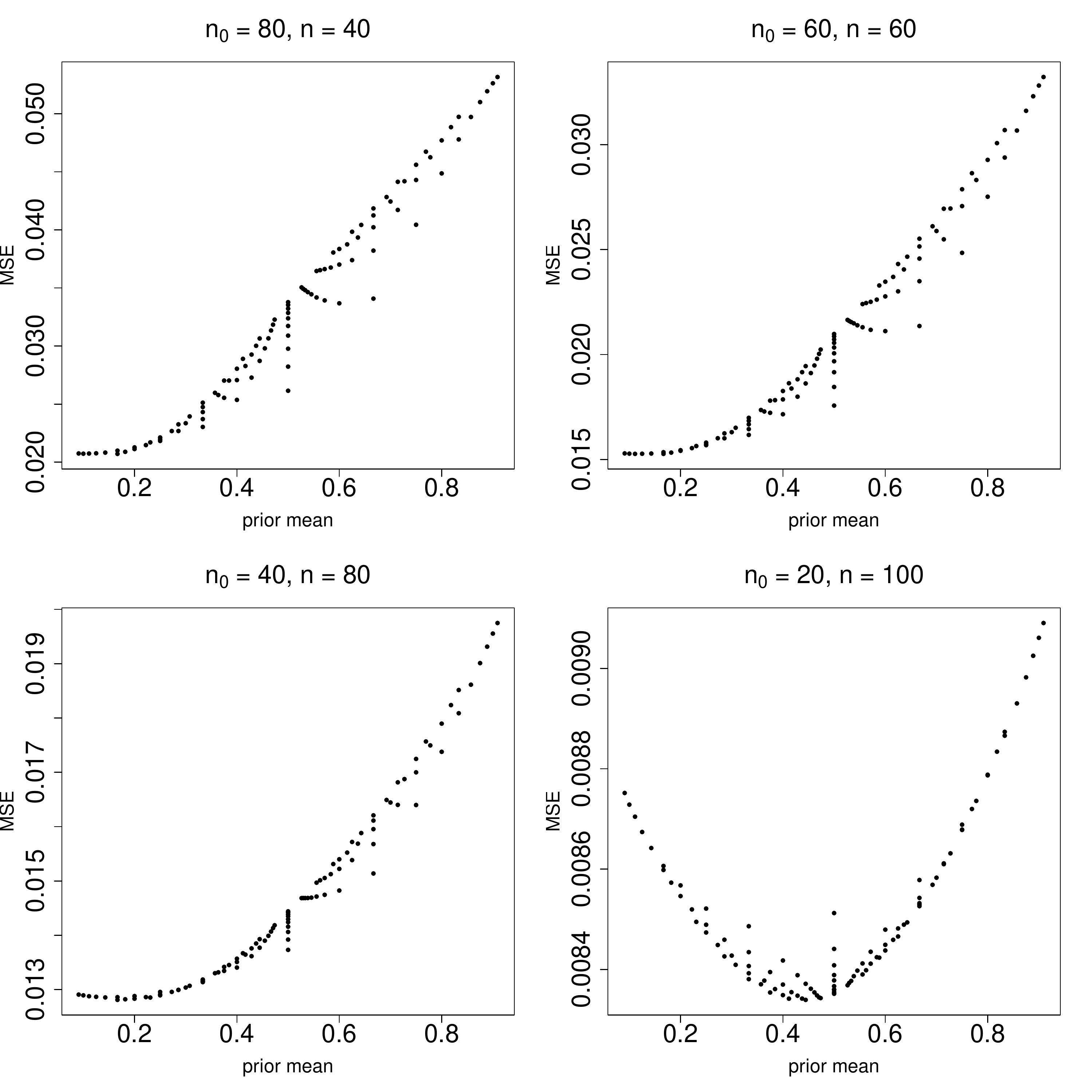}
\end{center}
\caption{MSE as a function of prior mean of $a_0$ for increasing ratios of $n / n_0$ when the total sample size is double the total sample size in Figure \ref{app_fig_mse1} for the normal \textit{i.i.d.} case.}
\label{app_fig_mse2}
\end{figure}

\section{Design Application for the Pediatric Lupus Trial}\label{app_design}
Now we demonstrate using the proposed optimal priors in a clinical trial design application. Suppose we want to design a pediatric trial using data from the adult trials BLISS-52 and BLISS-76. We choose a few sample sizes ranging from $50$ to $100$ (the actual trial had a sample size of $92$) and derive the optimal prior for each sample size using both the KL and MSE criteria. We compute power using the R Package \emph{BayesPPD} which performs Bayesian sample size determination with a simulation-based procedure \citep{shen_RJ}. We use the posterior samples given only the historical
data as the discrete approximation to the sampling prior \citep{Psioda_Ibrahim_2019}. For the fitting prior, we use a normalized power prior with optimal priors derived for $a_0$. Figure \ref{app_fig_ped_power} shows the power curves for three choices of priors on $a_0$, the optimal prior using the KL criterion, the optimal prior using the MSE criterion, and the uniform prior. Note the optimal prior is derived for each sample size. In this case, the optimal priors do not vary much for different sample sizes due to the small sizes of the current trial relative to the adult trials. We can see that power is the highest when we optimize to minimize KL. Since the optimal prior on $a_0$ based on the KL criterion is beta($5.5$, $5.5$) (when $n=100$), the most amount of historical information is borrowed. Power is the lowest when we optimize to minimize MSE, since the least amount of historical information is borrowed. The two criteria address the problem of how much to borrow from different angles. The KL criterion focuses on how much information one is willing to borrow under two markedly different assumptions about the difference between the prior information and the data generation process for the future study. The KL criterion does not explicitly focus on estimation performance metric. On the other hand, the MSE criterion attempts to ensure that the point estimate of the parameter of interest behaves well in terms of the trade-off between bias and variance. Also note that the two target distributions for the KL criteria used in this  application are beta(1, 10) and beta(10, 1). These target distributions should be carefully chosen so that they reflect the desired posterior distributions of $a_0$ relative to the sample sizes of the historical and current data. For example, by considering $c=10$ one is targeting borrowing approximately $10\%$ of the prior information when the prior-data conflict is substantial (i.e., in line with $d_{MTD}$). If $10\%$ of the historical data sample size is large relative to the new study sample size being considered, this choice for $c$ may not be desirable (i.e., a larger $c$ would be warranted).

\begin{figure}[H]
\centering
\includegraphics[width=13cm]{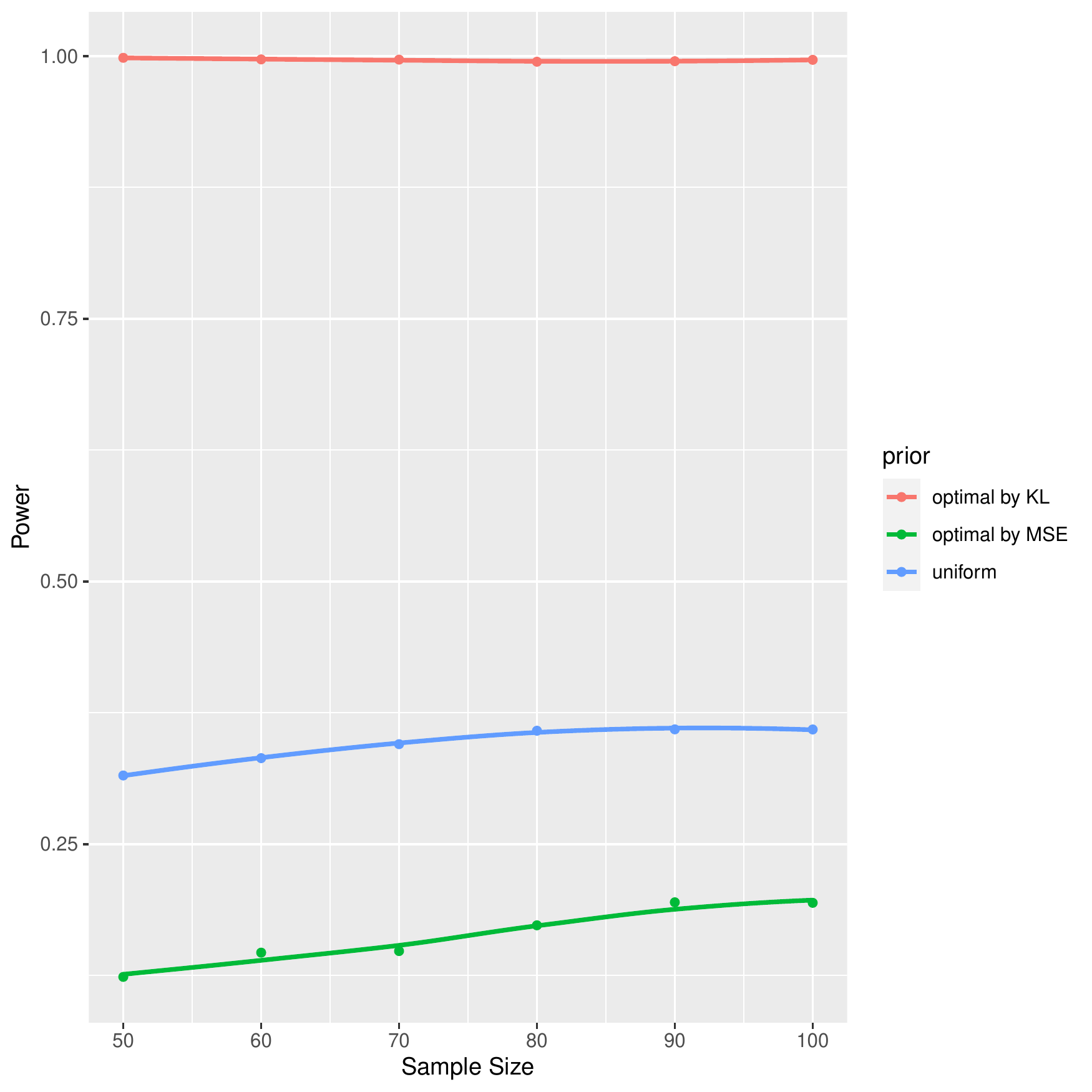}
\caption{Power curves using three choices of priors on $a_0$, optimal prior using the KL criterion, optimal prior using the MSE criterion, and the uniform prior. A different optimal prior is derived for each sample size.}
\label{app_fig_ped_power}
\end{figure}

\end{appendix}

\section*{Acknowledgements}
We thank Professor Yuri Saporito and Rodrigo Alves for valuable discussions.

\bibliographystyle{biom}
\bibliography{refs}

\end{document}